\newcolumntype{C}{>{\centering\arraybackslash}X}
\DeclareMathOperator{\id}{id}
\DeclareMathOperator{\E}{E}
\newtheorem{theorem}{Theorem}
\newtheorem*{theorem*}{Theorem}
\newtheorem{corollary}[theorem]{Corollary}
\newtheorem{lemma}[theorem]{Lemma}
\newtheorem{proposition}[theorem]{Proposition}
\newtheorem{example}[theorem]{Example}
\newcommand{\red}{\color{red}}
\def \be {\begin{equation}}
\def \ee {\end{equation}}
\def \argmax{\mathop{\rm argmax}}
\def \sofc2{{\cal S}({\mathbb C}^2)}
\def\>{\rangle}
\def\<{\langle}
\def\Label{\label}
\renewcommand\onecolumngrid{
\do@columngrid{one}{\@ne}
\def\set@footnotewidth{\onecolumngrid}
\def\footnoterule{\kern-6pt\hrule width 1.5in\kern6pt}
}
\renewcommand\twocolumngrid{
        \def\footnoterule{
        \dimen@\skip\footins\divide\dimen@\thr@@
        \kern-\dimen@\hrule width.5in\kern\dimen@}
        \do@columngrid{mlt}{\tw@}
}
\begin{document}

\title{Generalized Quantum Stein's Lemma and Second Law of Quantum Resource Theories}

\author{Masahito Hayashi}
\email{hmasahito@cuhk.edu.cn}
\affiliation{School of Data Science, The Chinese University of Hong Kong, Shenzhen, Longgang District, Shenzhen, 518172, China}
\affiliation{International Quantum Academy, Futian District, Shenzhen 518048, China}
\affiliation{Graduate School of Mathematics, Nagoya University, Chikusa-ku, Nagoya 464--8602, Japan}
\author{Hayata Yamasaki}
\email{hayata.yamasaki@gmail.com}
\affiliation{Department of Physics, Graduate School of Science, The University of Tokyo, 7--3--1 Hongo, Bunkyo-ku, Tokyo, 113--0033, Japan}
\affiliation{
Department of Computer Science, Graduate School of Information Science and Technology, The University of Tokyo, 7--3--1 Hongo, Bunkyo-ku, Tokyo, 113--8656, Japan
}

\begin{abstract}
The second law of thermodynamics is the cornerstone of physics, characterizing the convertibility between thermodynamic states through a single function, entropy. Given the universal applicability of thermodynamics, a fundamental question in quantum information theory is whether an analogous second law can be formulated to characterize the convertibility of resources for quantum information processing by a single function. In 2008, a promising formulation was proposed, linking resource convertibility to the optimal performance of a variant of the quantum version of hypothesis testing. Central to this formulation was the generalized quantum Stein's lemma, which aimed to characterize this optimal performance by a measure of quantum resources, the regularized relative entropy of resource. If proven valid, the generalized quantum Stein's lemma would lead to the second law for quantum resources, with the regularized relative entropy of resource taking the role of entropy in thermodynamics. However, in 2023, a logical gap was found in the original proof of this lemma, casting doubt on the possibility of such a formulation of the second law. In this work, we address this problem by developing alternative techniques to successfully prove the generalized quantum Stein's lemma under a smaller set of assumptions than the original analysis. Based on our proof, we reestablish and extend the second law of quantum resource theories, applicable to both static resources of quantum states and a fundamental class of dynamical resources represented by classical-quantum (CQ) channels. These results resolve the fundamental problem of bridging the analogy between thermodynamics and quantum information theory.
\end{abstract}

\maketitle

\paragraph*{Introduction.}
Quantum information processing marks a groundbreaking shift in information technology, offering capabilities beyond those of classical information processing, such as significant speedups in quantum computation and strong security in quantum cryptography~\cite{N4}.
These breakthroughs are driven by the efficient use of intrinsic quantum properties such as entanglement and coherence, which serve as resources amplifying the power of quantum information processing.
To systematically explore and harness these quantum properties, quantum resource theories (QRTs)~\cite{Kuroiwa2020,Chitambar2018} have been developed, offering an operational framework for studying the manipulation and quantification of quantum resources.
QRTs are defined by specifying a restricted class of operations, i.e., free operations, such as local operations and classical communication (LOCC)~\cite{Horodecki2009,chitambar2014everything,Yamasaki2024} for manipulating entanglement across distant laboratories.
Quantum states freely obtainable under these operations are called free states, while non-free states to overcome the operational restrictions are viewed as resources.
By analyzing tasks that involve these operations and quantum resources, QRTs uncover the potential advantages and the fundamental limitations of quantum information processing, all governed by the law of quantum mechanics.

The construction of such operational frameworks has been a successful approach in the field of physics.
Thermodynamics, one of the most traditional operational theories in physics, has revealed potential uses and fundamental limits of energy resources based on a few axioms~\cite{carnot1890reflections,doi:10.1080/14786445608642141,Thomson_1853,LIEB19991,Lieb2004,10.1063/1.883034}.
Although originally motivated by the development of steam engines, thermodynamics applies to various domains, including chemical reactions~\cite{lewis1923thermodynamics,guggenheim1933modern} and energy consumption in computation~\cite{5392446,meier2023energyconsumption}, thanks to its axiomatic formulation ensuring universal applicability.
Central to thermodynamics is the concept of entropy, a real-valued function indicating convertibility between thermodynamic states.
Specifically, as formulated in Refs.~\cite{LIEB19991,Lieb2004,10.1063/1.883034}, the second law of thermodynamics states that the conversion from a thermodynamic state $X_1$ to another state $X_2$ under adiabatic operations is possible if and only if their entropies satisfy $S\qty(X_1)\leq S\qty(X_2)$.

\begin{figure}[t!]
    \centering
    \includegraphics[width=3.4in]{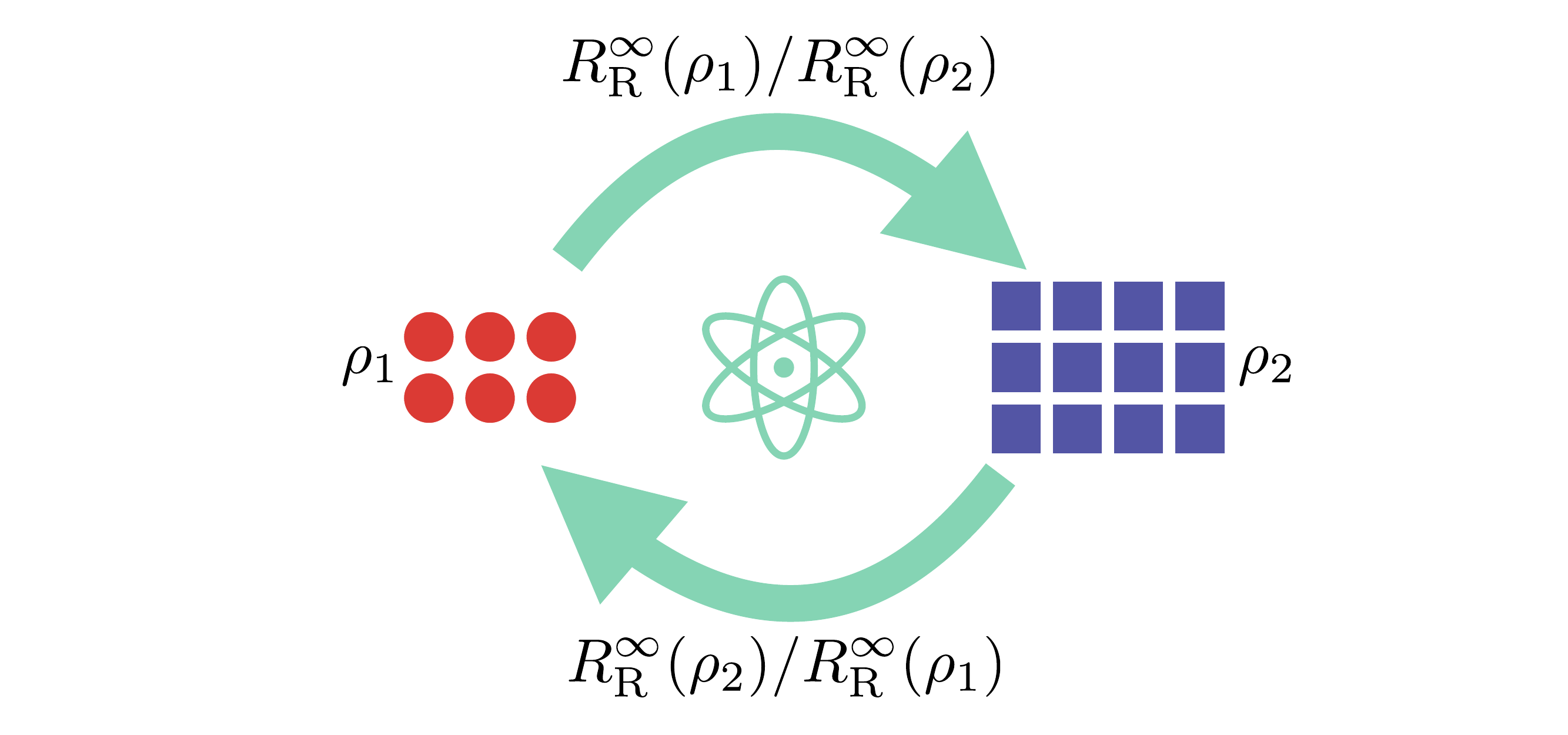}
    \caption{
    The second law of quantum resource theories (QRTs). Our main result, i.e., the proof of the generalized quantum Stein's lemma, leads to an axiomatic formulation of QRTs with a second law analogous to that of thermodynamics. In this formulation, a single function, the regularized relative entropy of resource $R_\mathrm{R}^\infty$, provides the necessary and sufficient condition for the asymptotic convertibility between quantum resources (quantum states $\rho_1$ represented by red circles and $\rho_2$ by blue squares) at the optimal rate, in analogy to entropy in the second law of thermodynamics.
    }
    \label{fig:second_law}
\end{figure}

While thermodynamics provides a quantitative understanding of energy and entropy through the state conversion tasks, QRTs also aim to quantify the potential and limit of quantum resources.
A fundamental task in QRTs is the asymptotic conversion between quantum states~\cite{Kuroiwa2020,Chitambar2018},
which involves converting many independently and identically distributed (IID) copies of quantum state $\rho_1$ into as many copies of state $\rho_2$ as possible with a vanishing error under the restricted operations, in the spirit of Shannon's information theory~\cite{6773024,cover2012elements}.
The maximum number of copies of $\rho_2$ obtained per $\rho_1$ is called the asymptotic conversion rate.
In view of the success of thermodynamics, it is fundamental to seek a universal formulation of QRTs with a second law analogous to that of thermodynamics (Fig.~\ref{fig:second_law}).
Such a second law would provide a necessary and sufficient condition for the asymptotic conversion at a certain rate characterized by a single function of $\rho_j$ ($j\in\{1,2\}$), similar to entropy in thermodynamics.

However, establishing the second law for QRTs has been challenging in general.
In entanglement theory, for instance, the irreversibility of asymptotic state conversion under LOCC hinders the establishment of the second law to characterize the asymptotic conversion rate~\cite{PhysRevLett.86.5803,PhysRevLett.119.180506,Lami2023}.
Nevertheless, significant progress toward the second law of entanglement theory has been made in Refs.~\cite{Brand_o_2008,brandao2010reversible,Brandao2010}, and this has been extended to general QRTs in Ref.~\cite{Brandao2015}.
Conventionally, LOCC is defined using a bottom-up approach, specifying what can be done in the laboratories.
In contrast, Refs.~\cite{Brand_o_2008,brandao2010reversible,Brandao2010} took a top-down approach, defining an axiomatic class of operations by specifying what cannot be done, similar to how thermodynamics axiomatically introduces adiabatic operations~\cite{LIEB19991,Lieb2004,10.1063/1.883034}.

Under an axiomatic class of operations, Refs.~\cite{Brand_o_2008,brandao2010reversible,Brandao2010} established a remarkable connection between the asymptotic conversion of quantum states and a variant of quantum hypothesis testing~\cite{hiai1991proper,887855}, another fundamental task in quantum information theory.
This task aims to distinguish $n$ IID copies of quantum state $\rho^{\otimes n}$ from any state $\sigma_n$ in the set of free states, which may not be in IID form.
The non-IIDness of $\sigma_n$ poses a substantial challenge to its analysis.
References~\cite{Brand_o_2008,brandao2010reversible,Brandao2010} tried to address this challenge to establish a lemma characterizing the optimal performance of this hypothesis-testing task, which is called the generalized quantum Stein's lemma.
If valid, the generalized quantum Stein's lemma would enable a formulation of QRTs with the second law, as intended in Refs.~\cite{Brand_o_2008,brandao2010reversible,Brandao2010,Brandao2015}; however, as described in Refs.~\cite{fang2022ultimate,berta2023gap,Berta2023,yamasaki2024generalized}, a logical gap has been found in the original analysis~\cite{Brandao2010} of the lemma, invalidating its consequences.\footnote{A history of this finding is as follows~\cite{KunFungPersonal}. First, a logical gap was pointed out in a part of the arguments in the early version of Ref.~\cite{fang2022ultimate}, but this part was inspired by the analysis of the generalized quantum Stein's lemma in Ref.~\cite{Brandao2010}. Then, based on this, Ref.~\cite{berta2023gap} pointed out the logical gap of Ref.~\cite{Brandao2010}. References~\cite{Berta2023,yamasaki2024generalized} also summarize this problem.}
Reference~\cite{yamasaki2024generalized} attempted an alternative approach to resolve this issue, but their approach also turned out to be insufficient for completing the proof.
Consequently, the first---and only known---general formulation of QRTs with the second law has lost its validity, reopening the question of whether such a framework can be established at all.

In this work, we prove the generalized quantum Stein's lemma and extend the framework for QRTs with the second law.
Our proof circumvents the logical gaps in the previous analyses of the generalized quantum Stein's lemma in Refs.~\cite{Brandao2010,yamasaki2024generalized} by introducing alternative techniques based on pinching~\cite{hayashi2002optimal} and the information spectrum method~\cite{4069150} to handle the non-IIDness.
These techniques enable us to prove a stronger, more widely applicable version of the generalized quantum Stein's lemma, with fewer assumptions than the previous analyses~\cite{Brandao2010,yamasaki2024generalized}. 
Progressing beyond QRTs for static resources (i.e., states) in Refs.~\cite{Brand_o_2008,brandao2010reversible,Brandao2010,Brandao2015}, we construct a framework for QRTs with the second law for a fundamental class of dynamical resources, i.e., classical-quantum (CQ) channels, relevant in communication scenarios~\cite{hayashi2016quantum} while also encompassing the existing results for states in Refs.~\cite{Brand_o_2008,brandao2010reversible,Brandao2010,Brandao2015} as special cases.
Below, we introduce QRTs for states, present our main results on the generalized quantum Stein's lemma, and discuss its implications for formulating QRTs with the second law for states.
In Methods, we detail our proof of the generalized quantum Stein's lemma and formulate QRTs with the second law for CQ channels.

\paragraph*{Framework of QRTs.}
We present the framework of QRTs for states based on the general formulation in Ref.~\cite{Kuroiwa2020}.
We represent a quantum system by a $d$-dimensional complex Hilbert space $\mathcal{H}=\mathbb{C}^d$ for some finite $d$.
Quantum states of $\mathcal{H}$ are positive semidefinite operators on $\mathcal{H}$ with unit trace, with the set of states of $\mathcal{H}$ denoted by $\mathcal{D}\qty(\mathcal{H})$.
For any $n\in\{1,2,\ldots\}$, a composite system of $n$ subsystems $\mathcal{H}$ is represented as $\mathcal{H}^{\otimes n}$ with tensor product.
The set of quantum operations from an input system $\mathcal{H}_\mathrm{in}$ to an output system $\mathcal{H}_\mathrm{out}$ is represented by that of completely positive and trace-preserving (CPTP) linear maps (or channels), denoted by $\mathcal{C}\qty(\mathcal{H}_\mathrm{in}\to\mathcal{H}_\mathrm{out})$.

Specifying a set of free operations defines QRTs, representing quantum operations used in quantum information processing.
The set of free operations is denoted by $\mathcal{O}\qty(\mathcal{H}_\mathrm{in}\to\mathcal{H}_\mathrm{out})\subset\mathcal{C}\qty(\mathcal{H}_\mathrm{in}\to\mathcal{H}_\mathrm{out})$, which we may write $\mathcal{O}$ if the argument is obvious from the context.
Given $\mathcal{O}$, free states are defined as states $\rho_\mathrm{free}$ such that for any initial state $\rho$, there exist some free operations $\mathcal{E}\in\mathcal{O}$ to convert it into $\rho_\mathrm{free}=\mathcal{E}\qty(\rho)$ (in other words, regardless of how non-resourceful $\rho$ may be, $\rho_\mathrm{free}$ can be freely obtained, even from scratch)~\cite{Kuroiwa2020}.
The set of free states of $\mathcal{H}$ is denoted by $\mathcal{F}\qty(\mathcal{H})\subset\mathcal{D}\qty(\mathcal{H})$.
Following the previous works~\cite{Brand_o_2008,brandao2010reversible,Brandao2010,Brandao2015,yamasaki2024generalized}, we consider QRTs with their sets of free states satisfying the following properties.
\begin{enumerate}[label={State\arabic*}]
    \item \label{p1:main}The set $\mathcal{F}(\mathcal{H})$ of free states is closed and convex.
    \item \label{p4:main}The set of free states is closed under tensor product; that is, if $\rho_\mathrm{free}\in\mathcal{F}\qty(\mathcal{H})$ and $\rho_\mathrm{free}^\prime\in\mathcal{F}\qty(\mathcal{H}^\prime)$, then $\rho_\mathrm{free}\otimes\rho_\mathrm{free}^\prime\in\mathcal{F}\qty(\mathcal{H}\otimes\mathcal{H}^\prime)$.
    \item \label{p2:main}The set $\mathcal{F}(\mathcal{H})$ of free states contains a full-rank state $\rho_{\mathrm{full}}>0$.
\end{enumerate}
As discussed in Refs.~\cite{Kuroiwa2020,Chitambar2018}, it is conventional to additionally assume a certain set of axioms for free operations to ensure that QRTs are physically well-motivated, such as the requirement that the composition of multiple free operations remains a free operation.
From this requirement, an essential property of QRTs follows: free operations always map free states to free states and, therefore, never generate resources from any free state~\cite{Chitambar2018}.
However, in our analysis below, rather than focusing solely on free operations under such axioms, we also consider a relaxed notion of free operations, using the resource-non-generating property of free operations as a guiding principle.

Various QRTs with fundamental motivations, such as those of entanglement, athermality, coherence, asymmetry, and magic states, indeed satisfy the above properties~\cite{Chitambar2018}.
In the original study of the generalized quantum Stein's lemma, Refs.~\cite{Brand_o_2008,brandao2010reversible} considered the entanglement theory, which has these properties.
A later extension to general QRTs in Ref.~\cite{Brandao2015} does not explicitly mention Property~\ref{p2:main}, but Property~\ref{p2:main} is necessary for proving and using the generalized quantum Stein's lemma to reproduce the results of Ref.~\cite{Brandao2015}.
In addition to the above properties, the existing attempts to prove the generalized quantum Stein's lemma in Refs.~\cite{Brandao2010,yamasaki2024generalized} imposed two additional assumptions; one is that $\mathcal{F}$ should be closed under taking the partial trace, and the other is that $\mathcal{F}$ should be invariant under permutation of subsystems.
By contrast, our proof does not impose these additional assumptions, leading to broader applicability.
Furthermore, whereas these properties are attributed to quantum states, we also introduce their generalization to CQ channels (see Methods for details).

A non-free state, assisting free operations to conduct quantum information processing, is called a resource state, and QRTs provide a way to quantify resourcefulness via resource measures.
Resource measures are a family of real functions $R_\mathcal{H}$ of the state of every system $\mathcal{H}$ that is monotonically non-increasing under free operations~\cite{Kuroiwa2020}; i.e., for any free operation $\mathcal{E}$ and any state $\rho$, it should hold that $R(\rho)\geq R(\mathcal{E}(\rho))$, where we may omit the subscript $\mathcal{H}$ of $R_\mathcal{H}$ to write $R$ if it is obvious from the context.
For example, the relative entropy of resource is defined as
$R_\mathrm{R}\qty(\rho)\coloneqq\min_{\rho_\mathrm{free}\in\mathcal{F}\qty(\mathcal{H})}D\left(\rho\middle|\middle|\rho_\mathrm{free}\right)$~\cite{Chitambar2018,Kuroiwa2020},
where $D\left(\rho\middle|\middle|\sigma\right)\coloneqq\Tr[\rho(\log\rho-\log\sigma)]$ is the quantum relative entropy, and $\log$ is the natural logarithm throughout this work.
Its variant, the regularized relative entropy of resources, is defined as
$R_\mathrm{R}^\infty\qty(\rho)\coloneqq\lim_{n\to\infty}\frac{1}{n}R_\mathrm{R}\qty(\rho^{\otimes n})
$~\cite{Chitambar2018,Kuroiwa2020}.
Another example is the generalized robustness of resource (also known as global robustness) defined as
$R_\mathrm{G}\qty(\rho)\coloneqq\min\qty{s\geq 0:\frac{\rho+s\rho^\prime}{1+s}\in\mathcal{F}(\mathcal{H}),\rho^\prime\in\mathcal{D}\qty(\mathcal{H})}$~\cite{Chitambar2018}.
All these $R_\mathrm{R}$, $R_\mathrm{R}^\infty$, and $R_\mathrm{G}$ serve as resource measures, satisfying the monotonicity as required~\cite{Chitambar2018}.

A fundamental task in QRTs is the asymptotic conversion of quantum states.
This task involves converting many copies of one state, $\rho_1$, into as many copies as possible of another state, $\rho_2$, using free operations, within errors that vanish asymptotically.
The conversion rate $r_\mathcal{O}\qty(\rho_1\to\rho_2)$, under a class $\mathcal{O}$ of operations, is the supremum of achievable rates $r$ in this asymptotic conversion.
Specifically, $r$ is achievable if $\liminf_{n\to\infty}\frac{1}{2}\left\|\mathcal{E}_n\qty(\rho_1^{\otimes n})-\rho_2^{\otimes \lceil rn\rceil}\right\|_1=0$ for some sequence $\{\mathcal{E}_n\}_n$ of operations in $\mathcal{O}$, where $\lceil{}\cdots{}\rceil$ is the ceiling function, and $\|\cdots\|_1$ is the trace norm~\cite{Kuroiwa2020}.
For example, in the entanglement theory, $\mathcal{O}$ can be chosen as LOCC, making $\mathcal{F}$ the set of separable states~\cite{Kuroiwa2020,Chitambar2018}.
However, under such $\mathcal{O}$, $r_\mathcal{O}\qty(\rho_1\to\rho_2)$ cannot be characterized by a single resource measure $R$ due to the irreversibility of conversion between mixed entangled states~\cite{PhysRevLett.86.5803,PhysRevLett.119.180506,Lami2023}.
To address this issue, as in the previous works~\cite{Brand_o_2008,brandao2010reversible,Brandao2010,Brandao2015}, one can consider a slightly broader, axiomatically defined class $\tilde{\mathcal{O}}$ of operations as a relaxation of $\mathcal{O}$.
A fundamental question in QRTs is whether it is possible to establish a general framework of QRTs with an appropriate choice of the class $\tilde{\mathcal{O}}$ of operations and a single resource measure $R$ such that the resource measures $R(\rho_j)$ ($j\in\{1,2\}$) characterize the convertibility at rate $r_{\tilde{\mathcal{O}}}\qty(\rho_1\to\rho_2)$, which would constitute the second law of QRTs\@.

\paragraph*{Main result: Generalized quantum Stein's lemma.}---
For establishing such a desired framework of QRTs, the generalized quantum Stein's lemma plays a crucial role, characterizing the optimal performance of a variant of quantum hypothesis testing; our main result is the proof of the generalized quantum Stein's lemma.
In this variant of quantum hypothesis testing, as introduced in Ref.~\cite{Brandao2010}, we are initially given $n\in\{1,2,\ldots\}$, a classical description of a state $\rho$ of $\mathcal{H}$, and an unknown quantum state of $\mathcal{H}^{\otimes n}$.
The task is to perform a two-outcome measurement by a positive operator-valued measure (POVM) $\{T_n,\mathds{1}-T_n\}$ on $\mathcal{H}^{\otimes n}$ (where $\mathds{1}$ is the identity operator, and $0\leq T_n\leq\mathds{1}$) to distinguish the following two cases.
\begin{itemize}
    \item Null hypothesis: The given unknown state is $n$ IID copies $\rho^{\otimes n}$ of $\rho$.
    \item Alternative hypothesis: The given state is some free state $\sigma\in\mathcal{F}\qty(\mathcal{H}^{\otimes n})$ in the set satisfying Properties~\ref{p1:main}--\ref{p2:main}, where $\sigma$ may have a non-IID form over $\mathcal{H}^{\otimes n}$.
\end{itemize}
Hypothesis testing of this type with an alternative hypothesis composed of several elements has been studied in the context of the independence test \cite{IEEE-IT-8231191},
as explained in \cite[Sec.~VIII]{IEEE-IT-8231191}, 
which takes a central role in the meta-converse of the channel coding~\cite[Sec.~3]{Nagaoka},~\cite{IEEE-IT-5452208},~\cite[Sec.~II]{IEEE-IT-6395254},
and in a converse bound for secure random number generation \cite{TW2014,IEEE-IT-7161366}.
Also, Ref.~\cite[Sec.~IV~B]{IEEE-IT-7887663} considers another type of alternative hypothesis composed of several elements in the context of hypothesis testing by using local measurements.

If the measurement outcome is $T_n$, we conclude that the given state was $\rho^{\otimes n}$, and if $\mathds{1}-T_n$, then was some free state $\sigma\in\mathcal{F}\qty(\mathcal{H}^{\otimes n})$.
For this hypothesis testing, we define the following two types of errors.
\begin{itemize}
    \item Type I error: The mistaken conclusion that the given state was some free state $\sigma\in\mathcal{F}\qty(\mathcal{H}^{\otimes n})$ when it was $\rho^{\otimes n}$, which happens with probability $\alpha_n\coloneqq\Tr\qty[\qty(\mathds{1}-T_n)\rho^{\otimes n})]$.
    \item Type II error: The mistaken conclusion that the given state was $\rho^{\otimes n}$ when it was some free state $\sigma\in\mathcal{F}\qty(\mathcal{H}^{\otimes n})$, which happens with probability $\beta_n\coloneqq\max_{\sigma\in\mathcal{F}(\mathcal{H}^{\otimes n})}\Tr\qty[T_n\sigma]$ in the worst case.
\end{itemize}

For any target type I error $\epsilon\in(0,1)$, by choosing appropriate POVMs, we can suppress the type II error exponentially in $n$ while keeping the type I error below $\epsilon$; as shown below, the generalized quantum Stein's lemma characterizes the optimal exponent, i.e., the fastest rate in suppressing the type II error, by the regularized relative entropy of resource.
Note that by choosing 
$\mathcal{F}\qty(\mathcal{H}^{\otimes n})=
\qty{\sigma^{\otimes n}}$ for a fixed full-rank state $\sigma$, 
the generalized quantum Stein's lemma reduces to quantum Stein's lemma for quantum hypothesis testing in the conventional IID setting~\cite{hiai1991proper,887855,Brandao2010}.\footnote{
This choice satisfies Property~\ref{p1:main} by construction. Property~\ref{p4:main} holds because
$\mathcal{F}\qty(\mathcal{H}^{\otimes n}\otimes \mathcal{H}^{\otimes n'})=\mathcal{F}\qty(\mathcal{H}^{\otimes \qty(n+n')})
=\qty{\sigma^{\otimes n} \otimes \sigma^{\otimes n'}}$. Additionally, Property~\ref{p2:main} is satisfied for any full-rank state $\sigma$.
}
See Methods for the details of our proof.

\begin{theorem}[\label{thm:main_generalized_steins_lemma}Generalized quantum Stein's lemma]
Given any family $\mathcal{F}$ of sets of states satisfying Properties~\ref{p1:main}--\ref{p2:main}, any state $\rho$ of $\mathcal{H}$, and any target type I error $\epsilon\in(0,1)$,
the optimal exponent of type II error is
\begin{align}
\label{eq:stein_main}
    \lim_{n\to\infty}-\frac{\log\qty(\min_{T_n}\max_{\sigma\in\mathcal{F}\qty(\mathcal{H}^{\otimes n})}\Tr\qty[T_n\sigma])}{n}= R_\mathrm{R}^\infty(\rho),
\end{align}
where the minimum is taken over all POVMs $\qty{T_n,\mathds{1}-T_n}$ suppressing the type I error below $\epsilon$, i.e., $\Tr[\qty(\mathds{1}-T_n)\rho^{\otimes n}]\leq\epsilon$, and $R_\mathrm{R}^\infty$ is the regularized relative entropy of resource.
\end{theorem}

The challenge in the proof of Theorem~\ref{thm:main_generalized_steins_lemma} arises from the non-IIDness of $\sigma\in\mathcal{F}\qty(\mathcal{H}^{\otimes n})$.
To address the non-IIDness, Ref.~\cite{Brandao2010} considered using symmetry under the permutation of the $n$ subsystems.
Such symmetry implies that almost all states of the $n$ subsystems are virtually identical and independent of each other~\cite{renner2006security,renner2007symmetry}, approximately recovering the IID structure of the state.
Then, Ref.~\cite{Brandao2010} tried to show that this approximation for recovering the IID structure would not change the quantum relative entropy up to a negligibly small amount, so the optimal exponent would coincide with its regularization $R_\mathrm{R}^\infty(\rho)$ if Lemma~III.9 of Ref.~\cite{Brandao2010} were true, where the logical gap of their analysis was found~\cite{berta2023gap}.
Reference~\cite{yamasaki2024generalized} also tried to address this non-IIDness using a continuity bound in the second argument of the quantum relative entropy in Refs.~\cite{10206734,10129917}, but it turned out that these continuity bounds are not tight enough for completing the proof.
In contrast, our proof overcomes the non-IIDness by developing alternative techniques that avoid relying on the assumption of the permutation invariance.
For the strong converse part (i.e., $\leq$ in~\eqref{eq:stein_main}), we show that the proof method for the strong converse of quantum Stein's lemma in the IID setting~\cite{887855} can be suitably adapted, offering a simpler proof than the existing one in Ref.~\cite{Brandao2010}.
To address the non-IIDness in the direct part (i.e., $\geq$ in~\eqref{eq:stein_main}), we employ pinching techniques to ensure commutativity of operators in the core part of our analysis, similar to the analysis of quantum Stein's lemma in the IID setting~\cite{hayashi2002optimal}; then, applying the information spectrum method~\cite{4069150} to these commuting operators, we construct the optimal sequence of free states for $R_\mathrm{R}^\infty$ to achieve~\eqref{eq:stein_main}.
See Methods for details.

\paragraph*{The second law of QRTs.}
Our proof of the generalized quantum Stein's lemma leads to the second law of QRTs, as originally intended in Refs.~\cite{Brand_o_2008,brandao2010reversible,Brandao2010,Brandao2015}.
So far, we have introduced the class $\mathcal{O}$ of free operations in a conventional way, i.e., in a bottom-up approach by specifying what can be done.
By contrast, Refs.~\cite{Brand_o_2008,brandao2010reversible,Brandao2010,Brandao2015} formulated QRTs by introducing a slightly broader class $\tilde{O}$ of operations defined in an axiomatic approach by specifying what cannot be done, as in the adiabatic operations in thermodynamics.

A fundamental requirement for free operations $\mathcal{O}$ is that the free operations should not generate resource states from free states; however, in the context of asymptotic conversion, it is possible to axiomatically define a relaxed class of operations, $\tilde{O}$, which captures this requirement only in an asymptotic sense.
Various axiomatic definitions of $\tilde{O}$ can be considered, but even with such $\tilde{O}$, formulating general QRTs with the second law remains highly challenging~\cite{Lami2023,PhysRevLett.119.180506}.
The second law does hold for special types of quantum resources, such as QRTs of coherence~\cite{PhysRevLett.120.070403,PhysRevA.97.050301,berta2023gap} and athermality~\cite{PhysRevA.67.062104,PhysRevLett.111.250404}; similarly, formulating QRTs with the second law may be possible using some variants of composite quantum Stein's lemmas~\cite{9031743,PhysRevLett.121.190503,hayashi2002optimal,gao2024generalizedsteinslemmaasymptotic,hayashi2016correlation,8231191}, but these formulations are not general enough to cover entanglement~\cite{berta2023gap}.
Theories with the second law can also be developed by considering ``operations'' that extend beyond the limitations of quantum mechanics, such as allowing post-selection~\cite{regula2023reversibility}, non-physical quasi-operations~\cite{wang2023reversible}, and the use of batteries~\cite{ganardi2024secondlawentanglementmanipulation}.
However, to maintain full generality within the law of quantum mechanics, the only known approach to introducing $\tilde{O}$ that leads to QRTs with the second law is based on the generalized quantum Stein's lemma, as originally attempted Refs.~\cite{Brand_o_2008,brandao2010reversible,Brandao2010,Brandao2015}.
Following Refs.~\cite{Brand_o_2008,brandao2010reversible,Brandao2010,Brandao2015}, we define a set $\tilde{O}$ of asymptotically resource-non-generating operations as the set $\tilde{\mathcal{O}}\qty(\mathcal{H}_\mathrm{in}\to\mathcal{H}_\mathrm{out})\coloneqq\qty{\qty{\mathcal{E}_n\in\mathcal{C}\qty(\mathcal{H}_\mathrm{in}^{\otimes n}\to\mathcal{H}_\mathrm{out}^{\otimes n})}_{n=1,2,\ldots}}$ of all sequences of operations (CPTP linear maps) satisfying the following property: any sequence $\qty{\mathcal{E}_n}_n\in\tilde{O}$ of operations in this set asymptotically generates no resource from any free states in terms of the generalized robustness of resource $R_\mathrm{G}$, i.e., for any sequence $\qty{\rho_\mathrm{free}^{(n)}\in\mathcal{F}\qty(\mathcal{H}_\mathrm{in}^{\otimes n})}_n$ of free states
\begin{align}
\label{eq:R_G_condition_main_text}
    \lim_{n\to\infty}R_\mathrm{G}\qty(\mathcal{E}_n\qty(\rho_\mathrm{free}^{(n)}))=0.
\end{align}

Under $\tilde{\mathcal{O}}$ satisfying~\eqref{eq:R_G_condition_main_text}, using the generalized quantum Stein's lemma in Theorem~\ref{thm:main_generalized_steins_lemma}, we show a characterization of the asymptotic convertibility of resource states, as stated by the theorem below.
Also, as presented in Methods, we extend this theorem from the static resource, i.e., states, to a fundamental class of the dynamical resources, i.e., CQ channels, which are well motivated and widely studied in communication scenarios~\cite{hayashi2016quantum}
(see Methods for details).

\begin{theorem}[Second law of QRTs for states]
\label{thm:second_law_main}
Given any family $\mathcal{F}$ of sets of free states satisfying Properties~\ref{p1:main}--\ref{p2:main},
for any states $\rho_1$ and $\rho_2$ satisfying $R_\mathrm{R}^\infty(\rho_j)>0$ ($j\in\{1,2\}$), 
the asymptotic conversion rate between $\rho_1$ and $\rho_2$ under the asymptotically resource-non-generating operations $\tilde{\mathcal{O}}$ satisfying~\eqref{eq:R_G_condition_main_text} is
\begin{equation}
\label{eq:second_law}
    r_{\tilde{\mathcal{O}}}\qty(\rho_1\to\rho_2)=\frac{R_\mathrm{R}^\infty\qty(\rho_1)}{R_\mathrm{R}^\infty\qty(\rho_2)}.
\end{equation}   
\end{theorem}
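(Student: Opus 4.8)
The plan is to prove the two matching inequalities $r_{\tilde{\mathcal{O}}}(\rho_1\to\rho_2)\le R_\mathrm{R}^\infty(\rho_1)/R_\mathrm{R}^\infty(\rho_2)$ (converse) and $r_{\tilde{\mathcal{O}}}(\rho_1\to\rho_2)\ge R_\mathrm{R}^\infty(\rho_1)/R_\mathrm{R}^\infty(\rho_2)$ (achievability), reducing both to the hypothesis-testing quantity controlled by Theorem~\ref{thm:main_generalized_steins_lemma}. Throughout, the hypothesis $R_\mathrm{R}^\infty(\rho_j)>0$ guarantees that the ratio is finite and nonzero, and Properties~\ref{p1:main}--\ref{p2:main} let me invoke Theorem~\ref{thm:main_generalized_steins_lemma} for both $\mathcal{F}(\mathcal{H}_1^{\otimes n})$ and $\mathcal{F}(\mathcal{H}_2^{\otimes m})$ with $m=\lceil rn\rceil$.

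For the converse I would fix an achievable rate $r$ with maps $\mathcal{E}_n\in\tilde{\mathcal{O}}_n$ obeying $\tfrac12\|\mathcal{E}_n(\rho_1^{\otimes n})-\rho_2^{\otimes m}\|_1=:\eta_n\to0$, and pull an optimal type~II test $T$ for $\rho_2^{\otimes m}$ back through the Heisenberg adjoint, $T':=\mathcal{E}_n^\dagger(T)$. Its type~I error is $\Tr[(\mathds{1}-T)\mathcal{E}_n(\rho_1^{\otimes n})]\le\epsilon+\eta_n$. For the type~II error, the defining condition~\eqref{eq:R_G_condition_main_text} gives $R_\mathrm{G}(\mathcal{E}_n(\sigma))\le\delta_n\to0$ for every free $\sigma$, hence $\mathcal{E}_n(\sigma)\le(1+\delta_n)\tau$ for some free $\tau$, so $\Tr[T'\sigma]=\Tr[T\mathcal{E}_n(\sigma)]\le(1+\delta_n)\max_{\tau\in\mathcal{F}}\Tr[T\tau]$. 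Taking $-\tfrac1n\log$ of the resulting bound and applying Theorem~\ref{thm:main_generalized_steins_lemma} on both sides---the exponent is $R_\mathrm{R}^\infty$ for every target error in $(0,1)$, which absorbs the shift $\epsilon\to\epsilon+\eta_n$---yields $R_\mathrm{R}^\infty(\rho_1)\ge r\,R_\mathrm{R}^\infty(\rho_2)$, i.e.\ the converse.

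For achievability I would build a measure-and-prepare operation from the direct part of the lemma: choose tests $T_n$ with type~I error $\alpha_n\to0$ and $-\tfrac1n\log\beta_n\to R_\mathrm{R}^\infty(\rho_1)$, where $\beta_n:=\max_\sigma\Tr[T_n\sigma]$, and set $\mathcal{E}_n(\cdot)=\Tr[T_n\,\cdot\,]\,\xi_m+\Tr[(\mathds{1}-T_n)\,\cdot\,]\,\tau_n$ with $\tau_n$ free. On $\rho_1^{\otimes n}$ this outputs $\xi_m$ up to trace-norm error $\le2\alpha_n+\|\xi_m-\rho_2^{\otimes m}\|_1$. The constraint is that $\mathcal{E}_n$ lie in $\tilde{\mathcal{O}}$: since $\mathcal{E}_n(\sigma)$ carries weight at most $\beta_n$ on $\xi_m$ and is otherwise free, convexity (Property~\ref{p1:main}) gives $R_\mathrm{G}(\mathcal{E}_n(\sigma))\le\beta_n R_\mathrm{G}(\xi_m)$. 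Choosing $\xi_m=\rho_2^{\otimes m}$ only controls this by $\beta_n R_\mathrm{G}(\rho_2^{\otimes m})\lesssim e^{-nR_\mathrm{R}^\infty(\rho_1)}(1+R_\mathrm{G}(\rho_2))^m$, which forces the suboptimal rate $R_\mathrm{R}^\infty(\rho_1)/\log(1+R_\mathrm{G}(\rho_2))$ because $\log(1+R_\mathrm{G}(\rho_2))\ge R_\mathrm{R}^\infty(\rho_2)$ in general. The fix is to prepare instead a smoothed near-copy $\xi_m\approx_{\epsilon_m}\rho_2^{\otimes m}$ whose generalized robustness obeys $\tfrac1m\log(1+R_\mathrm{G}(\xi_m))\to R_\mathrm{R}^\infty(\rho_2)$; then $R_\mathrm{G}(\mathcal{E}_n(\sigma))\to0$ precisely when $r<R_\mathrm{R}^\infty(\rho_1)/R_\mathrm{R}^\infty(\rho_2)$, closing the gap.

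The main obstacle is exactly this formation estimate: that $\rho_2^{\otimes m}$ admits an $\epsilon$-close state whose regularized smooth generalized robustness collapses from $\log(1+R_\mathrm{G}(\rho_2))$ down to $R_\mathrm{R}^\infty(\rho_2)$. I expect to establish it as the formation-side counterpart of Theorem~\ref{thm:main_generalized_steins_lemma}---equivalently, that the regularized smooth max-relative entropy of resource equals the regularized relative entropy of resource---using the same twirling-and-pinching analysis that tames the non-IID free states; here the assumption $R_\mathrm{R}^\infty(\rho_2)>0$ again enters, keeping the prepared blocks at a strictly positive rate so that the ratio in~\eqref{eq:second_law} is attained.
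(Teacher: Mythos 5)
Your overall architecture is sound and your achievability construction is essentially the paper's: a measure-and-prepare map built from an optimal Stein test $T_n$ for $\rho_1$, outputting a smoothed near-copy $\xi_m$ of $\rho_2^{\otimes m}$ whose generalized robustness grows only at rate $R_\mathrm{R}^\infty(\rho_2)$. You correctly diagnose that using $\rho_2^{\otimes m}$ itself fails and that the whole argument hinges on the ``formation estimate'' $\frac{1}{m}\log\qty(1+R_\mathrm{G}\qty(\xi_m))\to R_\mathrm{R}^\infty(\rho_2)$ with $\xi_m\approx\rho_2^{\otimes m}$; this is exactly the paper's Corollary~\ref{cor:robustness_characterization} (the state case is~\eqref{eq:QRT_state_characterization}), which the paper proves in Lemma~\ref{lem:converse_robustness} by the strong-converse-type argument you anticipate: pinch with respect to $\sigma_\mathrm{free}^{(m)\otimes k}$, cut off the large-likelihood-ratio eigenspace with a projection $P_{k,m}$, show the removed weight vanishes via the strong converse bound, and control the robustness of the truncated state by the pinching inequality, with a padding by $\rho_\mathrm{full}$ to restore normalization. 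So your proposal leaves its central lemma as a conjecture, but the conjecture is true and is established in the paper by the route you sketch. Two smaller remarks on the achievability: the paper chooses the second branch of the measure-and-prepare map to be the optimal \emph{complementary} state $\mathcal{N}_2^{(rn)\prime}$ in the robustness decomposition of $\xi_m$ rather than a generic free state $\tau_n$, but your convexity bound $R_\mathrm{G}\qty(t\xi_m+(1-t)\tau_n)\le t\,R_\mathrm{G}\qty(\xi_m)$ is valid and suffices, so this is a harmless simplification; and the rate loss $\delta$ you absorb is handled identically in the paper via~\eqref{eq:rate}.

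Your converse is where you genuinely diverge from the paper. You pull the optimal test for $\rho_2^{\otimes m}$ back through the adjoint $\mathcal{E}_n^\dagger$ and compare exponents using the strong converse part of Theorem~\ref{thm:main_generalized_steins_lemma} for $\rho_1$ and the direct part for $\rho_2$; this is a clean hypothesis-testing argument (closer in spirit to the converse of Ref.~\cite{regula2023reversibility}) that needs only the operator inequality $\mathcal{E}_n(\sigma)\le(1+\delta_n)\tau$ from~\eqref{eq:R_G_condition_main_text} and the monotonicity of $\beta_\epsilon$ in $\epsilon$. The paper instead proves an asymptotic monotonicity of $R_\mathrm{R}^\infty$ under $\tilde{\mathcal{O}}$ (Lemma~\ref{lem:monotonicity}), which again routes through Corollary~\ref{cor:robustness_characterization} together with the asymptotic continuity of $R_\mathrm{R}$ and weak additivity (Proposition~\ref{prp:converse}). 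Your route avoids asymptotic continuity of the relative entropy of resource altogether; the paper's route is what generalizes to channels, where the Choi-map $\tilde{\Theta}_n$ need not be trace-non-increasing and an explicit asymptotic-continuity hypothesis~\eqref{eq:condition_asymptotic_continuity} must be imposed. For the state case stated in Theorem~\ref{thm:second_law_main}, both converses are correct.
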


To prove Theorem~\ref{thm:second_law_main}, assuming the generalized quantum Stein's lemma, Ref.~\cite{Brandao2015} originally showed an inequality $r_{\tilde{\mathcal{O}}}\qty(\rho_1\to\rho_2)\geq\frac{R_\mathrm{R}^\infty\qty(\rho_1)}{R_\mathrm{R}^\infty\qty(\rho_2)}$ in QRTs for states. The proof of the opposite inequality $r_{\tilde{\mathcal{O}}}\qty(\rho_1\to\rho_2)\leq\frac{R_\mathrm{R}^\infty\qty(\rho_1)}{R_\mathrm{R}^\infty\qty(\rho_2)}$ was not discussed in Ref.~\cite{Brandao2015} but later provided in Ref.~\cite{regula2023reversibility} with an extension to probabilistic protocols.
Our contribution is to prove both directions of these inequalities in a more general framework of QRTs for CQ channels, which includes Theorem~\ref{thm:second_law_main} in QRTs for states as a special case.
This generalization requires identifying nontrivial conditions for $\tilde{O}$ in QRTs for CQ channels, as was done in QRTs for states in Refs.~\cite{Brand_o_2008,brandao2010reversible,Brandao2010,Brandao2015}; {\red for this purpose,} we extend the relation between $R_\mathrm{R}^\infty$ and $R_\mathrm{G}$, previously shown for states in Ref.~\cite{Brandao2010}, to CQ channels.
This enables us to offer a more widely applicable proof of the second law of QRTs in this general framework.
See Methods for details.

The equation~\eqref{eq:second_law} represents the second law of QRTs for quantum states.
This is analogous to the axiomatic formulation of the second law of thermodynamics in Refs.~\cite{LIEB19991, Lieb2004,10.1063/1.883034}, which provides the necessary and sufficient condition for convertibility between thermodynamic states $X_1$ and $X_2$ under adiabatic operations solely by comparing the (additive) entropy functions $S(X_1)$ and $S(X_2)$. 
In the framework of QRTs in Theorem~\ref{thm:second_law_main}, the regularized relative entropy of resource $R_\mathrm{R}^\infty$ takes the role of entropy in thermodynamics.
Similar to the idealization of quasi-static processes under adiabatic operations, which are axiomatically introduced in thermodynamics~\cite{LIEB19991,Lieb2004,10.1063/1.883034} and may not be exactly realizable in practice, it may be generally unknown how to realize all operations in the axiomatically defined class $\tilde{O}$.
However, this axiomatic class of operations is broad enough to include all physically realizable operations, ensuring its universal applicability regardless of future technological advances, much like thermodynamics.

Lastly, for instance, in the entanglement theory, $\mathcal{F}(\mathcal{H})$ is taken as the set of separable states on two spatially separated systems $A$ and $B$~\cite{Horodecki2009}, and a fundamental resource state is an ebit, i.e., $\Phi_2\coloneqq\ket{\Phi_2}\bra{\Phi_2}$ with $\ket{\Phi_2}\coloneqq\frac{1}{\sqrt{2}}\qty(\ket{0}^A\otimes\ket{0}^B+\ket{1}^A\otimes\ket{1}^B)$.
For an ebit, we have $R_\mathrm{R}^\infty\qty(\Phi_2)=1$~\cite{Horodecki2009}.
In the asymptotic conversion from any state $\rho$ to ebit $\Phi_2$, the maximum number of ebits $\Phi_2$ obtained per $\rho$ is called the distillable entanglement~\cite{PhysRevA.53.2046}, written under the class $\tilde{O}$ of operations as $E_{\mathrm{D},\tilde{\mathcal{O}}}(\rho)\coloneqq r_{\tilde{\mathcal{O}}}\qty(\rho\to\Phi_2)$.
Also in the asymptotic conversion rate from $\Phi_2$ to $\rho$, the minimum required number of ebits $\Phi_2$ per $\rho$ is called the entanglement cost~\cite{PhysRevA.53.2046,Hayden_2001,Yamasaki2024}, which is given under $\tilde{\mathcal{O}}$ by
$E_{\mathrm{C},\tilde{\mathcal{O}}}(\rho)\coloneqq 1/{r_{\tilde{\mathcal{O}}}\qty(\Phi_2\to\rho)}$~\cite{Kuroiwa2020}.
Due to~\eqref{eq:second_law}, our results establish the asymptotic reversibility between all, pure and mixed, bipartite entangled states, as originally intended in Refs.~\cite{Brand_o_2008,brandao2010reversible}, i.e.,
$E_{\mathrm{D},\tilde{\mathcal{O}}}(\rho)=E_{\mathrm{C},\tilde{\mathcal{O}}}(\rho)=R_\mathrm{R}^\infty(\rho)$,
which resolves the question of the possibility of formulating the reversible framework of entanglement theory raised in Ref.~\cite{krueger2005open}.
See also Methods for other examples.

\paragraph*{Outlook.}
In this work, we proved the generalized quantum Stein's lemma (Theorem~\ref{thm:main_generalized_steins_lemma}), resolving a fundamental open problem in quantum information theory by overcoming the logical gaps of the existing analyses in Refs.~\cite{Brandao2010,yamasaki2024generalized}. By developing alternative proof techniques, we remove some of the assumptions in Refs.~\cite{Brandao2010,yamasaki2024generalized}. Our proof enables the formulation of the framework of QRTs equipped with the second law under Properties~\ref{p1:main}–\ref{p2:main} (Theorem~\ref{thm:second_law_main}), as originally intended in Refs.~\cite{Brand_o_2008,brandao2010reversible,Brandao2010,Brandao2015}.
In this framework, a single resource measure, the regularized relative entropy of resource, characterizes the asymptotic convertibility between resource states at the optimal rate, analogous to entropy in the second law of thermodynamics.
Since the publication of the initial works~\cite{Brand_o_2008,brandao2010reversible,Brandao2010,Brandao2015}, the scope of QRTs has expanded far beyond entanglement~\cite{Chitambar2018,Kuroiwa2020}; therefore, we also extend the framework of QRTs with the second law from those for states to those for CQ channels, i.e., a fundamental class of channels in communication scenarios~\cite{hayashi2016quantum}, broadening their applicability (see Methods for details). A remaining open question is how universally our results further generalize beyond convex and finite-dimensional QRTs satisfying Properties~\ref{p1:main}–\ref{p2:main}, for example, to non-convex QRTs~\cite{Kuroiwa2020,PhysRevA.104.L020401,PRL,PRA} and infinite-dimensional QRTs~\cite{Kuroiwa2020,PhysRevA.104.L020401,Regula2021,Lami2021,ferrari2023asymptotic,Yamasaki2024}.
The generalized quantum Stein's lemma itself also serves as a valuable tool in quantum information theory, such as examining the faithfulness of the regularized relative entropy of entanglement~\cite{PhysRevLett.103.160504,Brandao2010} and the squashed entanglement~\cite{berta2023gap,brandao2011faithful}.
Exploring further applications of this lemma would also be an intriguing research direction.
Given the success of thermodynamics, these general results are expected to be fundamental in studying the vast possibilities of the use of quantum resources in the future.

\begin{acknowledgments}
  H.Y.\ acknowledges Kohdai Kuroiwa for discussion. 
  A part of this work was carried out during the BIRS-IMAG workshop ``Towards Infinite Dimension and Beyond in Quantum Information'' held at the Institute of Mathematics of the University of Granada (IMAG) in Spain.
M.H.\ was supported in part by the National Natural Science Foundation of China under Grant 62171212, and
the General R\&D Projects of 
1+1+1 CUHK-CUHK(SZ)-GDST Joint Collaboration Fund 
(Grant No. GRDP2025-022)\@.
  H.Y.\ was supported by JST PRESTO Grant Number JPMJPR201A, JPMJPR23FC, JSPS KAKENHI Grant Number JP23K19970, and MEXT Quantum Leap Flagship Program (MEXT QLEAP) JPMXS0118069605, JPMXS0120351339\@.
\end{acknowledgments}

\section*{Author contributions}

Both authors contributed to the conception of the work, the analysis and interpretation of the work, and the preparation of the manuscript.

\section*{Competing interests}

The authors declare no competing interests.

\section*{Additional information}

Supplementary Information is available for this paper.
Correspondence and requests for materials should be addressed to Masahito Hayashi and Hayata Yamasaki.

\paragraph*{Note on related work}After posting our work, another paper~\cite{lami2024solutiongeneralisedquantumsteins} on a proof of generalized quantum Stein's lemma appeared, so we here clarify the difference between our results and the results in  Ref.~\cite{lami2024solutiongeneralisedquantumsteins}.
While Ref.~\cite{lami2024solutiongeneralisedquantumsteins} presents different proof techniques from ours, it still relies on the same assumptions as the previous analyses~\cite{Brandao2010,yamasaki2024generalized} of the generalized quantum Stein's lemma.
In contrast, our approach leads to a stronger version of the generalized quantum Stein's lemma by removing some of these assumptions, while offering a simpler proof.

\section*{Methods}

\paragraph*{Proof of generalized quantum Stein's lemma}

We analyze the task of quantum hypothesis testing to prove the generalized quantum Stein's lemma presented in the main text.
In a conventional setting of quantum hypothesis testing, we have two possible quantum states $\rho$ and $\sigma$ of a finite-dimensional quantum system $\mathcal{H}$ as our two hypotheses.
The performance of the discrimination between these two hypotheses is characterized by
\begin{align}
\beta_\epsilon(\rho\| \sigma)
\coloneqq
\min_{T\in\mathcal{T}_{\epsilon,\rho}} \Tr[T \sigma],
\end{align}
where $\mathcal{T}_{\epsilon,\rho}$ is a set of POVM elements given by
\begin{align}
    \mathcal{T}_{\epsilon,\rho}\coloneqq\qty{T:0\leq T\leq \mathds{1}, \Tr[(\mathds{1}-T)\rho]\le \epsilon},
\end{align}
and $\mathds{1}$ is the identity operator.
To study the generalized quantum Stein's lemma, we consider a more general setting where the hypotheses are given by the state $\rho$ and a compact set $\mathcal{S}$ of states on the system $\mathcal{H}$, where the compactness is assumed to guarantee the existence of maxima and minima in the following analysis.
In this case, the performance is characterized by
\begin{align}
&\beta_\epsilon\left(\rho\middle\| \mathcal{S}\right)\coloneqq\min_{T\in\mathcal{T}_{\epsilon,\rho}}\max_{\sigma \in \mathcal{S}} 
\Tr[T \sigma].
\end{align}
Such a hypothesis is called a composite hypothesis. 

Generally, the quantities satisfy the max-min inequality~\cite[Sec.~5.4.1]{boyd_vandenberghe_2004}
\begin{align}
\label{BH4Y}
\max_{\sigma \in \mathcal{S}}
\beta_{\epsilon}\left(\rho\middle\| \sigma\right) 
&\leq \beta_{ \epsilon}\left(\rho\middle\| \mathcal{S}\right).
\end{align}
Interestingly, as shown in Sec.~\ref{S2} of Supplementary Information,  these two quantities have the following relation when the set $\mathcal{S}$ in the second argument is convex, due to the minimax theorem~\cite{v1928theorie,sion1958general,10.2996/kmj/1138038812}.
\begin{lemma}[Lemma~\ref{L1} in Supplementary Information]\label{L1_methods}
For any $\epsilon\geq 0$, any state $\rho$, and any convex compact set $\mathcal{S}$, we have
\begin{align}
\max_{\sigma \in \mathcal{S}}
\beta_{\epsilon}\left(\rho\middle\| \sigma\right)
= \beta_{\epsilon}\left(\rho\middle\| \mathcal{S}\right).
\label{BH4_methods}
\end{align}
\end{lemma}

To characterize these quantities, we consider the $n$-fold tensor product system
${\cal H}^{\otimes n}$ composed of a $d$-dimensional subsystem ${\cal H}$. 
In this system, one may consider $n$-fold tensor product states
$\rho^{\otimes n}$ and $\sigma^{\otimes n}$.
Then, due to the quantum Stein's lemma~\cite{hiai1991proper,887855}, for any $\epsilon\in(0,1)$, one obtains the relation 
\begin{align}
\label{eq:steins_lemma}
\lim_{n\to \infty} -\frac{1}{n}\log \beta_\epsilon\left(\rho^{\otimes n}\middle\| \sigma^{\otimes n}\right)
=D\left(\rho\middle\|\sigma\right),
\end{align}
where $D(\rho\|\sigma)\coloneqq \Tr[\rho (\log \rho- \log \sigma)]$ is the quantum relative entropy, and the function $\log$ is the natural logarithm throughout this work. 
As its natural extension to the composite hypotheses,
for a sequence $\qty{\mathcal{S}^{(n)}}_n$ of convex compact sets of states of $\mathcal{H}^{\otimes n}$,
we will study the relation between
$\lim_{n\to \infty} -\frac{1}{n}\log \beta_\epsilon\left(\rho^{\otimes n}\middle\| \mathcal{S}^{(n)}\right)$
and $\lim_{n\to \infty} \frac{1}{n}\min_{\sigma \in \mathcal{S}^{(n)}}D\left(\rho^{\otimes n}\middle\|\sigma\right)$.
In contrast to~\eqref{eq:steins_lemma}, it is not clear whether these quantities are equal in general.
To address this problem, 
as in the conditions for the set of free states in the main text,
we introduce the following conditions for the set $\mathcal{S}^{(n)}$.
\begin{enumerate}[label={Set\arabic*}]
\item\label{a_methods}For a $d$-dimensional system $\mathcal{H}$, the set $\mathcal{S}^{(n)}$ is a convex compact subset of the set of states of $\mathcal{H}^{\otimes n}$.\footnote{Due to the finite dimension $d$, the compactness and the closedness of the bounded set $\mathcal{S}^{(n)}$ are equivalent.}
\item\label{b_methods}The set $\mathcal{S}^{(n)}$ is closed under tensor product; i.e., for any positive integers $n$ and $m$, if $\sigma_n\in\mathcal{S}^{(n)}$ and $\sigma_m\in\mathcal{S}^{(m)}$, then $\sigma_n\otimes\sigma_m\in \mathcal{S}^{(n+m)}$.
\item\label{c_methods}The set $\mathcal{S}^{(1)}$ contains a full-rank state $\sigma_\mathrm{full}$, i.e., a state with full support on $\mathcal{H}$.
\end{enumerate}

Our main result is stated as follows.
\begin{theorem}[Generalized quantum Stein's lemma]\label{TH1}
For any $\epsilon\in(0,1)$ and any sequence $\qty{\mathcal{S}^{(n)}}_n$ of sets of states satisfying Conditions~\ref{a_methods},~\ref{b_methods}, and~\ref{c_methods}, it holds that
\begin{align}
&\lim_{n\to \infty} -\frac{1}{n}\log \beta_\epsilon\left(\rho^{\otimes n}\middle\| \mathcal{S}^{(n)}\right)\notag\\
&=\lim_{n\to \infty} \frac{1}{n}\min_{\sigma \in \mathcal{S}^{(n)}}D\left(\rho^{\otimes n}\middle\|\sigma\right).   
\label{ZXI_methods}
\end{align}
\end{theorem}

To prove Theorem~\ref{TH1},
due to Lemma~\ref{L1_methods},
it suffices to deal with $\max_{\sigma \in \mathcal{S}^{(n)}} \beta_{\epsilon}\left(\rho^{\otimes n}\middle\| \sigma\right)$ and $\min_{\sigma \in \mathcal{S}^{(n)}}D\left(\rho^{\otimes n}\middle\|\sigma\right)$; in particular, our strategy focuses on analyzing the following relation
instead of that in Theorem~\ref{TH1}:
\begin{align}
&\lim_{n\to \infty} -\frac{1}{n}\log 
\max_{\sigma \in \mathcal{S}^{(n)}}
\beta_\epsilon\left(\rho^{\otimes n}\middle\| \sigma\right)\nonumber\\
&=\lim_{n\to \infty} \frac{1}{n}\min_{\sigma \in \mathcal{S}^{(n) }}D\left(\rho^{\otimes n}\middle\|\sigma\right).
\label{XBI}
\end{align}
Before proceeding with our proof of~\eqref{XBI}, we note that the existence of limits in~\eqref{XBI} is not assumed a priori;
however, we see that due to Fekete's subadditive lemma~\cite[Lemma~A.1]{hayashi2016quantum}, the limit $\lim_{n\to \infty} \frac{1}{n}\min_{
\sigma \in \mathcal{S}^{(n)}
}D\left(\rho^{\otimes n}\middle\|\sigma\right)$ on the right-hand side of~\eqref{XBI} exists, as shown in the following lemma, which is proven in Sec.~\ref{S2} of Supplementary Information.
We also remark that the convexity of $\mathcal{S}^{(n)}$ is unnecessary for this lemma.
\begin{lemma}[Lemma~\ref{L4} in Supplementary Information]\label{L4_methods}
For any sequence $\qty{\mathcal{S}^{(n)}}_n$ of sets satisfying Conditions~\ref{a_si},~\ref{b_si}, and~\ref{c_si} (except for the convexity of $\mathcal{S}^{(n)}$),
the limit
\begin{align}
\lim_{n\to \infty} \frac{1}{n}\min_{
\sigma \in \mathcal{S}^{(n)}
} D\left(\rho^{\otimes n}\middle\|\sigma\right)
\end{align}
exists.
\end{lemma}
To prove~\eqref{XBI}, due to Lemma~\ref{L4_methods}, our analysis will show
\begin{align}
&\limsup_{n\to\infty}-\frac{1}{n}\log 
\max_{\sigma \in \mathcal{S}^{(n)}}
\beta_\epsilon\left(\rho^{\otimes n}\middle\| \sigma\right)\notag\\
&\leq\lim_{n\to \infty} \frac{1}{n}\min_{
\sigma \in \mathcal{S}^{(n)}
} D\left(\rho^{\otimes n}\middle\|\sigma\right)
\label{XBI_leq}
\end{align}
and
\begin{align}
&\liminf_{n\to\infty}-\frac{1}{n}\log 
\max_{\sigma \in \mathcal{S}^{(n)}}
\beta_\epsilon\left(\rho^{\otimes n}\middle\| \sigma\right)\notag\\
&\geq \lim_{n\to \infty} \frac{1}{n}\min_{
\sigma \in \mathcal{S}^{(n)}
} D\left(\rho^{\otimes n}\middle\|\sigma\right).
\label{XBI_geq}
\end{align}

To show the inequality~\eqref{XBI_leq}, i.e., the strong converse part (see also Proposition~\ref{prp:strong_converse} in the Supplementary Information), we employ the lemma presented below.
While Ref.~\cite{Brandao2010} also provides a valid proof of the strong converse part of the generalized quantum Stein's lemma, it relies on assumptions that we do not impose; in contrast, we present a simpler proof of the strong converse using only the smaller set of assumptions listed as Conditions~\ref{a_methods},~\ref{b_methods}, and~\ref{c_methods}.
Our argument is based on the following lemma, which is proven in Sec.~\ref{S5} of the Supplementary Information by modifying the standard proof of the strong converse for the quantum Stein's lemma in the IID setting~\cite{887855}.

\begin{lemma}[Lemma~\ref{L5} in Supplementary Information]
\label{L5_methods}
Given any fixed positive integer $m$, any state $\rho$ of $\mathcal{H}$, any state $\sigma_m\in\mathcal{S}^{(m)}$, and any full-rank state $\sigma_\mathrm{full}\in\mathcal{S}^{(1)}$,
for each $n\in\{1,2,\ldots\}$ and $l\in\{0,1,\ldots\}$ satisfying $lm\leq n< (l+1)m$, we define
\begin{align}
\label{eq:tilde_sigma_n}
\sigma_{n}&\coloneqq
\sigma_{m}^{\otimes l}
\otimes \sigma_\mathrm{full}^{\otimes (n- lm)}.
\end{align}
For any $\epsilon\in[0,1)$, we have
\begin{align}
&\limsup_{n \to \infty}-\frac{1}{n}\log
\beta_{\epsilon} \left(\rho^{\otimes n} \middle\| \sigma_{n}\right) 
\leq\frac{1}{m}D \left(\rho^{\otimes m}\middle\| \sigma_{m}\right).
\end{align}
\end{lemma}

From Lemma~\ref{L5_methods}, for any $m$, any state $\sigma_{m}\in\mathcal{S}^{(m)}$, and any full-rank state $\sigma_\mathrm{full}\in\mathcal{S}^{(1)}$, it follows that
\begin{align}
&\limsup_{n\to \infty}-\frac{1}{n} \log \max_{\sigma\in \mathcal{S}^{(n)}}\beta_\epsilon \left(\rho^{\otimes n}\middle\|\sigma\right) \nonumber\\
&\leq
\limsup_{n\to \infty}-\frac{1}{n}
\log \beta_\epsilon \left(\rho^{\otimes n}\middle\| \sigma_{n}\right)
\leq\frac{1}{m}
D\left(\rho^{\otimes m}\middle\| \sigma_m\right),
\label{eq:strong_converse_arguement}
\end{align}
where $\sigma_{n}$ is given by~\eqref{eq:tilde_sigma_n}, satisfying $\sigma_{n}\in\mathcal{S}^{(n)}$ under Conditions~\ref{b_methods} and~\ref{c_methods} of $\mathcal{S}^{(n)}$.
Therefore, by choosing $\sigma_m$ as a state minimizing $\min_{\sigma\in\mathcal{S}^{(m)}}D\left(\rho^{\otimes m}\middle\|\sigma\right)$ and also taking the limit $m\to\infty$,
we obtain from Lemma~\ref{L5_methods}
\begin{align}
&\limsup_{n\to \infty}-\frac{1}{n} \log\max_{\sigma\in \mathcal{S}^{(n)}} \beta_\epsilon \left(\rho^{\otimes n}\middle\| \sigma\right) \nonumber\\
&\leq\lim_{m\to \infty}\frac{1}{m}D\left(\rho^{\otimes m}\middle\|\sigma_m\right)=
\lim_{n\to \infty}\frac{1}{n}\min_{
\sigma \in\mathcal{S}^{(n)}
}
D\left(\rho^{\otimes n}\middle\| \sigma\right),
\label{BVC_methods}
\end{align}
which shows the inequality~\eqref{XBI_leq} for any $\epsilon\in[0,1)$.
Thus, recalling Lemma~\ref{L1_methods}, we see that the inequality $\leq$ in~\eqref{ZXI_methods} has been shown for any $\epsilon\in[0,1)$.

Therefore, the key part of the generalized quantum Stein's lemma is the proof of the opposite inequality~\eqref{XBI_geq}, i.e., the direct part (see also Proposition~\ref{prp:direct} in Supplementary Information). 
We here present a proof of the direct part by contradiction, while we also provide a constructive proof in Sec.~\ref{S7} of Supplementary information.
For the proof by contradiction, assume that for any $\epsilon\in(0,1)$,
\begin{align}
&\liminf_{n\to \infty}-\frac{1}{n}\max_{\sigma \in\mathcal{S}^{(n)}}
\log \beta_\epsilon \left(\rho^{\otimes n}\middle\| \sigma\right)\nonumber\\
&<\lim_{n\to \infty}\frac{1}{n}\min_{
\sigma \in\mathcal{S}^{(n)}
}
D\left(\rho^{\otimes n}\middle\| \sigma\right).
\label{BHC_assumption}
\end{align}
We write the left-hand side as
\begin{align}
\label{eq:R_1_methods}
R_{1,\epsilon}\coloneqq\liminf_{n\to \infty}-\frac{1}{n}\max_{\sigma \in\mathcal{S}^{(n)}}
\log \beta_\epsilon \left(\rho^{\otimes n}\middle\| \sigma\right).
\end{align}
To reach the contradiction, we prepare the following lemma, which is shown in Sec~\ref{S7} of Supplementary Information.

\begin{lemma}[Lemma~\ref{L6} in Supplementary Information]
\label{L6_methods}
Given any sequence $\qty{\sigma_n \in\mathcal{S}^{(n)}}_n$ of states in the sets $\mathcal{S}^{(n)}$ satisfying Conditions~\ref{a_methods},~\ref{b_methods}, and~\ref{c_methods}, suppose that
\begin{align}
R_2\coloneqq\liminf_{n\to \infty}\frac{1}{n}
D\left(\rho^{\otimes n}\middle\| \sigma_n\right)
> R_{1,\epsilon}, 
\end{align}
where $R_{1,\epsilon}$ is defined as~\eqref{eq:R_1_methods} for any $\epsilon\in(0,1)$.
Then, for any fixed parameter $\tilde{\epsilon} \in (0,\epsilon)$, there exists a sequence $\qty{\sigma_n' \in\mathcal{S}^{(n)}}_n$ of states such that
\begin{align}
\liminf_{n\to \infty}\frac{1}{n}
D\left(\rho^{\otimes n}\middle\| \sigma_n'\right) -R_{1,\epsilon}
&\leq 
 (1-\tilde{\epsilon})(R_2-R_{1,\epsilon}).
 \label{NNIT_methods}
\end{align}
\end{lemma}

Under the assumption~\eqref{BHC_assumption}, we apply Lemma~\ref{L6_methods} to the optimal sequence $\qty{\sigma_n \in \mathcal{S}^{(n)}}_n$ of states that achieve the minimum of $\min_{\sigma \in \mathcal{S}^{(n)}} D\left(\rho^{\otimes n} \middle\| \sigma\right)$ for each $n$.
This yields an updated sequence $\qty{\sigma_n^{\prime} \in \mathcal{S}^{(n)}}_n$ such that
\begin{align}
&\liminf_{n \to \infty} \frac{1}{n} D\left(\rho^{\otimes n} \middle| \sigma_n'\right)
	-	\liminf_{n \to \infty} \frac{1}{n} D\left(\rho^{\otimes n} \middle| \sigma_n\right)\notag\\
&\leq -\tilde{\epsilon} \qty(R_2 - R_{1,\epsilon})\\
&< 0,
\end{align}
where the first inequality follows from~\eqref{NNIT_methods}.
However, this contradicts the optimality of the choice of $\qty{\sigma_n \in\mathcal{S}^{(n)}}_n$.
Therefore, we have
\begin{align}
&\liminf_{n\to \infty}-\frac{1}{n}\max_{\sigma \in\mathcal{S}^{(n)}}
\log \beta_\epsilon \left(\rho^{\otimes n}\middle\| \sigma\right)\nonumber\\
&\geq\lim_{n\to \infty}\frac{1}{n}\min_{
\sigma \in\mathcal{S}^{(n)}
}
D\left(\rho^{\otimes n}\middle\| \sigma\right),
\label{BHC_methods}
\end{align}
showing the inequality~\eqref{XBI_geq} for any $\epsilon\in(0,1)$.
The combination of~\eqref{XBI_leq} and~\eqref{XBI_geq} completes the proof of~\eqref{XBI} and thus~\eqref{ZXI_methods} in Theorem~\ref{TH1} for any $\epsilon\in(0,1)$.

Finally, we sketch the proof of Lemma~\ref{L6_methods}, which requires our key techniques for addressing the non-IIDness. 
Fix a parameter
\begin{align}
    \epsilon_0\coloneqq\frac{\epsilon-\tilde{\epsilon}}{1-\epsilon}\qty(R_2-R_{1,\epsilon}).
\end{align}
For this fixed $\epsilon_0$, we choose a sufficiently large integer $m$ such that
\begin{align}
\frac{1}{m}
D\left(\rho^{\otimes m}\middle\| \sigma_{m}\right) \le R_2+ \epsilon_0.
\end{align}
Then, using the states
$\sigma_{n}^\ast\in\argmax_{\sigma \in \mathcal{S}^{(n)}} \beta_\epsilon\left(\rho^{\otimes n}\middle\| \sigma\right)$ and
$\sigma_{n}=\sigma_{m}^{\otimes l}\otimes \sigma_\mathrm{full}^{\otimes (n-lm)}$
(as in Lemma~\ref{L5_methods}, where $l$ is the integer satisfying $lm \le n < (l+1)m$),
we construct
\begin{align}
\sigma_n'\coloneqq\frac{1}{3}\qty(\sigma_{n}^\ast+ \sigma_{n}+ \sigma_\mathrm{full}^{\otimes n}
)\in\mathcal{S}^{(n)},
\end{align}
where $\sigma_\mathrm{full}$ is the full-rank state in $\mathcal{S}^{(1)}$.

We then introduce the pinching map ${\cal E}_n$ as follows; here, roughly speaking, the pinching is a technique for making the relevant operators commutative without significantly affecting the quantum relative entropy---up to negligible terms that vanish under regularization.
With $\lambda_\mathrm{full}\in(0,1]$ denoting the constant representing the minimum eigenvalue of the full-rank state $\sigma_{\mathrm{full}}$ in $\mathcal{S}^{(1)}$,
we have
\begin{align}
    \sigma_n' \geq \frac{\sigma_\mathrm{full}^{\otimes n}}{3}\geq \frac{\lambda_\mathrm{full}^n}{3}\mathds{1}=e^{-nc_n}\mathds{1},
\end{align}
where $c_n$ is an $O(1)$ quantity given by
$c_n\coloneqq \log\frac{1}{\lambda_\mathrm{full}} +\frac{\log 3}{n}$.
We write the spectral decomposition of $\sigma_n'$ as
\begin{align}
    \sigma_n'=\sum_{j} \lambda_j^\prime E_j^\prime,
\end{align}
where $\lambda_j^\prime\neq\lambda_{j'}^\prime$ for $j\neq j'$, and $E_j^\prime$ is the projection onto the eigenspace of the eigenvalue $\lambda_j'$.
Then, using a function
\begin{align}
f_n(\lambda)\coloneqq  \left\lceil \frac{\log \lambda + nc_n}{c_n} \right\rceil c_n
- nc_n,
\end{align}
we modify $\sigma_n'$ into a state
\begin{align}
\tilde{\sigma}_n^{\prime}\coloneqq \frac{\sum_{j} e^{f_n\qty(\lambda_j^\prime)} E_j^\prime}{\Tr[\sum_{j} e^{f_n\qty(\lambda_j^\prime)} E_j^\prime]}.
\end{align}
We define the pinching map ${\cal E}_n$ with respect to the state $\tilde{\sigma}_n^{\prime}$; that is, we write the spectral decomposition of $\tilde{\sigma}_n^{\prime}$
as
\begin{align}
    \tilde{\sigma}_n^{\prime}=\sum_{j} \tilde{\lambda}_j^{\prime} \tilde{E}_j^{\prime},
\end{align}
and define ${\cal E}_n$ as
\begin{align}
{\cal E}_n(\sigma)\coloneqq \sum_{j=0}^{d_n-1} \tilde{E}_j^{\prime} \sigma \tilde{E}_j^{\prime},
\end{align}
where $d_n$ represents the number of projections $\tilde{E}_j^{\prime}$ in the pinching map ${\cal E}_n$, turning out to be upper bounded by a polynomial of $n$, i.e., $d_n\leq n+1$,
as shown by~\eqref{eq:d_n} in Sec.~\ref{S7} of Supplementary Information. 
Using the facts that $\left|D\left(\rho \middle\| \tilde{\sigma}_n'\right)-D\left(\rho \middle\| \sigma_n'\right)\right| 
\leq
c_n$ and $D\left(\rho^{\otimes n}\middle\|\mathcal{E}_n\qty(\rho^{\otimes n})\right)\leq\log d_n\leq \log(n+1)$,
we then show that
\begin{align}
\liminf_{n\to \infty}\frac{1}{n}
D\left(\rho^{\otimes n}\middle\| \sigma_n'\right)
=\liminf_{n\to \infty}\frac{1}{n}
D({\cal E}_n\left(\rho^{\otimes n})\middle\| \tilde{\sigma}_n^{\prime}\right),
\end{align}
as in~\eqref{ZX6} of Sec.~\ref{S7} of Supplementary Information
Therefore,~\eqref{NNIT_methods} will follow from the relation
\begin{align}
\liminf_{n\to \infty}\frac{1}{n}
D\left({\cal E}_n(\rho^{\otimes n})\middle\| \tilde{\sigma}_n^{\prime}\right) -R_{1,\epsilon}
&\le 
 (1-\tilde{\epsilon})(R_2-R_{1,\epsilon}),\Label{NNI2T}
\end{align}
where, importantly, ${\cal E}_n(\rho^{\otimes n})$ and $\tilde{\sigma}_n^{\prime}$ commute due to the pinching.

To prove~\eqref{NNI2T}, we employ the information spectrum method~\cite{4069150}, which, roughly speaking, is a technique for analyzing bounds on cumulative probability distributions based on the performance of hypothesis testing.
In particular, for any $\epsilon_1\in(0,1]$, we show the relations
\begin{align}
\liminf_{n\to \infty}-\frac{1}{n}
\log \beta_{\epsilon} \left(\mathcal{E}_n\qty(\rho^{\otimes n})\middle\| \tilde{\sigma}_n'\right)
&\le
R_{1,\epsilon},\Label{ZP2}\\
\limsup_{n\to \infty}-\frac{1}{n}
\log \beta_{1-\epsilon_1} \left(\mathcal{E}_n\qty(\rho^{\otimes n})\middle\| \tilde{\sigma}_n'\right) 
&\le \frac{1}{m} D\left(\rho^{\otimes m}\middle\| \sigma_{m}\right) \nonumber\\
&\le R_2+ \epsilon_0, \Label{ZP1}
\end{align}
which are~\eqref{ZP2T} and~\eqref{ZP1T} in Sec.~\ref{S7} of Supplementary Information.
For any $\epsilon_2>0$, we then define two projections
\begin{align}
P_{n,1}&\coloneqq\qty{ {\cal E}_n\qty(\rho^{\otimes n}) \geq e^{n (R_{1,\epsilon}+\epsilon_2) }\tilde{\sigma}_n' },\\
P_{n,2}&\coloneqq\qty{ {\cal E}_n\qty(\rho^{\otimes n}) \geq e^{n (R_2+ \epsilon_0+\epsilon_2) }\tilde{\sigma}_n' },
\end{align}
where $\{A\geq B\}$ is a projection onto the eigenspaces of nonnegative eigenvalues of the operator $A-B$.
From~\eqref{ZP2} and~\eqref{ZP1}, using the technique from the information spectrum method~\cite{4069150}, we obtain
\begin{align}
\label{eq:liminf_epsilon}
\liminf_{n \to \infty} \Tr[P_{n,1}{\cal E}_n\qty(\rho^{\otimes n})]
&\leq 1-\epsilon,\\
\label{eq:limsup_epsilon}
\limsup_{n \to \infty}\Tr[P_{n,2}{\cal E}_n\qty(\rho^{\otimes n})] &\leq \epsilon_1,
\end{align}
as shown by~\eqref{eq:projection_limsup} and~\eqref{eq:projection_liminf} in Sec.~\ref{S7} of Supplementary Information.
To provide an intuitive understanding of this method, note that since ${\cal E}_n(\rho^{\otimes n})$ and $\tilde{\sigma}_n'$ commute,
we can interpret 
$\frac{1}{n}(\log {\cal E}_n\qty(\rho^{\otimes n}) -\log \tilde{\sigma}_n')$ as a classical random variable.
Moreover, $\frac{1}{n}D\left({\cal E}_n(\rho^{\otimes n})\middle\| \tilde{\sigma}_n'\right)$ corresponds to its expectation value under the probability distribution induced by the state ${\cal E}_n\qty(\rho^{\otimes n})$.
The commutativity of $P_{n,1}$ and $P_{n,2}$ implies  $P_{n,1} \geq P_{n,2}$, owing to the relation $R_{1,\epsilon}+\epsilon_2\leq R_2+\epsilon_0+\epsilon_2$.
The probabilities
$\Tr [P_{n,1}{\cal E}_n\qty(\rho^{\otimes n})]$ and
$\Tr [P_{n,2}{\cal E}_n\qty(\rho^{\otimes n})]$
then characterize the cumulative distribution function, satisfying~\eqref{eq:liminf_epsilon} and~\eqref{eq:limsup_epsilon}.

Therefore, 
as shown in~\eqref{BN2T} of Supplementary Information,
we have
\begin{align}
&\frac{1}{n}D\left({\cal E}_n(\rho^{\otimes n})\middle\| \tilde{\sigma}_n'\right) \nonumber\\
&\leq \Tr [\qty(\mathds{1}-P_{n,1}){\cal E}_n\qty(\rho^{\otimes n})](R_{1,\epsilon}+\epsilon_2)\nonumber\\
&\quad+\Tr [\qty(P_{n,1}-P_{n,2}){\cal E}_n\qty(\rho^{\otimes n})](R_2+ \epsilon_0+\epsilon_2)\nonumber\\
&\quad+\Tr [P_{n,2}{\cal E}_n\qty(\rho^{\otimes n})]c_n^\prime\\
&= \qty(R_{1,\epsilon}+\epsilon_2)\nonumber\\
&\quad+\Tr [ P_{n,1}{\cal E}_n(\rho)] \qty(\qty(R_2+ \epsilon_0+\epsilon_2)-\qty(R_{1,\epsilon}+\epsilon_2))\nonumber\\
&\quad+\Tr [ P_{n,2}{\cal E}_n(\rho)](c_n^\prime-\qty(R_2+ \epsilon_0+\epsilon_2)),
\Label{BN2_methods}
\end{align}
where $c_n'\coloneqq\max\qty{c_n+\frac{c_n}{n},R_2+\epsilon_0+\epsilon_2}=O(1)$.
From~\eqref{eq:liminf_epsilon},~\eqref{eq:limsup_epsilon}, and~\eqref{BN2_methods}, 
taking the limit $n\to \infty$ and then $\epsilon_1,\epsilon_2 \to 0$, we obtain
\begin{align}
&\liminf_{n\to \infty}\frac{1}{n}D\left({\cal E}_n\qty(\rho^{\otimes n})\middle\| \tilde{\sigma}_n'\right)\nonumber\\
&\leq R_{1,\epsilon} +\qty(1-\epsilon)(R_2+\epsilon_0-R_{1,\epsilon}),
\end{align}
which implies~\eqref{NNI2T} and thus completes the proof of Lemma~\ref{L6_methods}, i.e., the key lemma in our proof of the generalized quantum Stein's lemma.

\paragraph*{The second law of QRTs for states and classical-quantum (CQ) channels.}
Having proven the generalized quantum Stein's lemma, we now explore its implications for QRTs\@.
In the main text, we have focused on QRTs for states, i.e., static resources~\cite{Kuroiwa2020,Chitambar2018}.
However, the scope of QRTs is even broader than quantum states.
For example, applications of QRTs include communication scenarios to quantify and analyze properties of quantum channels such as their capacities~\cite{Takagi2020}, for which one needs to consider QRTs for dynamical resources~\cite{Chitambar2018} rather than static resources.
As in the case of QRTs for static resources presented in the main text, special types of QRTs for dynamical resources may have the second law, such as that of athermality~\cite{PhysRevLett.122.200601} and that involving the quantum reverse Shannon theorem~\cite{1035117,berta2011quantum,6757002}.
However, in general, formulating QRTs with the second law for dynamical resources remains challenging due to the inherent difficulty in analyzing quantum channels compared to quantum states.

To address this challenge, we here introduce and study a framework of QRTs for classical-quantum (CQ) channels, which are a fundamental class of quantum channels and well-motivated for applications in communication scenarios~\cite{hayashi2016quantum}.
A measure-and-prepare channel (also known as an entanglement breaking channel)~\cite{holevo1998quantum,doi:10.1142/S0129055X03001709} is a CPTP linear map in the form of
\begin{align}
\label{eq:measure-and-prepare_channel}
    \mathcal{N}\qty(\rho)=\sum_k\Tr[E_k \rho]\rho_k,
\end{align}
where $\{E_k\}_k$ ($\sum_k E_k=\mathds{1}$) is a POVM, and $\rho_k$ for each $k$ is a quantum state.
In this work, a CQ channel is defined as a measure-and-prepare channel in the form of
\begin{align}
\label{eq:CQ_measure-and-prepare_channel}
    \mathcal{N}\qty(\rho)=\sum_k\Tr[\ket{k}\bra{k} \rho]\rho_k.
\end{align}
A CQ channel with one-dimensional input is considered a quantum state.
As presented below in more detail, this framework successfully covers QRTs for states (hence the existing results in Refs.~\cite{Brand_o_2008,Brandao2010,brandao2010reversible,Brandao2015}) as a special case and generalizes them to QRTs for the widely studied class of quantum channels.
To further motivate this setting, in Sec.~\ref{sec:examples} of the Supplementary Information, we present examples of QRTs for both states and CQ channels that fall within the scope of our setting, along with their applications to fundamental tasks in quantum information theory.

To study the conversion of dynamical resources,
we begin with recalling the conventional formulation of QRTs for quantum-quantum (QQ) channels~\cite{Chitambar2018} and then formulate those for CQ channels as its special case.
We write a set of superchannels transforming (QQ) channels from $\mathcal{C}\qty(\mathcal{H}_\mathrm{in}^{(1)}\to\mathcal{H}_\mathrm{out}^{(1)})$ to $\mathcal{C}\qty(\mathcal{H}_\mathrm{in}^{(2)}\to\mathcal{H}_\mathrm{out}^{(2)})$ as
$\mathcal{C}\qty(\qty(\mathcal{H}_\mathrm{in}^{(1)}\to\mathcal{H}_\mathrm{out}^{(1)})\to\qty(\mathcal{H}_\mathrm{in}^{(2)}\to\mathcal{H}_\mathrm{out}^{(2)}))$,
where a superchannel
\begin{align}
\label{eq:quantum_comb}
    \Theta\in\mathcal{C}\qty(\qty(\mathcal{H}_\mathrm{in}^{(1)}\to\mathcal{H}_\mathrm{out}^{(1)})\to\qty(\mathcal{H}_\mathrm{in}^{(2)}\to\mathcal{H}_\mathrm{out}^{(2)}))
\end{align}
is defined as a linear map $\Theta$ that converts any CPTP linear map
\begin{align}
    \mathcal{N}\in\mathcal{C}\qty(\qty(\mathcal{H}_\mathrm{in}^{(\mathrm{aux})}\otimes\mathcal{H}_\mathrm{in}^{(1)})\to\qty(\mathcal{H}_\mathrm{out}^{(\mathrm{aux})}\otimes\mathcal{H}_\mathrm{out}^{(1)}))
\end{align}
into a CPTP linear map
\begin{align}
    \qty(\id\otimes\Theta)\qty(\mathcal{N})\in\mathcal{C}\qty(\qty(\mathcal{H}_\mathrm{in}^{(\mathrm{aux})}\otimes\mathcal{H}_\mathrm{in}^{(2)})\to\qty(\mathcal{H}_\mathrm{out}^{(\mathrm{aux})}\otimes\mathcal{H}_\mathrm{out}^{(2)})),
\end{align}
where $\mathcal{H}_\mathrm{in}^{(\mathrm{aux})}$ and $\mathcal{H}_\mathrm{out}^{(\mathrm{aux})}$ represent any auxiliary systems, and $\id$ is the identity linear supermap that converts any channel $\mathcal{N}^{(\mathrm{aux})}\in\mathcal{C}\qty(\mathcal{H}_\mathrm{in}^{(\mathrm{aux})}\to\mathcal{H}_\mathrm{out}^{(\mathrm{aux})})$ to itself (such $\Theta$ is also represented as a quantum comb~\cite{PhysRevLett.101.060401,PhysRevA.80.022339}).
We may write a set of channels or superchannels as $\mathcal{C}$ if the argument is obvious from the context.
For $j\in\{1,2\}$ and $x\in\{\mathrm{in},\mathrm{out}\}$, we let $\mathds{1}_x^{(j)}$ denote the identity operator on $\mathcal{H}_x^{(j)}$, and $d_x^{(j)}\coloneqq\dim\qty(\mathcal{H}_x^{(j)})$. 
When $d_\mathrm{in}^{(j)}=1$ ($j\in\{1,2\}$), the superchannels reduce to channels (CPTP linear maps) transforming states from $\mathcal{H}_\mathrm{out}^{(1)}$ to $\mathcal{H}_\mathrm{out}^{(2)}$.

For a channel $\mathcal{N}\in\mathcal{C}(\mathcal{H}_\mathrm{in}\to\mathcal{H}_\mathrm{out})$, we let
\begin{align}
\label{eq:Choi}
    J(\mathcal{N})\coloneqq(\id\otimes\mathcal{N})\qty(\Phi_d)\in\mathcal{D}\qty(\mathcal{H}_\mathrm{in}\otimes\mathcal{H}_\mathrm{out})
\end{align}
denote the (normalized) Choi state, where $\id$ is the identity map,
$\Phi_d\coloneqq\ket{\Phi_d}\bra{\Phi_d}$, $\ket{\Phi_d}\coloneqq\frac{1}{\sqrt{d}}\sum_{k=0}^{d-1}\ket{k}\otimes\ket{k}$,
and $d=\dim(\mathcal{H}_\mathrm{in})$.
For any linear map $\mathcal{N}$, the Choi operator $J\qty(\mathcal{N})$ is defined as the operator in the same way as~\eqref{eq:Choi}, and  for any linear map $\Theta$ of $\mathcal{N}$, the corresponding transformation $\tilde{\Theta}$ of Choi operators 
\begin{align}
\label{eq:linear_map_theta}
    &\tilde{\Theta}\coloneqq J\circ\Theta\circ J^{-1}: J(\mathcal{N})\mapsto  J\qty(\Theta\qty(\mathcal{N}))
\end{align}
is a linear map of operators.
As shown in Ref.~\cite[(39) and Theorem~5]{PhysRevA.80.022339}, $\Theta$ is a superchannel 
defined in~\eqref{eq:quantum_comb}
if and only if the linear map $\tilde{\Theta}$ in~\eqref{eq:linear_map_theta} is a CP linear map, and its Choi operator $J_2\coloneqq J\qty(\tilde{\Theta})\geq 0$ satisfies for some operator $J_1\geq 0$
\begin{align}
\label{eq:quantum_comb_condition}
    \Tr_{\mathrm{out}{(2)}}[J_2]=\frac{\mathds{1}_{\mathrm{out}}^{(1)}}{d_{\mathrm{out}}^{(1)}}\otimes J_1,\quad
    \Tr_{\mathrm{in}{(1)}}[J_1]=\frac{\mathds{1}_{\mathrm{in}}^{(2)}}{d_{\mathrm{in}}^{(2)}},
\end{align}
where $\Tr_{\mathrm{out}{(2)}}$ and $\Tr_{\mathrm{in}{(2)}}$ are the partial traces over $\mathcal{H}_\mathrm{out}^{(2)}$ and $\mathcal{H}_\mathrm{in}^{(1)}$, respectively.
By contrast, the CP linear map $\tilde{\Theta}$ is trace-non-increasing if and only if its Choi operator $J_2\geq 0$ satisfies
\begin{align}
\label{eq:TP_condition}
    \Tr_{\mathrm{in}(2),\mathrm{out}(2)}\qty[J_2]\leq\frac{\mathds{1}_\mathrm{in}^{(1)}}{d_{\mathrm{in}}^{(1)}}\otimes\frac{\mathds{1}_\mathrm{out}^{(1)}}{d_{\mathrm{out}}^{(1)}},
\end{align}
where $\Tr_{\mathrm{in}(2),\mathrm{out}(2)}$ is the partial trace over $\mathcal{H}_\mathrm{in}^{(2)}\otimes\mathcal{H}_\mathrm{out}^{(2)}$; hence, in general, the CP linear map $\tilde{\Theta}$ in~\eqref{eq:linear_map_theta} for a superchannel $\Theta$ may increase the trace of operators.
Also, as shown in Ref.~\cite[Theorem~6]{PhysRevA.80.022339}, $\Theta$ is a superchannel if and only if it can be implemented for $\mathcal{N}\in\mathcal{C}\qty(\mathcal{H}_\mathrm{in}^{(1)}\to\mathcal{H}_\mathrm{out}^{(1)})$ as
\begin{align}
\label{eq:implementability}
    \Theta(\mathcal{N})=\mathcal{N}_\mathrm{post}\circ\qty(\id\otimes\mathcal{N})\circ\mathcal{N}_\mathrm{pre}, 
\end{align}
where $\id\in\mathcal{C}\qty(\mathcal{H}^\mathrm{(aux)}\to\mathcal{H}^\mathrm{(aux)})$ is the identity map on an auxiliary system $\mathcal{H}^\mathrm{(aux)}$, and $\mathcal{N}_\mathrm{pre}\in\mathcal{C}\qty(\mathcal{H}_\mathrm{in}^\mathrm{(2)}\to\mathcal{H}^\mathrm{(aux)}\otimes\mathcal{H}_\mathrm{in}^\mathrm{(1)})$ and $\mathcal{N}_\mathrm{post}\in\mathcal{C}\qty(\mathcal{H}^\mathrm{(aux)}\otimes\mathcal{H}_\mathrm{out}^\mathrm{(1)}\to\mathcal{H}_\mathrm{out}^\mathrm{(2)})$ are some CPTP linear maps for pre- and post-processing, respectively.
Due to the equivalence to physical implementability shown in~\eqref{eq:implementability}, it is essential to formulate the framework of QRTs using superchannels that satisfy the condition specified in their definition~\eqref{eq:quantum_comb}.

In our framework of QRTs, we specifically consider a subclass of superchannels that transform CQ channels into CQ channels, where the CQ channels are defined as measure-and-prepare channels---a special case of QQ channels---as given in~\eqref{eq:CQ_measure-and-prepare_channel}.
Let $\mathcal{C}_\mathrm{CQ}\subset\mathcal{C}$ denote the set of CQ channels in the form of~\eqref{eq:CQ_measure-and-prepare_channel}.
We write a set of superchannels transforming CQ channels from $\mathcal{C}_\mathrm{CQ}\qty(\mathcal{H}_\mathrm{in}^{(1)}\to\mathcal{H}_\mathrm{out}^{(1)})$ to $\mathcal{C}_\mathrm{CQ}\qty(\mathcal{H}_\mathrm{in}^{(2)}\to\mathcal{H}_\mathrm{out}^{(2)})$ as
\begin{align}
\label{eq:superchannel_measure-and-prepare}
    &\mathcal{C}_\mathrm{CQ}\qty(\qty(\mathcal{H}_\mathrm{in}^{(1)}\to\mathcal{H}_\mathrm{out}^{(1)})\to\qty(\mathcal{H}_\mathrm{in}^{(2)}\to\mathcal{H}_\mathrm{out}^{(2)}))\nonumber\\
    &\subset \mathcal{C}\qty(\qty(\mathcal{H}_\mathrm{in}^{(1)}\to\mathcal{H}_\mathrm{out}^{(1)})\to\qty(\mathcal{H}_\mathrm{in}^{(2)}\to\mathcal{H}_\mathrm{out}^{(2)})), 
\end{align}
which is a subset of the set of superchannels in~\eqref{eq:quantum_comb} and may also be written as $\mathcal{C}_\mathrm{CQ}$ if the argument is obvious from the context.
For any measure-and-prepare channel $\mathcal{N}$, the Choi state $J(\mathcal{N})$ of $\mathcal{H}_\mathrm{in}\otimes\mathcal{H}_\mathrm{out}$ in~\eqref{eq:Choi} is a separable state between subsystems $\mathcal{H}_\mathrm{in}$ and $\mathcal{H}_\mathrm{out}$~\cite[Theorem~4]{doi:10.1142/S0129055X03001709}; in particular, for any CQ channel $\mathcal{N}$ in~\eqref{eq:CQ_measure-and-prepare_channel}, its Choi state is a CQ state
\begin{align}
\label{eq:choi_CQ}
    J\qty(\mathcal{N})=\sum_{k=0}^{d-1}\frac{1}{d}\ket{k}\bra{k}\otimes\rho_k\in\mathcal{D}_\mathrm{CQ}\qty(\mathcal{H}_\mathrm{in}\otimes\mathcal{H}_\mathrm{out}),
\end{align}
where $d=\dim\qty(\mathcal{H}_\mathrm{in})$, and $\mathcal{D}_\mathrm{CQ}\qty(\mathcal{H}_\mathrm{in}\otimes\mathcal{H}_\mathrm{out})$ denotes the set of CQ states of $\mathcal{H}_\mathrm{in}\otimes\mathcal{H}_\mathrm{out}$.
Then, any superchannel $\Theta\in\mathcal{C}_\mathrm{CQ}$ transforming CQ channels into CQ channels preserves these forms in the sense that if $J\qty(\mathcal{N})$ is in the form of~\eqref{eq:choi_CQ}, then $J\qty(\Theta\qty(\mathcal{N}))$ is also in the form of~\eqref{eq:choi_CQ}. 

QRTs for CQ channels are defined by specifying, as free operations, a family
\begin{align}
&\mathcal{O}\qty(\qty(\mathcal{H}_\mathrm{in}^{(1)}\to\mathcal{H}_\mathrm{out}^{(1)})\to\qty(\mathcal{H}_\mathrm{in}^{(2)}\to\mathcal{H}_\mathrm{out}^{(2)}))\nonumber\\
&\subset\mathcal{C}_\mathrm{CQ}\qty(\qty(\mathcal{H}_\mathrm{in}^{(1)}\to\mathcal{H}_\mathrm{out}^{(1)})\to\qty(\mathcal{H}_\mathrm{in}^{(2)}\to\mathcal{H}_\mathrm{out}^{(2)})) 
\end{align}
of superchannels transforming CQ channels into CQ channels, which we may write $\mathcal{O}$ if the argument is obvious from the context.
Similar to the set of free states in the main text, a set $\mathcal{F}(\mathcal{H}_\mathrm{in}\to\mathcal{H}_\mathrm{out})\subset \mathcal{C}_\mathrm{CQ}(\mathcal{H}_\mathrm{in}\to\mathcal{H}_\mathrm{out})$ of free CQ channels is given by those obtained from any given CQ channel (regardless of how non-resourceful the given CQ channel is) by some superchannels in $\mathcal{O}$, i.e., for given $\mathcal{O}$,
\begin{align}
\label{eq:free_set}
    &\mathcal{F}(\mathcal{H}_\mathrm{in}\to\mathcal{H}_\mathrm{out})\coloneqq\left\{\mathcal{N}_\mathrm{free}\in\mathcal{C}_\mathrm{CQ}(\mathcal{H}_\mathrm{in}\to\mathcal{H}_\mathrm{out}):\right.\nonumber\\
    &\quad\left.\forall\mathcal{N}\in\mathcal{C}_\mathrm{CQ},\exists\Theta\in\mathcal{O}~\text{such that}~\mathcal{N}_\mathrm{free}=\Theta\qty(\mathcal{N})\right\}.
\end{align}
We may also omit the argument to write the set of free CQ channels as $\mathcal{F}$.
When $\dim\qty(\mathcal{H}_\mathrm{in}^{(j)})=1$ ($j\in\{1,2\}$) for CQ channels, this formulation reduces to QRTs for states presented in the main text.

We generalize the properties of QRTs for states presented in the main text to those for CQ channels.
We consider the QRTs with their sets of free CQ channels satisfying the following properties.
\begin{enumerate}[label={CQ\arabic*}]
    \item \label{p1:methods}The set $\mathcal{F}(\mathcal{H}_\mathrm{in}\to\mathcal{H}_\mathrm{out})$ is closed and convex.
    \item \label{p3:methods}For any $\mathcal{N}_\mathrm{free}\in\mathcal{F}\qty(\mathcal{H}_\mathrm{in}\to\mathcal{H}_\mathrm{out})$ and $\mathcal{N}_\mathrm{free}^\prime\in\mathcal{F}(\mathcal{H}_\mathrm{in}^\prime\to\mathcal{H}_\mathrm{out}^\prime)$, it holds that $\mathcal{N}_\mathrm{free}\otimes\mathcal{N}_\mathrm{free}^\prime\in\mathcal{F}(\mathcal{H}_\mathrm{in}\otimes\mathcal{H}_\mathrm{in}^\prime\to\mathcal{H}_\mathrm{out}\otimes\mathcal{H}_\mathrm{out}^\prime)$.
    \item \label{p4:methods}For each $\mathcal{H}_\mathrm{in}$ and $\mathcal{H}_\mathrm{out}$, $\mathcal{F}(\mathcal{H}_\mathrm{in}\to\mathcal{H}_\mathrm{out})$ contains $\mathcal{N}_\mathrm{full}\in\mathcal{F}(\mathcal{H}_\mathrm{in}\to\mathcal{H}_\mathrm{out})$ that outputs a full-rank state $\rho_\mathrm{full}$ of $\mathcal{H}_\mathrm{out}$ for any input; that is, its Choi state is $J(\mathcal{N}_\mathrm{full})=\qty(\mathds{1}/d)\otimes\rho_\mathrm{full}>0$, where $d=\dim(\mathcal{H}_\mathrm{in})$.
\end{enumerate}
Note that, following the conventional formulation of QRTs~\cite{Kuroiwa2020,Chitambar2018}, we have operationally formulated QRTs by specifying free operations $\mathcal{O}$ while Properties~\ref{p1:methods}--\ref{p4:methods} to be used for our analysis are imposed on the set $\mathcal{F}$ of free CQ channels derived from $\mathcal{O}$; in our analysis, $\mathcal{O}$ does not explicitly appear, but QRTs must be appropriately specified via any $\mathcal{O}$ with its $\mathcal{F}$ in~\eqref{eq:free_set} satisfying Properties~\ref{p1:methods}--\ref{p4:methods}.
As explained in the main text, the use of $\mathcal{O}$ itself is insufficient for the second law of QRTs, so we will also introduce a larger class $\tilde{\mathcal{O}}$ of operations below.
Properties~\ref{p1:methods}--\ref{p4:methods} reduce to those of QRTs for states shown in the main text when $\dim(\mathcal{H}_\mathrm{in})=1$.

We introduce generalizations of resource measures presented in the main text. 
In place of the relative entropy of resource, for $\mathcal{N}\in\mathcal{C}_\mathrm{CQ}\qty(\mathcal{H}_\mathrm{in}\to\mathcal{H}_\mathrm{out})$, we define a function
\begin{align}
\label{eq:relative_entropy_of_resource}
R_\mathrm{R}\qty(\mathcal{N})\coloneqq\min_{\mathcal{N}_\mathrm{free}\in\mathcal{F}\qty(\mathcal{H}_\mathrm{in}\to\mathcal{H}_\mathrm{out})}D(J(\mathcal{N})\|J(\mathcal{N}_\mathrm{free})),
\end{align}
and its regularization
\begin{align}
\label{eq:regularized_relative_entropy_resource_methods}
R_\mathrm{R}^\infty\qty(\mathcal{N})\coloneqq\lim_{n\to\infty}\frac{1}{n}R_\mathrm{R}\qty(\mathcal{N}^{\otimes n}).   
\end{align}
For $\mathcal{N}\in\mathcal{C}_\mathrm{CQ}$, generalized robustness is defined as
\begin{align}
\label{eq:R_G_methods}
    R_\mathrm{G}\qty(\mathcal{N})&\coloneqq\min\left\{s\geq 0:\frac{\mathcal{N}+s\mathcal{N}^\prime}{1+s}\in\mathcal{F}\qty(\mathcal{H}_\mathrm{in}\to\mathcal{H}_\mathrm{out}),\right.\nonumber\\
    &\quad\left.{\mathcal{N}^\prime\in\mathcal{C}_\mathrm{CQ}\qty(\mathcal{H}_\mathrm{in}\to\mathcal{H}_\mathrm{out})}\right\}.
\end{align}
These functions reduce to resource measures of QRTs for states defined in the main text when $\dim(\mathcal{H}_\mathrm{in})=1$.

Generalizing the asymptotically resource-non-generating operations in QRTs for states presented in the main text, we here introduce asymptotically free operations in QRTs for CQ channels.
We define a set $\tilde{\mathcal{O}}\Big(\qty(\mathcal{H}_\mathrm{in}^{(1)}\to\mathcal{H}_\mathrm{out}^{(1)})\to \qty(\mathcal{H}_\mathrm{in}^{(2)}\to\mathcal{H}_\mathrm{out}^{(2)})\Big)$ of asymptotically free operations as the set
\begin{align}
\tilde{\mathcal{O}}\coloneqq&\Big\{\Big\{\Theta_n\in\mathcal{C}_\mathrm{CQ}\Big(\qty(\mathcal{H}_\mathrm{in}^{(1)\otimes n}\to\mathcal{H}_\mathrm{out}^{(1)\otimes n})\to\nonumber\\
&\qquad\qty(\mathcal{H}_\mathrm{in}^{(2)\otimes n}\to\mathcal{H}_\mathrm{out}^{(2)\otimes n})\Big)\Big\}_{n=1,2,\ldots}:\nonumber\\
&\qquad\text{$\qty{\Theta_n}_n$ satisfies the following properties}\Big\}   
\end{align}
of all sequences of operations (i.e., superchannels transforming CQ channels into CQ channels, as in~\eqref{eq:superchannel_measure-and-prepare}) satisfying the following properties (we may write this set as $\tilde{\mathcal{O}}$ if the argument is obvious from the context).
\begin{description}
    \item[Asymptotically resource-non-generating property] Any sequence $\qty{\Theta_n}_n\in\tilde{\mathcal{O}}$ of operations in this set asymptotically generates no resource from any free CQ channels in terms of the generalized robustness, i.e., for any sequence 
    $\qty{\mathcal{N}_\mathrm{free}^{(n)}\in\mathcal{F}\qty(\mathcal{H}_\mathrm{in}^{(1)\otimes n}\to\mathcal{H}_\mathrm{out}^{(1)\otimes n})}_n$ of free CQ channels
    \begin{align}
    \label{eq:asymptotically_resource_non_generating_operations}
    \lim_{n\to\infty}
    R_\mathrm{G}\qty(\Theta_n\qty(\mathcal{N}_\mathrm{free}^{(n)}))=0,
    \end{align}
    where $R_\mathrm{G}$ is defined in~\eqref{eq:R_G_methods}.
    \item[Asymptotic continuity] For any two sequences $\qty{\mathcal{N}_n}_n$ and $\qty{\mathcal{N}_n^\prime}_n$ of CQ channels satisfying $\lim_{n\to\infty}\frac{1}{2}\|J\qty(\mathcal{N}_n)-J\qty(\mathcal{N}_n^\prime)\|_1=0$, any sequence $\qty{\Theta_n}_n\in\tilde{\mathcal{O}}$ of operations in this set satisfies
    \begin{align}
    \label{eq:condition_asymptotic_continuity}
        \lim_{n\to\infty}\frac{1}{2}\left\|J\qty(\Theta_n\qty(\mathcal{N}_n))-J\qty(\Theta_n\qty(\mathcal{N}_n^\prime))\right\|_1=0.
    \end{align}
\end{description}
Note that in the QRTs for states,~\eqref{eq:condition_asymptotic_continuity} is automatically satisfied so we may not need to impose~\eqref{eq:condition_asymptotic_continuity} explicitly, as in the main text.
More generally,~\eqref{eq:condition_asymptotic_continuity} also holds in the QRTs for CQ channels such that the CP linear map $\tilde{\Theta}_n: J\qty(\mathcal{N})\mapsto J\qty(\Theta_n\qty(\mathcal{N}))$ of Choi operators in~\eqref{eq:linear_map_theta} is trace-non-increasing for all $\Theta_n$ in $\qty{\Theta_n}_n\in\tilde{\mathcal{O}}$, since the trace distance does not increase under the CP trace-non-increasing linear map.
However, in general, this CP linear map $\tilde{\Theta}_n$ may increase the trace of operators as discussed in~\eqref{eq:quantum_comb_condition} and~\eqref{eq:TP_condition}, and hence, we require~\eqref{eq:condition_asymptotic_continuity} explicitly.
Under $\tilde{\mathcal{O}}$, the asymptotic conversion rate of parallel quantum channels is
\begin{align}
\label{eq:conversion_rate_methods}
    &r_{\tilde{\mathcal{O}}}\qty(\mathcal{N}_1\to\mathcal{N}_2)\coloneqq\sup\left\{r\geq 0:\exists\qty{\Theta_n}_n\in\tilde{\mathcal{O}},\right.\nonumber\\
    &\quad\left.\liminf_{n\to\infty}\frac{1}{2}\left\|J\qty(\Theta_n\qty(\mathcal{N}_1^{\otimes n}))-J\qty(\mathcal{N}_2^{\otimes \lceil rn\rceil})\right\|_1=0\right\}.
\end{align}
These definitions coincide with those defined for QRTs for states in the main text when $\dim(\mathcal{H}_\mathrm{in})=1$.

In this general framework of QRTs for CQ channels, we prove the second law of QRTs, as shown by the following theorem.
See Sec.~\ref{sec:second_law} of Supplementary Information for the details of the proof.

\begin{theorem}[Second law of QRTs for states and CQ channels]
\label{thm:second_law}
Given any family $\mathcal{F}$ of sets of free CQ channels satisfying Properties~\ref{p1:methods}--\ref{p4:methods},
for any CQ channels $\mathcal{N}_1\in\mathcal{C}_\mathrm{CQ}\qty(\mathcal{H}_\mathrm{in}^{(1)}\to\mathcal{H}_\mathrm{out}^{(1)})$ and $\mathcal{N}_2\in\mathcal{C}_\mathrm{CQ}\qty(\mathcal{H}_\mathrm{in}^{(2)}\to\mathcal{H}_\mathrm{out}^{(2)})$ satisfying $R_\mathrm{R}^\infty\qty(\mathcal{N}_j)>0$ ($j\in\{1,2\}$),
the asymptotic conversion rate~\eqref{eq:conversion_rate_methods} between $\mathcal{N}_1$ and $\mathcal{N}_2$ under $\tilde{\mathcal{O}}$  satisfying~\eqref{eq:asymptotically_resource_non_generating_operations} and~\eqref{eq:condition_asymptotic_continuity} is
\begin{align}
    r_{\tilde{\mathcal{O}}}\qty(\mathcal{N}_1\to\mathcal{N}_2)=\frac{R_\mathrm{R}^\infty\qty(\mathcal{N}_1)}{R_\mathrm{R}^\infty\qty(\mathcal{N}_2)},
\end{align}
where $R_\mathrm{R}^\infty$ is defined in~\eqref{eq:regularized_relative_entropy_resource_methods}.
\end{theorem}

To prove Theorem~\ref{thm:second_law}, as has already been observed in the cases of QRTs for states~\cite{Brand_o_2008,brandao2010reversible,Brandao2010,Brandao2015,berta2023gap,Lami2023}, we inevitably need to impose strong conditions on the operations $\tilde{\mathcal{O}}$, which are~\eqref{eq:asymptotically_resource_non_generating_operations} and~\eqref{eq:condition_asymptotic_continuity} in our framework.
For the proof, it is crucial to use the fact that $R_\mathrm{R}^\infty$ is defined using Choi states as in~\eqref{eq:relative_entropy_of_resource}.
For this $R_\mathrm{R}^\infty$, we can prove
\begin{align}
\label{eq:R_R_R_G_characterization}
R_\mathrm{R}^\infty(\mathcal{N}) &= 
\min_{\qty{\tilde{\mathcal{N}}_{n}}}\left\{
\lim_{n\to\infty}
\frac{1}{n}
\log\qty(1+R_\mathrm{G}\qty(\tilde{\mathcal{N}}_{n})):\right.\nonumber\\
&\quad\left.\lim_{n\to \infty}\frac{1}{2}\left\| J\qty(\tilde{\mathcal{N}}_{n})-J(\mathcal{N}^{\otimes n}) \right\|_1
= 0
\right\},
\end{align}
where the minimum on the right-hand side is taken over all sequences $\qty{\tilde{\mathcal{N}}_{n}}_n$ of CQ channels such that the limit $\lim_{n\to\infty}\frac{1}{n} \log\qty(1+R_\mathrm{G}\qty(\tilde{\mathcal{N}}_{n}))$ exists.
To construct a sequence $\qty{\tilde{\mathcal{N}}_{n}}$ achieving the minimum in~\eqref{eq:R_R_R_G_characterization}, it is essential to follow the same argument as our proof of the strong converse part of the generalized quantum Stein's lemma (note that our proof of the generalized quantum Stein's lemma does not use~\eqref{eq:R_R_R_G_characterization}).
We remark that~\eqref{eq:R_R_R_G_characterization} is a generalization of Proposition~II.1 and Corollary~III.2 of Ref.~\cite{Brandao2010} in QRTs for states to those for CQ channels; in particular,~\eqref{eq:R_R_R_G_characterization} implies, in the special case of the QRTs for states, that for any state $\rho$, we have
\begin{align}
\label{eq:QRT_state_characterization}
R_\mathrm{R}^\infty(\rho) &= 
\min_{\qty{\tilde{\rho}_{n}}_n}\left\{
\lim_{n\to\infty}
\frac{1}{n}
\log\qty(1+R_\mathrm{G}\qty(\tilde{\rho}_{n})):\right.\nonumber\\
&\quad\left.\lim_{n\to \infty}\frac{1}{2}\left\| \tilde{\rho}_{n}-\rho^{\otimes n} \right\|_1
= 0
\right\},
\end{align}
where the minimum on the right-hand side is taken over all sequences $\qty{\tilde{\rho}_{n}}_n$ of states such that the limit $\lim_{n\to\infty}\frac{1}{n}
\log\qty(1+R_\mathrm{G}\qty(\tilde{\rho}_{n}))$ exists.
Whereas Proposition~II.1 and Corollary III.2 of Ref.~\cite{Brandao2010} do not prove the existence of the minimum and the limit in~\eqref{eq:QRT_state_characterization}, we prove the stronger statement with such existence.

Then, using the relation~\eqref{eq:R_R_R_G_characterization} and the direct part of the generalized quantum Stein's lemma, we can generalize the argument for QRTs for states in Ref.~\cite{Brandao2010} to prove the direct part of Theorem~\ref{thm:second_law} for QRTs for CQ channels, i.e.,
\begin{align}
    r_{\tilde{\mathcal{O}}}\qty(\mathcal{N}_1\to\mathcal{N}_2)\geq\frac{R_\mathrm{R}^\infty\qty(\mathcal{N}_1)}{R_\mathrm{R}^\infty\qty(\mathcal{N}_2)}.
\end{align}
Note that, in QRTs for channels, a channel divergence would also be used as a more conventional resource measure than $R_\mathrm{R}^\infty$, defined not using the Choi state but using the worst-case input state $\rho$ as~\cite{Gour2019a}
\begin{align}
\label{eq:channel_divergence}
    \tilde{R}_\mathrm{R}\qty(\mathcal{N})\coloneqq\min_{\mathcal{N}_\mathrm{free}\in\mathcal{F}}\max_{\rho}D((\id\otimes\mathcal{N})(\rho)\|(\id\otimes\mathcal{N}_\mathrm{free})(\rho));
\end{align}
however, if one were to use the regularization $\tilde{R}_\mathrm{R}^\infty$ of $\tilde{R}_\mathrm{R}$ in place of $R_\mathrm{R}^\infty$ in Theorem~\ref{thm:second_law}, one would need to extend the generalized quantum Stein's lemma itself to address the non-IIDness of the input $\rho$ in~\eqref{eq:channel_divergence}, in addition to the non-IIDness of $\mathcal{F}$, which poses a further challenge beyond the proof of the generalized quantum Stein's lemma.
To avoid the non-IIDness of $\rho$, our framework is based on the Choi states, which use IID maximally entangled states in place of non-IID $\rho$, and thus can successfully extend the results in Refs.~\cite{Brand_o_2008,brandao2010reversible,Brandao2010,Brandao2015}.

At this point, another challenge arises in proving the other direction of inequality since $R_\mathrm{R}$ may increase under free superchannels, unlike the monotonicity of $\tilde{R}_\mathrm{R}$.
We nevertheless prove, using~\eqref{eq:R_R_R_G_characterization} under the conditions of both~\eqref{eq:asymptotically_resource_non_generating_operations} and~\eqref{eq:condition_asymptotic_continuity}, that its regularization $R_\mathrm{R}^\infty$ has an asymptotic version of monotonicity under $\tilde{\mathcal{O}}$, i.e., for any $\{\Theta_n\}_n\in\tilde{\mathcal{O}}$,
\begin{align}
\label{eq:asymptotically_monotonic}
    R_\mathrm{R}^\infty\qty(\mathcal{N})\geq\liminf_{n\to\infty}\frac{1}{n}R_\mathrm{R}^\infty\qty(\Theta_n\qty(\mathcal{N}^{\otimes n})).
\end{align}
Then, we show that this asymptotic monotonicity~\eqref{eq:asymptotically_monotonic} suffices to prove the converse part of Theorem~\ref{thm:second_law} for QRTs for CQ channels, i.e.,
\begin{align}
    r_{\tilde{\mathcal{O}}}\qty(\mathcal{N}_1\to\mathcal{N}_2)\leq\frac{R_\mathrm{R}^\infty\qty(\mathcal{N}_1)}{R_\mathrm{R}^\infty\qty(\mathcal{N}_2)}.
\end{align}

Lastly, we discuss the applicability and limitations of our framework of QRTs with the second law for CQ channels.
When $\dim(\mathcal{H}_\mathrm{in})=1$ for CQ channels, our results recover the second law of QRTs for states, as originally intended in Ref.~\cite{Brandao2015} assuming the generalized quantum Stein's lemma.
As discussed in the main text, our results cover representative convex QRTs such as those of entanglement, athermality, coherence, asymmetry, and magic states~\cite{Kuroiwa2020,Chitambar2018}.
Moreover, our framework for CQ channels makes it possible to apply QRTs to the study of communication scenarios using CQ channels, as in Ref.~\cite{Takagi2020}.
In Sec.~\ref{sec:examples} of the Supplementary Information, we explicitly present examples of QRTs that fall within the scope of our results.

As for the limitations, Theorem~\ref{thm:second_law} may not be applicable in the following situations.
\begin{enumerate}
    \item Theorem~\ref{thm:second_law} can be used only if $R_\mathrm{R}^\infty$ is positive. QRTs with resource states $\rho$ having $R_\mathrm{R}^\infty(\rho)=0$ are shown in Ref.~\cite[Sec.~IV]{Brandao2010}. While this example may be artificial, a more physically relevant example, a QRT of asymmetry, may also have the case of $R_\mathrm{R}^\infty(\ket{\psi})=0$ for all pure states $\ket{\psi}$, as shown by  Ref.~\cite[Corollary~6]{PhysRevA.80.012307}. In these cases, Theorem~\ref{thm:second_law} is not applicable, similar to the existing results in Ref.~\cite{Brandao2015}.
    \item In general, QRTs may have catalytically replicable resource states~\cite{Kuroiwa2020}, which are resource states $\rho$ such that the asymptotic conversion rate from $\rho$ to $\rho$ itself under some class $\mathcal{O}$ of operations is $r_\mathcal{O}\qty(\rho\to\rho)=\infty$. For example, the QRT of imaginarity~\cite{hickey2018quantifying} has a catalytically replicable state $\ket{S}=\frac{1}{\sqrt{2}}\qty(\ket{0}+i\ket{1})$, i.e., $r_\mathcal{O}\qty(\ket{S}\to\ket{S})=\infty$ under its resource-non-generating operations (see, e.g., the circuit in Fig.~8 of Ref.~\cite{PhysRevA.96.042302}). In this case, we see that $R_\mathrm{R}^\infty\qty(\ket{S})=0$, using Theorem~\ref{thm:second_law} conversely.
    \item QRTs that do not satisfy Properties~\ref{p1:methods}--\ref{p4:methods} are out of the scope of Theorem~\ref{thm:second_law}. These QRTs include non-convex QRTs~\cite{Kuroiwa2020,PhysRevA.104.L020401,PRL,PRA} and infinite-dimensional QRTs~\cite{Kuroiwa2020,PhysRevA.104.L020401,Regula2021,Lami2021,ferrari2023asymptotic,Yamasaki2024}; also, a QRT with no full-rank free state is presented in Ref.~\cite[Example~6]{Kuroiwa2020}.
\end{enumerate}
In Sec.~\ref{sec:examples} of the Supplementary Information, we also discuss further counterexamples that do not satisfy Properties~\ref{p1:methods}–\ref{p4:methods}.
Along with these limitations, as explained above, it is not straightforward to show another generalization of Theorem~\ref{thm:second_law} using worst-case inputs to the channels (i.e., using $\tilde{R}_\mathrm{R}$ in~\eqref{eq:channel_divergence} and the diamond norm for the distance in~\eqref{eq:conversion_rate_methods}) to eliminate the requirement of the asymptotic continuity~\eqref{eq:condition_asymptotic_continuity}, which we leave for future work.
A further generalization to QRTs for channels that are not necessarily in the form of CQ channels is also left as a challenging yet interesting open problem.
However, our contribution is to open a way of generalization that encompasses the existing results on QRTs for states in Refs.~\cite{Brand_o_2008,brandao2010reversible,Brandao2010,Brandao2015} as special cases and has wider applicability to CQ channels, i.e., a fundamental class of channels.

\section*{Data availability}

No data is used in this study.

\section*{Code availability}

No code is used in this study.

\clearpage
\onecolumngrid

\renewcommand{\theequation}{S\arabic{equation}}
\renewcommand{\thetheorem}{S\arabic{theorem}}
\renewcommand{\theproposition}{S\arabic{proposition}}
\renewcommand{\thelemma}{S\arabic{theorem}}
\renewcommand{\thecorollary}{S\arabic{theorem}}
\renewcommand{\thedefinition}{S\arabic{definition}}
\renewcommand{\theremark}{S\arabic{remark}}
\renewcommand{\theexample}{S\arabic{example}}
\setcounter{equation}{0}
\setcounter{theorem}{0}

\section*{Supplementary Information}

The Supplementary Information for ``Generalized Quantum Stein's Lemma and Second Law of Quantum Resource Theories'' is organized as follows.
Section~\ref{sec_proof_of_the_generalized_quantum_steins_lemma} presents the proof of the generalized quantum Stein's lemma.
Section~\ref{sec:second_law} provides the proof of the second law of quantum resource theories (QRTs) for classical-quantum (CQ) channels.
In Sec.~\ref{sec:examples}, we discuss examples of QRTs for both static and dynamical resources that fall within the scope of our results, as well as counterexamples that arise when certain assumptions in our analysis are relaxed.

\section{Proof of the generalized quantum Stein's lemma}
\label{sec_proof_of_the_generalized_quantum_steins_lemma}

In this section, we prove the generalized quantum Stein's lemma, that is, Theorem~\ref{TH1} in Methods.
As formulated in Methods, we write
\begin{align}
\beta_\epsilon(\rho\| \sigma)
\coloneqq
\min_{T\in\mathcal{T}_{\epsilon,\rho}} \Tr[T \sigma],
\end{align}
where $\mathcal{T}_{\epsilon,\rho}$ is given by
\begin{align}
    \mathcal{T}_{\epsilon,\rho}\coloneqq\qty{T:0\leq T\leq \mathds{1}, \Tr[(\mathds{1}-T)\rho]\le \epsilon},
\end{align}
and $\mathds{1}$ is the identity operator.
We also write
\begin{align}
&\beta_\epsilon\left(\rho\middle\| \mathcal{S}\right)\coloneqq\min_{T\in\mathcal{T}_{\epsilon,\rho}}\max_{\sigma \in \mathcal{S}} 
\Tr[T \sigma].
\end{align}
We let
\begin{align}
    D\left(\rho\middle|\middle|\sigma\right)\coloneqq\Tr[\rho(\log\rho-\log\sigma)]
\end{align}
denote the quantum relative entropy, where $\log$ is the natural logarithm throughout this work.

For readability, we here repeat the statement of the generalized quantum Stein's lemma.
As formulated in Methods, we consider the following conditions for the set $\mathcal{S}^{(n)}$ of states.
\begin{enumerate}[label={Set\arabic*}]
\item\label{a_si}For a $d$-dimensional system $\mathcal{H}$, 
the set $\mathcal{S}^{(n)}$ is a convex compact subset of the set of states of $\mathcal{H}^{\otimes n}$.
\item\label{b_si}The set $\mathcal{S}^{(n)}$ is closed under tensor product; i.e., for any positive integers $n$ and $m$, if $\sigma_n\in\mathcal{S}^{(n)}$ and $\sigma_m\in\mathcal{S}^{(m)}$, then $\sigma_n\otimes\sigma_m\in \mathcal{S}^{(n+m)}$.
\item\label{c_si}The set $\mathcal{S}^{(1)}$ contains a full-rank state $\sigma_\mathrm{full}$, i.e., a state with full support on $\mathcal{H}$.
\end{enumerate}
Then, the generalized quantum Stein's lemma is stated as follows.
\begin{theorem}[Generalized quantum Stein's lemma]\label{TH1_supplementary_information}
For any $\epsilon\in(0,1)$ and any sequence $\qty{\mathcal{S}^{(n)}}_n$ of sets of states satisfying Conditions~\ref{a_si},~\ref{b_si}, and~\ref{c_si}, it holds that
\begin{align}
\lim_{n\to \infty} -\frac{1}{n}\log \beta_\epsilon\left(\rho^{\otimes n}\middle\| \mathcal{S}^{(n)}\right)=\lim_{n\to \infty} \frac{1}{n}\min_{\sigma \in \mathcal{S}^{(n)}}D\left(\rho^{\otimes n}\middle\|\sigma\right).   
\label{ZXI}
\end{align}
\end{theorem}

In the following subsections, we will present our proof of Theorem~\ref{TH1_supplementary_information}.
In Sec.~\ref{S2}, for the ease of our analysis, we derive a simplification of Theorem~\ref{TH1_supplementary_information}, given by Proposition~\ref{prop:simplification}, which reduces the proof to the strong converse part and the direct part.
In Sec.~\ref{S5}, we prove the strong converse part, as presented in Proposition~\ref{prp:strong_converse}.
In Sec.~\ref{S7}, we prove the direct part, as presented in Proposition~\ref{prp:direct}.
Finally, we summarize the overall proof in Sec.~\ref{sec:summary}.

\subsection{A simplification of the generalized quantum Stein's lemma}\label{S2}
In this section, for the ease of our analysis, we derive the following equivalent statement to Theorem~\ref{TH1_supplementary_information}.
Accordingly, in the subsequent subsections, we will prove the strong converse part and the direct part of Proposition~\ref{prop:simplification}.

\begin{proposition}[A simplification of the generalized quantum Stein's lemma]\label{prop:simplification}
Theorem~\ref{TH1_supplementary_information}, i.e., the generalized quantum Stein's lemma, holds if, for any $\epsilon\in(0,1)$ and any sequence $\qty{\mathcal{S}^{(n)}}_n$ of sets of states satisfying Conditions~\ref{a_si},~\ref{b_si}, and~\ref{c_si}, the following relations hold. \begin{description}
    \item[Strong converse]
    \begin{align}
    \label{eq:strong_converse}
    \limsup_{n\to \infty} -\frac{1}{n}\log\max_{\sigma\in\mathcal{S}^{(n)}} \beta_\epsilon\left(\rho^{\otimes n}\middle\| \sigma\right)\leq\lim_{n\to \infty} \frac{1}{n}\min_{\sigma \in \mathcal{S}^{(n)}}D\left(\rho^{\otimes n}\middle\|\sigma\right).   
    \end{align}
    \item[Direct]
    \begin{align}
    \label{eq:direct}
    \liminf_{n\to \infty} -\frac{1}{n}\log\max_{\sigma\in\mathcal{S}^{(n)}} \beta_\epsilon\left(\rho^{\otimes n}\middle\| \sigma\right)\geq\lim_{n\to \infty} \frac{1}{n}\min_{\sigma \in \mathcal{S}^{(n)}}D\left(\rho^{\otimes n}\middle\|\sigma\right).   
    \end{align}
\end{description}
\end{proposition}

To derive this proposition, in the rest of this section, we will first analyze the left-hand side of~\eqref{ZXI} in Theorem~\ref{TH1_supplementary_information}, and then the right-hand side.
Regarding the quantity appearing on the left-hand side of~\eqref{ZXI} in Theorem~\ref{TH1_supplementary_information}, as discussed in Methods, we have the following.
\begin{lemma}\label{L1}
For any $\epsilon\geq 0$, any state $\rho$, and any convex compact set $\mathcal{S}$, we have
\begin{align}
\max_{\sigma \in \mathcal{S}}
\beta_{\epsilon}\left(\rho\middle\| \sigma\right)
= \beta_{\epsilon}\left(\rho\middle\| \mathcal{S}\right).
\Label{BH4}
\end{align}
\end{lemma}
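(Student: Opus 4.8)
The plan is to read Lemma~\ref{L1} as a minimax identity and prove it via Sion's minimax theorem. Writing
\[
\mathcal{T}_\epsilon \coloneqq \{T : 0\le T\le\mathds{1},\ \Tr[(\mathds{1}-T)\rho]\le\epsilon\}
\]
for the set of feasible tests, we have $\beta_\epsilon(\rho\|\mathcal{S})=\min_{T\in\mathcal{T}_\epsilon}\max_{\sigma\in\mathcal{S}}\Tr[T\sigma]$ and $\max_{\sigma\in\mathcal{S}}\beta_\epsilon(\rho\|\sigma)=\max_{\sigma\in\mathcal{S}}\min_{T\in\mathcal{T}_\epsilon}\Tr[T\sigma]$. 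The inequality~\eqref{BH4Y} is exactly the elementary max-min bound $\sup_\sigma\inf_T\le\inf_T\sup_\sigma$, so the content of~\eqref{BH4} is the reverse inequality, i.e., the permission to exchange the order of $\min_T$ and $\max_\sigma$.

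First I would establish the topological properties of the two feasible sets. The set $\mathcal{T}_\epsilon$ is nonempty (it contains $T=\mathds{1}$, for which $\Tr[(\mathds{1}-T)\rho]=0\le\epsilon$), convex, and compact, being the intersection of the compact operator interval $\{T:0\le T\le\mathds{1}\}$ with the closed half-space $\{T:\Tr[(\mathds{1}-T)\rho]\le\epsilon\}$. The set $\mathcal{S}$ is convex by hypothesis and, as a subset of the compact state space $\mathcal{D}(\mathcal{H})$ in finite dimension, has compact closure. The payoff $f(T,\sigma)\coloneqq\Tr[T\sigma]$ is bilinear, hence jointly continuous, convex (indeed linear) in $T$ for each fixed $\sigma$, and concave (indeed linear) in $\sigma$ for each fixed $T$.

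Next I would apply Sion's minimax theorem to the triple $(\mathcal{T}_\epsilon,\mathcal{S},f)$: since $\mathcal{T}_\epsilon$ is compact and convex, $\mathcal{S}$ is convex, and $f$ is lower semicontinuous and quasiconvex in the first argument and upper semicontinuous and quasiconcave in the second, the theorem yields $\min_{T\in\mathcal{T}_\epsilon}\sup_{\sigma\in\mathcal{S}}f(T,\sigma)=\sup_{\sigma\in\mathcal{S}}\min_{T\in\mathcal{T}_\epsilon}f(T,\sigma)$, which is precisely~\eqref{BH4}. The minimum over $T$ is attained because $\mathcal{T}_\epsilon$ is compact and $f$ is continuous, and the outer maximum on the right-hand side is attained because $\sigma\mapsto\min_{T\in\mathcal{T}_\epsilon}\Tr[T\sigma]$ is concave and continuous; when $\mathcal{S}$ is closed (as for the free-state sets $\mathcal{F}(\mathcal{H}^{\otimes n})$ by Property~\ref{p1:main}) this supremum is a genuine maximum, and in the general case one passes to $\overline{\mathcal{S}}$ without changing either side, using continuity of $f$ in $\sigma$.

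The main obstacle is not a deep new idea but the careful verification of the minimax hypotheses: confirming compactness of $\mathcal{T}_\epsilon$ (straightforward) and handling the possible non-closedness of $\mathcal{S}$ so that the $\sup$ in Sion's conclusion is legitimately a $\max$. An alternative, fully self-contained route would replace the appeal to Sion by a separation argument: assuming for contradiction that $\max_{\sigma}\min_T\Tr[T\sigma]<\beta_\epsilon(\rho\|\mathcal{S})$, one separates in the space of Hermitian operators to extract a single $\sigma^\star\in\mathcal{S}$ (a convex combination of extreme candidates) with $\min_{T\in\mathcal{T}_\epsilon}\Tr[T\sigma^\star]\ge\beta_\epsilon(\rho\|\mathcal{S})$, contradicting the definition of the left-hand side. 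I would nonetheless present the direct Sion argument, as it is shorter and isolates exactly the role played by convexity of $\mathcal{S}$.
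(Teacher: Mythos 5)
Your proof is correct, but it takes a different route from the paper. The paper does not invoke Sion's theorem on the payoff $\Tr[T\sigma]$ over $\mathcal{T}_\epsilon\times\mathcal{S}$; instead it imports the SDP duality from the proof of Theorem~I of Ref.~\cite{Brandao2010}, namely $\max_{T}\qty{\Tr[T\rho]:\max_{\sigma\in\mathcal{S}}\Tr[T\sigma]\le\beta}=\min_{\sigma\in\mathcal{S}}\min_{b\ge0}\qty{\Tr[(\rho-b\sigma)_+]+b\beta}$, applies it once to the composite hypothesis $\mathcal{S}$ and once to each singleton $\{\sigma\}$, and thereby obtains the exchange of $\min_T$ and $\max_\sigma$ in the \emph{reversed} parametrization (fixing the type~II bound $\beta$ and optimizing the type~I success probability); it then specializes $\beta=\max_{\sigma\in\mathcal{S}}\beta_\epsilon(\rho\|\sigma)$ to translate back to the statement of the lemma. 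Your argument works directly in the original parametrization: $\mathcal{T}_\epsilon$ is compact and convex, $\mathcal{S}$ is convex, the payoff is bilinear, and Sion gives the swap in one step. This is more self-contained (no appeal to the explicit dual functional $\Tr[(\rho-b\sigma)_+]+b\beta$), it isolates convexity of $\mathcal{S}$ as the only structural hypothesis used, and it in fact covers $\epsilon=0$ as well, whereas the paper's cited duality is stated for $\epsilon>0$. What the paper's route buys is the explicit dual expression itself, which is reusable quantitative information in the Brand\~ao--Plenio line of argument, though here only the resulting equality is needed. Your handling of attainment (compactness of $\mathcal{T}_\epsilon$ for the min; passing to $\overline{\mathcal{S}}$, or invoking closedness from Property~\ref{p1:main}, for the max) is the right way to upgrade Sion's $\sup$ to the $\max$ appearing in~\eqref{BH4}.
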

\begin{proof}
The minimax theorem~\cite{v1928theorie,sion1958general,10.2996/kmj/1138038812} shows that for any convex compact sets $\mathcal{X}$, $\mathcal{Y}$, and any continuous concave-convex function $f:\mathcal{X}\times\mathcal{Y}\to\mathbb{R}$, it holds that $\max_{x\in\mathcal{X}}\min_{y\in\mathcal{Y}}f(x,y)=\min_{y\in\mathcal{Y}}\max_{x\in\mathcal{X}}f(x,y)$ (References~\cite{sion1958general,10.2996/kmj/1138038812} show a more general theorem than this statement, but this statement suffices for our proof).
In our case, the set $\mathcal{S}$ is a convex compact set, and the set $\mathcal{T}_{\epsilon,\rho}$ is also a convex compact set.
The objective function $\Tr[T\sigma]$ is bilinear and thus continuous and concave-convex.
Therefore, we can apply the minimax theorem to obtain the conclusion.
\end{proof}

As for the quantity on the right-hand side of~\eqref{ZXI} in Theorem~\ref{TH1_supplementary_information}, as discussed in Methods, we see the existence of the limit due to the following lemma.

\begin{lemma}\label{L4}
For any sequence $\qty{\mathcal{S}^{(n)}}_n$ of sets satisfying Conditions~\ref{a_si},~\ref{b_si}, and~\ref{c_si} (except for the convexity of $\mathcal{S}^{(n)}$),
the limit
\begin{align}
\lim_{n\to \infty} \frac{1}{n}\min_{
\sigma \in \mathcal{S}^{(n)}
} D\left(\rho^{\otimes n}\middle\|\sigma\right)
\end{align}
exists.
\end{lemma}
\begin{proof}
The proof is based on Fekete's subadditive lemma~\cite[Lemma~A.1]{hayashi2016quantum}.
For any $n$, we choose a state $\sigma_n$ minimizing $\min_{\sigma\in\mathcal{S}^{(n)}}D\left(\rho^{\otimes n}\middle\| \sigma\right)$, i.e.,
\begin{align}
D\left(\rho^{\otimes n}\middle\| \sigma_n\right)=
\min_{
\sigma\in\mathcal{S}^{(n)}
}
D\left(\rho^{\otimes n}\middle\| \sigma\right).
\end{align}
In the same way, for any $m$, we choose a state $\sigma_m$ minimizing $\min_{\sigma\in\mathcal{S}^{(m)}}D\left(\rho^{\otimes m}\middle\| \sigma\right)$.
The compactness of the sets and the existence of a full-rank state in the sets for the second argument guarantee the existence of the minima.
Since it is also guaranteed that
$\sigma_{n}\otimes \sigma_{m}
\in \mathcal{S}^{(n+m)}$, we have
    \begin{align}
\min_{
\sigma\in\mathcal{S}^{(n+m)}
}D\left(\rho^{\otimes (n+m)}\middle\|\sigma\right)
    &\leq
    D\left(\rho^{\otimes n}\otimes\rho^{\otimes m}\middle\| \sigma_n\otimes \sigma_m\right)
\notag    \\
&= D\left(\rho^{\otimes n}\middle\| \sigma_n\right)
+    D\left(\rho^{\otimes m}\middle\| \sigma_m\right)\notag\\
&=
\min_{
\sigma\in\mathcal{S}^{(n)}
}
D\left(\rho^{\otimes n}\middle\| \sigma\right)
+\min_{
\sigma\in\mathcal{S}^{(m)}
}
D\left(\rho^{\otimes m}\middle\| \sigma\right).
\end{align}
Therefore, applying Fekete's subadditive lemma, we see that the limit 
$\lim_{n\to \infty} \frac{1}{n}\min_{
\sigma\in\mathcal{S}^{(n)}
}
D\left(\rho^{\otimes n}\middle\|\sigma\right)$ exists.
\end{proof}

Therefore, using Lemmas~\ref{L1} and~\ref{L4}, we prove Proposition~\ref{prop:simplification} as follows.

\begin{proof}[Proof of Proposition~\ref{prop:simplification}]
    Due to Lemma~\ref{L4}, the limits on the right-hand sides of~\eqref{eq:strong_converse} and~\eqref{eq:direct} exist.
    Due to $\limsup_{n\to\infty}\eta_n\geq\liminf_{n\to\infty}\eta_n$ for any sequence $\qty{\eta_n}_n$, it follows from~\eqref{eq:strong_converse} and~\eqref{eq:direct} that
    \begin{align}
    \lim_{n\to \infty} -\frac{1}{n}\log\max_{\sigma\in\mathcal{S}^{(n)}} \beta_\epsilon\left(\rho^{\otimes n}\middle\| \sigma\right)=\lim_{n\to \infty} \frac{1}{n}\min_{\sigma \in \mathcal{S}^{(n)}}D\left(\rho^{\otimes n}\middle\|\sigma\right),
    \end{align}
    where $\limsup$ in~\eqref{eq:strong_converse} and $\liminf$ in~\eqref{eq:direct} coincide.
    Therefore, we obtain from Lemma~\ref{L1}
    \begin{align}
    \lim_{n\to \infty} -\frac{1}{n}\log \beta_\epsilon\left(\rho^{\otimes n}\middle\|\mathcal{S}^{(n)}\right)=\lim_{n\to \infty} -\frac{1}{n}\log\max_{\sigma\in\mathcal{S}^{(n)}} \beta_\epsilon\left(\rho^{\otimes n}\middle\| \sigma\right)=\lim_{n\to \infty} \frac{1}{n}\min_{\sigma \in \mathcal{S}^{(n)}}D\left(\rho^{\otimes n}\middle\|\sigma\right).
    \end{align}
\end{proof}

\subsection{The strong converse part of the generalized quantum Stein's lemma}\label{S5}

In this section, we prove the strong converse part of Proposition~\ref{prop:simplification}.
In particular, we prove the following proposition.

\begin{proposition}[The strong converse part of Proposition~\ref{prop:simplification}]\label{prp:strong_converse}
For any $\epsilon\in[0,1)$ and any sequence $\qty{\mathcal{S}^{(n)}}_n$ of sets of states satisfying Conditions~\ref{a_si},~\ref{b_si}, and~\ref{c_si}, it holds that
\begin{align}
\limsup_{n\to \infty} -\frac{1}{n}\log\max_{\sigma\in\mathcal{S}^{(n)}} \beta_\epsilon\left(\rho^{\otimes n}\middle\| \sigma\right)\leq\lim_{n\to \infty} \frac{1}{n}\min_{\sigma \in \mathcal{S}^{(n)}}D\left(\rho^{\otimes n}\middle\|\sigma\right).   
\end{align}
\end{proposition}

To prove this proposition, as described in Methods, we show the following lemma.

\begin{lemma}\label{L5}
Given any fixed positive integer $m$, any state $\rho$ of $\mathcal{H}$, any state $\sigma_m\in\mathcal{S}^{(m)}$, and any full-rank state $\sigma_\mathrm{full}\in\mathcal{S}^{(1)}$,
for each $n\in\{1,2,\ldots\}$ and $l\in\{0,1,\ldots\}$ satisfying $lm\leq n< (l+1)m$, we define
\begin{align}
\label{eq:tilde_sigma_n_si}
\sigma_{n}&\coloneqq
\sigma_{m}^{\otimes l}
\otimes \sigma_\mathrm{full}^{\otimes (n- lm)}.
\end{align}
For any $\epsilon\in[0,1)$, we have
\begin{align}
&\limsup_{n \to \infty}-\frac{1}{n}\log
\beta_{\epsilon} \left(\rho^{\otimes n} \middle\| \sigma_{n}\right) 
\leq\frac{1}{m}D \left(\rho^{\otimes m}\middle\| \sigma_{m}\right).
\end{align}
\end{lemma}

\begin{proof}
To show Lemma~\ref{L5}, we here employ the sandwiched R\'{e}nyi relative entropy
\begin{align}
\label{eq:renyi_relative_entropy}
    \widetilde{D}_\alpha\left(\rho\middle\|\sigma\right)\coloneqq
        \frac{1}{\alpha-1}\log\qty(\Tr[\qty(\sigma^{\frac{1-\alpha}{2\alpha}}\rho\sigma^{\frac{1-\alpha}{2\alpha}})^\alpha]).
\end{align}
Note that for the proof of Lemma~\ref{L5}, we can follow the argument of the original proof of the strong converse part of quantum Stein's lemma in Ref.~\cite{887855} (see also Ref.~\cite[Sec.~3.8]{hayashi2016quantum}) using the Petz R\'{e}nyi relative entropy
$D_{\alpha}(\rho\|\sigma)\coloneqq \frac{1}{\alpha-1}\log \Tr[\rho^{\alpha}\sigma^{1-\alpha}]$ of order $\alpha>1$ in place of $\widetilde{D}_\alpha$ used below; however, we present our analysis in terms of $\widetilde{D}_\alpha$ since $D_\alpha$ may be a looser upper bound than $\widetilde{D}_\alpha$ for $\alpha>1$. 

As shown in Ref.~\cite[Lemma~5]{cooney2016strong}, for any $\alpha>0$ and $\epsilon\in[0,1)$, we have
\begin{align}
\label{eq:beta_upper_renyi}
    -\log \beta_\epsilon(\rho \| \sigma)\leq \widetilde{D}_\alpha(\rho \| \sigma)+\frac{\alpha}{\alpha-1}\log (\frac{1}{1-\epsilon}).
\end{align}
(Note that Ref.~\cite[Lemma~5]{cooney2016strong} considers $\epsilon\in(0,1)$, but the proof, based on the data processing inequality, works also in the case of $\epsilon=0$.)
Since $\sigma_\mathrm{full}\in\mathcal{S}^{(1)}$ has a full support,
$\widetilde{D}_\alpha \left(\rho\middle\| \sigma_\mathrm{full}\right)$ takes a finite value.
Since the additivity of $\widetilde{D}_\alpha$~\cite{10.1063/1.4838856,wilde2014strong} implies
$\widetilde{D}_\alpha \left(\rho^{\otimes n}\middle\| \sigma_{n}\right)
=l \widetilde{D}_\alpha \left(\rho^{\otimes m}\middle\| \sigma_{m}\right)
+(n-lm)\widetilde{D}_\alpha \left(\rho \middle\| \sigma_\mathrm{full}\right)$,
applying~\eqref{eq:beta_upper_renyi} to 
$\rho^{\otimes n}$ and $\sigma_{n}$, we have
\begin{align}
\limsup_{n \to \infty}-\frac{1}{n}\log
\beta_{\epsilon} \left(\rho^{\otimes n} \middle\| \sigma_{n}\right)
&\le
\lim_{n\to \infty}\frac{1}{n}
\widetilde{D}_\alpha \left(\rho^{\otimes n}\middle\| \sigma_{n}\right)\notag\\
&=
\frac{1}{m}\widetilde{D}_\alpha \left(\rho^{\otimes m}\middle\| \sigma_{m}\right),
\end{align}
which holds for any $\alpha>1$.
Taking the limit $\alpha\to 1$ (i.e., $\widetilde{D}_\alpha (\rho\| \sigma)\to D (\rho\| \sigma)$~\cite{10.1063/1.4838856,wilde2014strong}), we have
\begin{align}
\limsup_{n \to \infty}-\frac{1}{n}\log
\beta_{\epsilon} \left(\rho^{\otimes m}
\middle\| \sigma_{n}\right) 
\leq &
\lim_{\alpha \to 1}
\frac{1}{m}\widetilde{D}_\alpha \left(\rho^{\otimes m}\middle\| \sigma_m\right)\notag\\
=& \frac{1}{m}D \left(\rho^{\otimes m}\middle\| \sigma_m\right).
\end{align}
\end{proof}

Using this lemma, we prove Proposition~\ref{prp:strong_converse} as follows.

\begin{proof}[Proof of Proposition~\ref{prp:strong_converse}]
For any $m$, any state $\sigma_{m}\in\mathcal{S}^{(m)}$, and any full-rank state $\sigma_\mathrm{full}\in\mathcal{S}^{(1)}$, let $\sigma_{n}$ denote the state given by~\eqref{eq:tilde_sigma_n_si}, i.e.,
for each $n\in\{1,2,\ldots\}$ and $l\in\{0,1,\ldots\}$ satisfying $lm\leq n< (l+1)m$,
\begin{align}
\sigma_{n}&\coloneqq
\sigma_{m}^{\otimes l}
\otimes \sigma_\mathrm{full}^{\otimes (n- lm)}.
\end{align}
This state satisfies $\sigma_{n}\in\mathcal{S}^{(n)}$ under Conditions~\ref{b_si}, and~\ref{c_si} of $\mathcal{S}^{(n)}$.
Then, due to Lemma~\ref{L5}, it follows that
\begin{align}
&\limsup_{n\to \infty}-\frac{1}{n} \log \max_{\sigma\in \mathcal{S}^{(n)}}\beta_\epsilon \left(\rho^{\otimes n}\middle\|\sigma\right)\leq
\limsup_{n\to \infty}-\frac{1}{n}
\log \beta_\epsilon \left(\rho^{\otimes n}\middle\| \sigma_{n}\right)
\leq\frac{1}{m}
D\left(\rho^{\otimes m}\middle\| \sigma_m\right).
\end{align}

Therefore, by choosing $\sigma_m$ as a state minimizing $\min_{\sigma\in\mathcal{S}^{(m)}}D\left(\rho^{\otimes m}\middle\|\sigma\right)$ and also taking the limit $m\to\infty$,
for any $\epsilon\in[0,1)$,
we obtain from Lemmas~\ref{L4} and~\ref{L5}
\begin{align}
&\limsup_{n\to \infty}-\frac{1}{n} \log\max_{\sigma\in \mathcal{S}^{(n)}} \beta_\epsilon \left(\rho^{\otimes n}\middle\| \sigma\right)\leq\lim_{m\to \infty}\frac{1}{m}D\left(\rho^{\otimes m}\middle\|\sigma_m\right)=
\lim_{n\to \infty}\frac{1}{n}\min_{
\sigma \in\mathcal{S}^{(n)}
}
D\left(\rho^{\otimes n}\middle\| \sigma\right),
\label{BVC}
\end{align}
which yields the conclusion.
\end{proof}

\subsection{Direct part of the generalized quantum Stein's lemma}\label{S7}

In this section, we prove the direct part of Proposition~\ref{prop:simplification}.
In particular, we prove the following proposition.

\begin{proposition}[The direct part of Proposition~\ref{prop:simplification}]\label{prp:direct}
For any $\epsilon\in(0,1)$ and any sequence $\qty{\mathcal{S}^{(n)}}_n$ of sets of states satisfying Conditions~\ref{a_si},~\ref{b_si}, and~\ref{c_si}, it holds that
\begin{align}
\liminf_{n\to \infty} -\frac{1}{n}\log\max_{\sigma\in\mathcal{S}^{(n)}} \beta_\epsilon\left(\rho^{\otimes n}\middle\| \sigma\right)\geq\lim_{n\to \infty} \frac{1}{n}\min_{\sigma \in \mathcal{S}^{(n)}}D\left(\rho^{\otimes n}\middle\|\sigma\right).   
\end{align}
\end{proposition}

As described in Methods, for any $\epsilon\in(0,1)$, we write
\begin{align}
\label{eq:R_1}
R_{1,\epsilon}\coloneqq\liminf_{n\to \infty}-\frac{1}{n}\max_{\sigma \in\mathcal{S}^{(n)}}
\log \beta_\epsilon \left(\rho^{\otimes n}\middle\| \sigma\right);
\end{align}
then, the core of our proof is the following key lemma.

\begin{lemma}\label{L6}
Given any sequence $\qty{\sigma_n \in\mathcal{S}^{(n)}}_n$ of states in the sets $\mathcal{S}^{(n)}$ satisfying Conditions~\ref{a_si},~\ref{b_si}, and~\ref{c_si}, suppose that
\begin{align}
\label{eq:R_2_si}
R_2\coloneqq\liminf_{n\to \infty}\frac{1}{n}
D\left(\rho^{\otimes n}\middle\| \sigma_n\right)
> R_{1,\epsilon}, 
\end{align}
where $R_{1,\epsilon}$ is defined as~\eqref{eq:R_1} for any $\epsilon\in(0,1)$.
Then, for any fixed parameter $\tilde{\epsilon} \in (0,\epsilon)$, there exists a sequence $\qty{\sigma_n' \in\mathcal{S}^{(n)}}_n$ of states such that
\begin{align}
\liminf_{n\to \infty}\frac{1}{n}
D\left(\rho^{\otimes n}\middle\| \sigma_n'\right) -R_{1,\epsilon}
&\leq  (1-\tilde{\epsilon})(R_2-R_{1,\epsilon}).
\label{NNIT}
\end{align}
\end{lemma}

To present the full proof of Proposition~\ref{prp:direct}, we divide our analysis into subsequent subsections.
In Sec.~\ref{sec:fundamental_lemmas}, we prepare fundamental lemmas used for our analysis.
Then, in Sec.~\ref{sec:lemma_1}, we show our proof of Lemma~\ref{L5}.
Finally, in Sec.~\ref{sec:proof_of_the_direct_part}, we provide the proof of Proposition~\ref{prp:direct} using Lemma~\ref{L5}.

\subsubsection{Fundamental lemmas}
\label{sec:fundamental_lemmas}
Here, we prepare several fundamental lemmas for the information spectrum method~\cite{4069150}.
To apply the information spectrum method, we consider general sequences
$\qty{\rho_n}_n$ and $\qty{\sigma_n}_n$ of states.

First, we have the following non-asymptotic formula.
\begin{lemma}\Label{L9A}
For any $\epsilon_3\geq 0$, any states $\rho$ and $\sigma$, and any 
completely positive and trace-preserving (CPTP)
map $\mathcal{E}$, it holds that
\begin{align}
\beta_{\epsilon_3} \left(\mathcal{E}(\rho)\middle\| \mathcal{E}(\sigma)\right)  &\geq \beta_{\epsilon_3} \left(\rho\middle\| \sigma\right). \Label{ZNX_2}
\end{align}
\end{lemma}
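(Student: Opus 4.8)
The plan is to prove this data-processing inequality by the standard Heisenberg-picture pullback of the optimal test. The quantity $\beta_{\epsilon_3}(\cdot\|\cdot)$ is defined as a minimization of the type~II error $\Tr[T\sigma]$ over all tests $0\le T\le\mathds{1}$ satisfying the type~I constraint $\Tr[(\mathds{1}-T)\rho]\le\epsilon_3$, so it suffices to exhibit, for any feasible test of the \emph{processed} pair $(\mathcal{E}(\rho),\mathcal{E}(\sigma))$, a feasible test of the \emph{original} pair $(\rho,\sigma)$ with the same objective value. This will show that every value achievable on the right-hand side of~\eqref{ZNX} is also achievable on the left-hand side, hence $\beta_{\epsilon_3}(\rho\|\sigma)\le\beta_{\epsilon_3}(\mathcal{E}(\rho)\|\mathcal{E}(\sigma))$.

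Concretely, first I would let $T'$ be an optimal test attaining $\beta_{\epsilon_3}(\mathcal{E}(\rho)\|\mathcal{E}(\sigma))=\Tr[T'\mathcal{E}(\sigma)]$ subject to $\Tr[(\mathds{1}-T')\mathcal{E}(\rho)]\le\epsilon_3$. Then I would define the pulled-back operator $T\coloneqq\mathcal{E}^\ast(T')$, where $\mathcal{E}^\ast$ denotes the adjoint (Heisenberg-picture dual) of $\mathcal{E}$ with respect to the Hilbert--Schmidt inner product, characterized by $\Tr[\mathcal{E}^\ast(X)Y]=\Tr[X\mathcal{E}(Y)]$ for all operators $X,Y$. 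The crucial structural input is that, because $\mathcal{E}$ is CPTP, its adjoint $\mathcal{E}^\ast$ is completely positive and \emph{unital}, i.e.\ $\mathcal{E}^\ast(\mathds{1})=\mathds{1}$; this follows directly from the trace-preserving property $\Tr[\mathcal{E}^\ast(\mathds{1})Y]=\Tr[\mathcal{E}(Y)]=\Tr[Y]$ holding for all $Y$.

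With these properties in hand, I would verify that $T$ is a legitimate test: positivity of $\mathcal{E}^\ast$ gives $T=\mathcal{E}^\ast(T')\ge 0$ from $T'\ge 0$, while unitality together with positivity gives $T=\mathcal{E}^\ast(T')\le\mathcal{E}^\ast(\mathds{1})=\mathds{1}$ from $T'\le\mathds{1}$, so $0\le T\le\mathds{1}$. For the constraint and objective I would simply use the defining adjoint relation:
\begin{align}
\Tr[(\mathds{1}-T)\rho]&=\Tr[\mathcal{E}^\ast(\mathds{1}-T')\rho]=\Tr[(\mathds{1}-T')\mathcal{E}(\rho)]\le\epsilon_3,\\
\Tr[T\sigma]&=\Tr[\mathcal{E}^\ast(T')\sigma]=\Tr[T'\mathcal{E}(\sigma)]=\beta_{\epsilon_3}(\mathcal{E}(\rho)\|\mathcal{E}(\sigma)).
\end{align}
Thus $T$ is feasible for the original problem with objective value equal to $\beta_{\epsilon_3}(\mathcal{E}(\rho)\|\mathcal{E}(\sigma))$, and minimality of $\beta_{\epsilon_3}(\rho\|\sigma)$ yields $\beta_{\epsilon_3}(\rho\|\sigma)\le\Tr[T\sigma]=\beta_{\epsilon_3}(\mathcal{E}(\rho)\|\mathcal{E}(\sigma))$, which is~\eqref{ZNX}. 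I do not expect any serious obstacle here; the only point requiring care is confirming that the pullback $T=\mathcal{E}^\ast(T')$ lands in $[0,\mathds{1}]$, and this is exactly guaranteed by complete positivity and unitality of the adjoint of a CPTP map. Note the argument needs no assumption beyond $\mathcal{E}$ being trace-preserving and (completely) positive, and works for every $\epsilon_3\ge 0$ as stated.
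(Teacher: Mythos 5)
Your proof is correct and is essentially the same argument as the paper's: both pull back the optimal test via the Hilbert--Schmidt adjoint $\mathcal{E}^\ast$, which is positive and unital since $\mathcal{E}$ is CPTP, so that feasibility and the objective value are preserved. You simply spell out the details that the paper's one-line proof leaves implicit.
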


\begin{proof}
We define the dual map $\mathcal{E}^*$ as
\begin{align}
\Tr[X \mathcal{E}(Y)]=
\Tr[\mathcal{E}^*(X ) Y].
\end{align}
In the quantum hypothesis testing, the performance of any 
positive operator-valued measure (POVM)
$\{T,\mathds{1}-T\}$ for two states
$\mathcal{E}(\rho)$ and $\mathcal{E}(\sigma)$
can be simulated by 
the performance of any POVM $\qty{\mathcal{E}^*(T),\mathds{1}-\mathcal{E}^*(T)}$ for two states
$\rho$ and $\sigma$.
Hence, we obtain
\begin{align}
 \beta_{\epsilon_3} \left(\mathcal{E}(\rho)\middle\| \mathcal{E}(\sigma)\right)
 &=\min_{T:0\leq T\leq\mathds{1}}\qty{\Tr[\mathcal{E}^\ast\qty(T)\sigma]:\Tr[\qty(\mathds{1}-\mathcal{E}^\ast(T))\rho]\leq\epsilon_3}\notag\\  
 &\geq\min_{T:0\leq T\leq\mathds{1}}\qty{\Tr[T\sigma]:\Tr[\qty(\mathds{1}-T)\rho]\leq\epsilon_3}\notag\\
 &= \beta_{\epsilon_3} \left(\rho\middle\| \sigma\right).
\end{align}
\end{proof}

Then, we have the following asymptotic facts.
\begin{lemma}\Label{L9}
For any $\epsilon_3\in[0,1]$, $\epsilon_4>0$, any sequences
$\qty{\rho_n}_n$ and $\qty{\sigma_n}_n$ of states, and any parameters $\underline{R}$ and $\overline{R}$ satisfying
\begin{align}
\underline{R}&\geq\liminf_{n\to \infty}-\frac{1}{n} \log \beta_{\epsilon_3} \left(\rho_n\middle\| \sigma_n\right), \Label{BI3} \\
\overline{R}&\geq \limsup_{n\to \infty}-\frac{1}{n} \log \beta_{\epsilon_3} \left(\rho_n\middle\| \sigma_n\right),\Label{BI4}
\end{align}
we have
\begin{align}
\liminf_{n\to \infty}\Tr[\qty{ \rho_n \geq e^{n (\underline{R}+\epsilon_4)} \sigma_n}\rho_n ]
& \leq 1- \epsilon_3, \Label{FT2}\\
\limsup_{n\to \infty}\Tr[\qty{ \rho_n \geq e^{n (\overline{R}+\epsilon_4)} \sigma_n}\rho_n]
& \leq 1- \epsilon_3. \Label{FT1}
\end{align}
where $\qty{ A \geq B}$ is a projection onto the eigenspaces of non-negative eigenvalues of $A-B$.
\end{lemma}

\begin{proof}
The conclusion trivially holds in the case of $\epsilon_3=0$. Thus, we henceforth consider the case of $\epsilon_3>0$.

We show~\eqref{FT2} by contradiction.
Assume that
\begin{align}
\liminf_{n\to \infty}\Tr[\qty{ \rho_n \geq e^{n (\underline{R}+\epsilon_4)} \sigma_n}\rho_n ]
> 1- \epsilon_3.
\end{align}
There exists a sufficiently large integer $n_0$ such that
\begin{align}
\Tr[\qty{ \rho_n \geq e^{n (\underline{R}+\epsilon_4)} \sigma_n}\rho_n ]
\geq 1- \epsilon_3 
\end{align}
for every integer $n \ge n_0$.
We choose
\begin{align}
T_{n}\coloneqq\qty{ \rho_n \geq e^{n (\underline{R}+\epsilon_4)} \sigma_n},
\end{align}
which satisfies the condition
\begin{align}
    \Tr[\qty(\mathds{1}-T_{n})\rho_n] \leq \epsilon_3
\end{align}
for every $n \ge n_0$.
Since it holds for all $\rho_n$ that
\begin{align}
\Tr[\qty{ \rho_n \geq e^{n (\underline{R}+\epsilon_4)} \sigma_n} \qty(e^{-n (\underline{R}+\epsilon_4)}\rho_n-\sigma_{n})]
\ge 0,
\end{align}
we have
\begin{align}
\beta_{\epsilon_3} \left(\rho_n\middle\| \sigma_{n}\right)\leq\Tr[T_{n}\sigma_{n} ]
\le \Tr[T_{n} \qty(e^{-n (\underline{R}+\epsilon_4)} \rho_n) 
]
\le e^{-n (\underline{R}+\epsilon_4)},
\end{align}
which implies
\begin{align}
\underline{R}+\epsilon_4
\le -\frac{1}{n} \log \beta_{\epsilon_3} \left(\rho_n\middle\| \sigma_{n}\right) 
\end{align}
for every $n \ge n_0$.
This contradicts~\eqref{BI3}.

Next, we show~\eqref{FT1} by contradiction.
Assume that
\begin{align}
\limsup_{n\to \infty}\Tr[\qty{ \rho_n \geq e^{n (\overline{R}+\epsilon_4)} \sigma_n}\rho_n ]
> 1-\epsilon_3 .
\end{align}
There exists a subsequence $\{n_k\}_k$ such that
\begin{align}
\Tr[\qty{\rho_{n_k} \geq e^{n_k (\overline{R}+\epsilon_4)} \sigma_{n_k}}\rho_{n_k} ]
\geq 1- \epsilon_3
\end{align}
for every $n_k$.
We choose
\begin{align}
    T_{n_k}'\coloneqq\qty{ \rho_{n_k} \geq e^{n_k (\overline{R}+\epsilon_4)} \sigma_{n_k}},
\end{align}
which satisfies the condition
\begin{align}
    \Tr[(\mathds{1}-T_{n_k}')\rho_{n_k}]\leq \epsilon_3
\end{align}
for every $n_k$.
Since it holds for all $\rho_{n_k}$ that
\begin{align}
\Tr[ 
\qty{\rho_{n_k}\geq e^{n_k (\overline{R}+\epsilon_4)}\sigma_{n_k}}\qty(e^{-n_k (\overline{R}+\epsilon_4)} \rho_{n_k} -\sigma_{n_k} )]
\geq 0,
\end{align}
we have
\begin{align}
\beta_{\epsilon_3} \left(\rho_{n_k}\middle\| \sigma_{n_k}\right)
\leq \Tr[T_{n_k}'\sigma_{n_k} ]
\leq \Tr[T_{n_k}'\qty(e^{-n_k (\overline{R}+\epsilon_4)}\rho_{n_k})   
]
\le e^{-n_k (\overline{R}+\epsilon_4)},
\end{align}
which implies
\begin{align}
\overline{R}+\epsilon_4
\le -\frac{1}{n_k} \log \beta_{\epsilon_3} \left(\rho_{n_k}\middle\| \sigma_{n_k}\right)
\end{align}
for every $n_k$.
This contradicts~\eqref{BI4}.
\end{proof}

\begin{lemma}\Label{lem:lemmaS3}
If sequences $\{\sigma_n\}_n$ and $\{\sigma_n'\}_n$ of states asymptotically satisfy an operator inequality
\begin{align}
    \sigma_n\geq e^{-o(n)}\sigma_{n}^\prime,
\end{align}
then for any $\epsilon_3\geq 0$ and $\{\rho_n\}_n$, we have
\begin{align}
\liminf_{n\to \infty}-\frac{1}{n} \log \beta_{\epsilon_3} \left(\rho_n\middle\| \sigma_n\right) 
\leq &
\liminf_{n\to \infty}-\frac{1}{n} \log \beta_{\epsilon_3} \left(\rho_n\middle\| \sigma_{n}^\prime\right),
\\
\limsup_{n\to \infty}-\frac{1}{n} \log \beta_{\epsilon_3} \left(\rho_n\middle\| \sigma_n\right) 
\leq &
\limsup_{n\to \infty}-\frac{1}{n} \log \beta_{\epsilon_3} \left(\rho_n\middle\| \sigma_{n}^\prime\right).
\end{align}
\end{lemma}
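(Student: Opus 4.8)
The plan is to reduce both inequalities to a single operator inequality together with the observation that the feasible region of tests in the definition of $\beta_{\epsilon_3}$ depends only on $\rho_n$, not on the second argument. First I would note that, since $\sigma_{n,1},\sigma_{n,2},\sigma_{n,3}$ are states and hence positive semidefinite, the convex combination $\sigma_n=\frac{1}{3}\qty(\sigma_{n,1}+\sigma_{n,2}+\sigma_{n,3})$ satisfies the operator inequality $\sigma_n\geq\frac{1}{3}\sigma_{n,j}$ for each $j\in\{1,2,3\}$, simply by discarding the two nonnegative remaining terms.

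Next I would exploit the fact that the constraint $\Tr[(\mathds{1}-T)\rho_n]\leq\epsilon_3$ defining the feasible set of tests is identical in $\beta_{\epsilon_3}\left(\rho_n\middle\|\sigma_n\right)$ and in $\beta_{\epsilon_3}\left(\rho_n\middle\|\sigma_{n,j}\right)$. Letting $T_n^\ast$ be an optimal test attaining $\beta_{\epsilon_3}\left(\rho_n\middle\|\sigma_n\right)$, positivity of $T_n^\ast$ together with the operator inequality gives
\begin{align}
\beta_{\epsilon_3}\left(\rho_n\middle\|\sigma_n\right)=\Tr[T_n^\ast\sigma_n]\geq\frac{1}{3}\Tr[T_n^\ast\sigma_{n,j}]\geq\frac{1}{3}\beta_{\epsilon_3}\left(\rho_n\middle\|\sigma_{n,j}\right),
\end{align}
where the last inequality holds because $T_n^\ast$ is feasible for the optimization defining $\beta_{\epsilon_3}\left(\rho_n\middle\|\sigma_{n,j}\right)$ and the latter is a minimum over that feasible set.

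Taking $-\frac{1}{n}\log$ of both sides then converts the multiplicative factor $\frac{1}{3}$ into an additive term that vanishes asymptotically:
\begin{align}
-\frac{1}{n}\log\beta_{\epsilon_3}\left(\rho_n\middle\|\sigma_n\right)\leq-\frac{1}{n}\log\beta_{\epsilon_3}\left(\rho_n\middle\|\sigma_{n,j}\right)+\frac{\log 3}{n}.
\end{align}
Since $\frac{\log 3}{n}\to 0$, applying $\liminf_{n\to\infty}$ and $\limsup_{n\to\infty}$ to this inequality yields the two claimed bounds, respectively. I do not anticipate a genuine obstacle here; the only point requiring care is the direction of the minimization argument in the first display — one must evaluate the operator inequality at the optimizer $T_n^\ast$ for $\sigma_n$ and then relax to the minimum over the (identical) feasible set for $\sigma_{n,j}$, rather than naively comparing two separately optimized tests. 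Note also that the case $\epsilon_3=0$ needs no separate treatment, as the argument never divides by $\beta$ nor assumes $\beta>0$.
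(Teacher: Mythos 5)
Your proposal is correct and follows essentially the same route as the paper: both rest on the operator inequality $\sigma_n\geq\frac{1}{3}\sigma_{n,j}$, which gives $\frac{1}{3}\beta_{\epsilon_3}\left(\rho_n\middle\|\sigma_{n,j}\right)\leq\beta_{\epsilon_3}\left(\rho_n\middle\|\sigma_n\right)$, after which the factor $\frac{1}{3}$ becomes a vanishing additive term $\frac{1}{n}\log 3$ under $-\frac{1}{n}\log$. Your explicit justification via the optimizer $T_n^\ast$ and the shared feasible set merely spells out the step the paper states in one line.
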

\begin{proof}
Due to $\sigma_n \geq e^{-o(n)} \sigma_{n}^\prime$,
we have $e^{-o(n)}\Tr[T\sigma_{n}^\prime]\leq\Tr[T\sigma_{n}]$ for all $T\in\mathcal{T}_{\epsilon_3,\rho_n}$.
Thus, we have
\begin{align}
-\frac{1}{n} \log \beta_{\epsilon_3} \left(\rho_n\middle\| \sigma_n\right)
\leq 
-\frac{1}{n} \log \beta_{\epsilon_3} \left(\rho_n\middle\| \sigma_{n}'\right)+ o(1).
\end{align}
By taking the limit (i.e., $\liminf_{n\to \infty}$ or $\limsup_{n\to \infty}$),
we obtain the desired inequalities.
\end{proof}

\subsubsection{Proof of the key lemma in the direct part of the generalized quantum Stein's lemma}
\label{sec:lemma_1}

\begin{proof}[Proof of Lemma~\ref{L6}]
We divide our proof into two parts: the main part of the overall proof and a proof of the key relation, given by~\eqref{NNI2G}.

\textit{Main part of proof of Lemma~\ref{L6}}:
Fix a parameter
\begin{align}
\label{eq:epsilon_0}
    \epsilon_0\coloneqq\frac{\epsilon-\tilde{\epsilon}}{1-\epsilon}\qty(R_2-R_{1,\epsilon})>0.
\end{align}
For this fixed $\epsilon_0$,
due to the definition~\eqref{eq:R_2_si} of $R_2$,
we choose a sufficiently large integer $m$
such that
\begin{align}
\label{eq:D_m_R_2}
\frac{1}{m}
D\left(\rho^{\otimes m}\middle\| \sigma_{m}\right) \le R_2+\epsilon_0.
\end{align}
As in Lemma~\ref{L5}, we write
\begin{align}
\sigma_n&\coloneqq\sigma_{m}^{\otimes l}\otimes \sigma_\mathrm{full}^{\otimes (n-lm)},
\end{align}
where $l$ is an integer satisfying $lm \leq n < (l+1)m$, and $\sigma_\mathrm{full}$ is a full-rank state in $\mathcal{S}^{(1)}$.
The optimal state maximizing $\max_{\sigma \in \mathcal{S}^{(n)}}\beta_\epsilon\left(\rho^{\otimes n}\middle\| \sigma\right)$ is denoted by
\begin{align}
\sigma_{n}^\ast&\in\argmax_{\sigma \in \mathcal{S}^{(n)}} \beta_\epsilon\left(\rho^{\otimes n}\middle\| \sigma\right).
\end{align}
Then, we construct the state
\begin{align}
\label{eq:sigma_n_prime}
\sigma_n'&\coloneqq\frac{1}{3}\qty(\sigma_{n}^\ast+ \sigma_n+ \sigma_\mathrm{full}^{\otimes n}
)\in\mathcal{S}^{(n)}.
\end{align}
In the following, we will analyze an upper bound of $\liminf_{n\to\infty}\frac{1}{n}D\left(\rho^{\otimes n} \middle\| \sigma_n'\right)$.

For this analysis, we will introduce a pinching map ${\cal E}_n$ as follows.
With $\lambda_\mathrm{full}\in(0,1]$ denoting the constant representing the minimum eigenvalue of the full-rank state $\sigma_{\mathrm{full}}$ in $\mathcal{S}^{(1)}$,
we have
\begin{align}
\label{eq:sigma_prime_lower_bound}
    \sigma_n' \geq \frac{\sigma_\mathrm{full}^{\otimes n}}{3}\geq \frac{\lambda_\mathrm{full}^n}{3}\mathds{1}=e^{-nc_n}\mathds{1},
\end{align}
where $c_n$ is an $O(1)$ quantity given by
\begin{align}
    c_n\coloneqq \log\frac{1}{\lambda_\mathrm{full}} +\frac{\log 3}{n}=O(1)\quad\text{as $n\to\infty$}.
\label{NMS}
\end{align}
We define a function
\begin{align}
\label{eq:f_n}
f_n(\lambda)\coloneqq  \left\lceil \frac{\log \lambda + nc_n}{c_n} \right\rceil c_n
- nc_n,
\end{align}
so that we have
\begin{align}
\label{eq:f_n_ineq}
\log\lambda\leq f_n(\lambda)\leq\log\lambda+c_n.
\end{align}
We write the spectral decomposition of $\sigma_n'$ as
\begin{align}
    \sigma_n'=\sum_{j} \lambda_j^\prime E_j^\prime,
\end{align}
and using $f_n$, we modify $\sigma_n'$ into a state
\begin{align}
\tilde{\sigma}_n^{\prime}\coloneqq \frac{\sum_{j} e^{f_n\qty(\lambda_j^\prime)} E_j^\prime}{\Tr[\sum_{j} e^{f_n\qty(\lambda_j^\prime)} E_j^\prime]}.
\end{align}
Since~\eqref{eq:f_n_ineq} yields
\begin{align}
{\sigma}_n'&\leq\sum_{j} e^{f_n\qty(\lambda_j^\prime)} E_j^\prime\leq
e^{c_n}
{\sigma}_n',\\
1 &\leq \Tr[\sum_{j} e^{f_n\qty(\lambda_j^\prime)} E_j^\prime]\leq
e^{c_n},
\end{align}
it holds that
\begin{align}
e^{-c_n}
{\sigma}_n'\le 
\tilde{\sigma}_n'\le
e^{c_n}
{\sigma}_n'.
\Label{BNF}
\end{align}
Since~\eqref{eq:sigma_prime_lower_bound} indicates $\log \lambda_j^\prime \in \qty[- nc_n,0]$,
by definition of $f_n$ in~\eqref{eq:f_n},
the exponent $f_n\qty(\lambda_j^\prime)$ of each eigenvalue of $\tilde{\sigma}_n^{\prime}$ takes a value in a discrete set
$\{a_{n,j}\}_{j=0,\ldots,n}$ of $n+1$ real numbers, where
\begin{align}
a_{n,j}\coloneqq jc_n-nc_n.
\end{align}
We define the pinching map ${\cal E}_n$ with respect to the state $\tilde{\sigma}_n^{\prime}$; that is, we write the spectral decomposition of $\tilde{\sigma}_n^{\prime}$
as
\begin{align}
    \tilde{\sigma}_n^{\prime}=\sum_{j=0}^{n} \tilde{\lambda}_j^{\prime} \tilde{E}_j^{\prime},
\end{align}
and define ${\cal E}_n$ as
\begin{align}
{\cal E}_n(\sigma)\coloneqq \sum_{j=0}^{n} \tilde{E}_j^{\prime} \sigma \tilde{E}_j^{\prime}.
\end{align}
The number $d_n$ of projections $\tilde{E}_j^{\prime}$ in the definition of the pinching map ${\cal E}_n$ is upper bounded by a polynomial
\begin{align}
\label{eq:d_n}
    d_n\leq n+1.
\end{align}

Using this pinching map $\mathcal{E}_n$ with respect to $\tilde{\sigma}_n^{\prime}$, in the following, we will reduce the analysis of $\liminf_{n\to\infty}\frac{1}{n}D\left(\rho^{\otimes n} \middle\| \sigma_n'\right)$ to that of $\liminf_{n\to\infty}\frac{1}{n}D\left(\mathcal{E}_n\qty(\rho^{\otimes n}) \middle\| \tilde{\sigma}_n^{\prime}\right)$.
Applying the operator monotone function $\log$ 
to the operator inequality~\eqref{BNF} (Ref.~\cite[Sec.~A.4]{hayashi2016quantum}), we have
$\log \sigma_n'- c_n\le -\log \tilde{\sigma}_n'\le \log \sigma_n'+ c_n$, i.e.,
\begin{align}
D\left(\rho^{\otimes n} \middle\| \sigma_n'\right) - c_n
\leq
D\left(\rho^{\otimes n} \middle\| \tilde{\sigma}_n'\right)
\leq
D\left(\rho^{\otimes n} \middle\| \sigma_n'\right)+ c_n,
\end{align}
which implies 
\begin{align}
\liminf_{n\to \infty}\frac{1}{n}
D\left(\rho^{\otimes n} \middle\| \sigma_n'\right) 
=
\liminf_{n\to \infty}\frac{1}{n}
D\left(\rho^{\otimes n} \middle\| \tilde{\sigma}_n'\right).\Label{BVT}
\end{align}
Also, as shown in Ref.~\cite[Lemma 3.1]{hiai1991proper} (see also Ref.~\cite[Exercise 2.8]{hayashi2017group}),
the relation
\begin{align}
D\left(\rho^{\otimes n}\middle\|{\cal E}_n\qty(\rho^{\otimes n})\right)=
D\left(\rho^{\otimes n}\middle\| \tilde{\sigma}_n^{\prime}\right) -
D\left({\cal E}_n\qty(\rho^{\otimes n}) \middle\| \tilde{\sigma}_n^{\prime}\right)
\Label{E59}
\end{align}
holds.
Moreover, as shown in Ref.~\cite[Lemma 3.2]{hiai1991proper} 
(also following from the pinching inequality $\rho^{\otimes n}\leq d_n \mathcal{E}_n\qty(\rho^{\otimes n})$~\cite[Lemma~3.10]{hayashi2016quantum} with the operator monotonicity of $\log$ as in Ref.~\cite[Proposition~S17]{yamasaki2024generalized}),
the relation
\begin{align}
D\left(\rho^{\otimes n}\middle\|{\cal E}_n\qty(\rho^{\otimes n})\right)\leq \log d_n \leq \log (n+1)
\Label{E60}
\end{align}
holds, where the last inequality follows from~\eqref{eq:d_n}.
The combination of~\eqref{BVT},~\eqref{E59}, and~\eqref{E60}
implies the relation
\begin{align}
\liminf_{n\to \infty}\frac{1}{n}
D\left(\rho^{\otimes n} \middle\| \sigma_n'\right)=\liminf_{n\to \infty}\frac{1}{n}
D\left(\mathcal{E}_n\qty(\rho^{\otimes n})\middle\| \tilde{\sigma}_n'\right) 
.\label{ZX6}
\end{align}
Finally, as will be shown in the rest of this proof, we have the relation
\begin{align}
\liminf_{n\to \infty}\frac{1}{n}
D\left(\mathcal{E}_n\qty(\rho^{\otimes n})\middle\| \tilde{\sigma}_n'\right)
-R_{1,\epsilon}
&\leq 
 (1-\tilde{\epsilon})(R_2-R_{1,\epsilon}).\Label{NNI2G}
\end{align}
Therefore,
the combination of~\eqref{ZX6} and~\eqref{NNI2G} completes the proof of Lemma~\ref{L6}.

\textit{Proof of~\eqref{NNI2G}}:
To prove~\eqref{NNI2G}, we begin with bounding
$\liminf_{n\to \infty}-\frac{1}{n} \log \beta_{\epsilon} \left(\mathcal{E}_n\qty(\rho^{\otimes n})\middle\| \tilde{\sigma}_n'\right)$ and $\limsup_{n\to \infty}-\frac{1}{n} \log \beta_{1-\epsilon_1} \left(\mathcal{E}_n\qty(\rho^{\otimes n})\middle\| \tilde{\sigma}_n'\right)$ for any $\epsilon_1\in(0,1]$, using operator inequalities obtained from~\eqref{eq:sigma_n_prime} and~\eqref{BNF}, i.e.,
\begin{align}
\Label{BNF2}
\tilde{\sigma}_n'&\geq\frac{e^{-c_n}}{3}\sigma_n^\ast,\\
\Label{BNF2_2}
\tilde{\sigma}_n'&\geq\frac{e^{-c_n}}{3}\sigma_n,\\
\Label{BNF2_3}
\tilde{\sigma}_n'&\geq\frac{e^{-c_n}}{3}\sigma_\mathrm{full}^{\otimes n}.
\end{align}
We have
\begin{align}
\liminf_{n\to \infty}-\frac{1}{n}
\log \beta_{\epsilon} \left({\cal E}_n\qty(\rho^{\otimes n})\middle\| \tilde{\sigma}_n'\right)
&\stackrel{(a)}{=}\liminf_{n\to \infty}-\frac{1}{n}
\log\beta_{\epsilon} \left({\cal E}_n\qty(\rho^{\otimes n})\middle\| {\cal E}_n\qty(\tilde{\sigma}_n')\right) \notag\\
&\stackrel{(b)}{\le}
\liminf_{n\to \infty}-\frac{1}{n}
\log \beta_{\epsilon} \left(\rho^{\otimes n}\middle\| \tilde{\sigma}_n'\right)\notag\\
&\stackrel{(c)}{\le}
\liminf_{n\to \infty}-\frac{1}{n}
\log \beta_{\epsilon} \left(\rho^{\otimes n}\middle\| \sigma_{n}^\ast\right)\notag\\
&\stackrel{(d)}{=} R_{1,\epsilon},
\Label{ZP2T}
\end{align}
where $(a)$ follows from the relation
${\cal E}_n\qty(\tilde{\sigma}_n')=\tilde{\sigma}_n'$,
$(b)$ from Lemma~\ref{L9A},
$(c)$ from Lemma~\ref{lem:lemmaS3} due to~\eqref{BNF2},
and $(d)$ from the choice of $\sigma_{n}^\ast$.
Similarly, we have
\begin{align}
\limsup_{n\to \infty}-\frac{1}{n}
\log \beta_{1-\epsilon_1} \left(\mathcal{E}_n\qty(\rho^{\otimes n})\middle\| \tilde{\sigma}_n'\right)
&\le \limsup_{n\to \infty}-\frac{1}{n} \log \beta_{1-\epsilon_1} \left(\rho^{\otimes n}\middle\| \tilde{\sigma}_n'\right)\notag\\
&\stackrel{(a)}{\le} \limsup_{n\to \infty}-\frac{1}{n} \log \beta_{1-\epsilon_1} \left(\rho^{\otimes n}\middle\| \sigma_n\right)\notag\\
&\stackrel{(b)}{\leq} \frac{1}{m} D\left(\rho^{\otimes m}\middle\| \sigma_{m}\right)\notag\\ 
&\stackrel{(c)}{\leq} R_2+ \epsilon_0 \Label{ZP1T},
\end{align}
where $(a)$ follows from Lemma~\ref{lem:lemmaS3} due to~\eqref{BNF2_2}, and $(b)$ from Lemma~\ref{L5}, and $(c)$ from~\eqref{eq:D_m_R_2}.

Next, for any $\epsilon_2>0$, as in Lemma~\ref{L9}, we define two projections
\begin{align}
P_{n,1}&\coloneqq\qty{ {\cal E}_n\qty(\rho^{\otimes n}) \geq e^{n (R_{1,\epsilon}+\epsilon_2) }\tilde{\sigma}_n' },\\ 
P_{n,2}&\coloneqq\qty{ {\cal E}_n\qty(\rho^{\otimes n}) \geq e^{n (R_2+ \epsilon_0+\epsilon_2) }\tilde{\sigma}_n' }.
\end{align}
Applying Lemma~\ref{L9} to~\eqref{ZP1T} and~\eqref{ZP2T}, we have
\begin{align}
\label{eq:projection_limsup}
\liminf_{n \to \infty} \Tr[P_{n,1}{\cal E}_n\qty(\rho^{\otimes n})]
&\leq 1-\epsilon,\\
\label{eq:projection_liminf}
\limsup_{n \to \infty}\Tr[P_{n,2}{\cal E}_n\qty(\rho^{\otimes n})] &\leq \epsilon_1.
\end{align}

In the following, we will bound $D\left({\cal E}_n\qty(\rho^{\otimes n})\middle\| \tilde{\sigma}_n'\right)=\Tr[{\cal E}_n\qty(\rho^{\otimes n})\qty(\log{\cal E}_n\qty(\rho^{\otimes n})-\log\tilde{\sigma}_n')]$ using the projections $P_{n,1}$ and $P_{n,2}$.
Since ${\cal E}_n(\rho^{\otimes n})$ and $\tilde{\sigma}_n'$
commute,
we can consider that 
$\frac{1}{n}(\log {\cal E}_n\qty(\rho^{\otimes n}) -\log \tilde{\sigma}_n')$ is a classical random variable,
and $\frac{1}{n}D\left({\cal E}_n(\rho^{\otimes n})\middle\| \tilde{\sigma}_n'\right)$ is its expectation under the distribution
defined by the state ${\cal E}_n\qty(\rho^{\otimes n})$.
In addition, the probabilities
$\Tr [P_{n,1}{\cal E}_n\qty(\rho^{\otimes n})]$ and
$\Tr [P_{n,2}{\cal E}_n\qty(\rho^{\otimes n})]$
characterize its cumulative distribution function.
Since the projections $P_{n,1}$ and $P_{n,2}$ commute and satisfy $ P_{n,1} \geq P_{n,2}$ (due to $R_{1,\epsilon}+\epsilon_2\leq R_2+\epsilon_0+\epsilon_2$), 
we have the decomposition of the identity operator $\mathds{1}$ into projections
\begin{align}
\label{eq:E_n_1}
E_{n,1}&\coloneqq \mathds{1}-P_{n,1},\\
\label{eq:E_n_2}
E_{n,2}&\coloneqq P_{n,1}-P_{n,2},\\
\label{eq:E_n_3}
E_{n,3}&\coloneqq P_{n,2},
\end{align}
satisfying $\sum_{j=1}^{3}E_{n_j}=\mathds{1}$.
Since two states
${\cal E}_n\qty(\rho^{\otimes n})$, $ \tilde{\sigma}_n'$, the projections in $\qty{E_{n,j}}_{j=1,2,3}$ commute,
and
\begin{align}
    E_{n,1}&=\mathds{1}-P_{n,1}=\qty{ {\cal E}_n\qty(\rho^{\otimes n}) < e^{n (R_{1,\epsilon}+\epsilon_2) }\tilde{\sigma}_n' },\\
    E_{n,2} &\le \mathds{1}-P_{n,2}=
\qty{ {\cal E}_n\qty(\rho^{\otimes n}) < e^{n (R_2+ \epsilon_0+\epsilon_2) }\tilde{\sigma}_n' },
\end{align}
we have
\begin{align}
\frac{1}{n}E_{n,1} \qty(\log {\cal E}_n\qty(\rho^{\otimes n}) - \log \tilde{\sigma}_n'  ) 
& \leq \qty(R_{1,\epsilon}+\epsilon_2)E_{n,1},
\Label{TY1}
\\
\frac{1}{n}E_{n,2} \qty(\log {\cal E}_n\qty(\rho^{\otimes n}) - \log \tilde{\sigma}_n'  ) 
&\leq \qty(R_2+ \epsilon_0+\epsilon_2)E_{n,2}.
\Label{TY2}
\end{align}
In addition, 
using the $O(1)$ quantity $c_n$ defined in \eqref{NMS},
we define an $O(1)$ quantity
\begin{align}
c_n^\prime\coloneqq \max\qty{c_n+\frac{c_n}{n},R_2+\epsilon_0+\epsilon_2}=O(1)\quad\text{as $n\to\infty$},
\end{align}
so that we obtain from~\eqref{BNF2_3}
\begin{align}
\tilde{\sigma}_n'\geq  \frac{e^{-c_n}}{3}
\sigma_\mathrm{full}^{\otimes n} \geq
e^{-n\qty(c_n+\frac{c_n}{n})} \mathds{1}\geq e^{-nc_n^\prime} \mathds{1}.
\end{align}
Then, due to the commutativity, we have
\begin{align}
\frac{1}{n}E_{n,3} \qty(\log {\cal E}_n\qty(\rho^{\otimes n}) - \log \tilde{\sigma}_n'  ) 
\leq
- \frac{1}{n}E_{n,3} \log \tilde{\sigma}_n' \leq c_n^\prime E_{n,3}.
\Label{TY3}
\end{align}

Therefore, it holds that
\begin{align}
\frac{1}{n}D\left({\cal E}_n\qty(\rho^{\otimes n})\middle\| \tilde{\sigma}_n'\right)
&= \sum_{j=1}^3 \frac{1}{n}\Tr [{\cal E}_n\qty(\rho^{\otimes n}) 
E_{n,j} 
\qty(\log {\cal E}_n\qty(\rho^{\otimes n}) - \log \tilde{\sigma}_n'  ) ]\notag\\
&\stackrel{(a)}{\leq} \Tr [{\cal E}_n\qty(\rho^{\otimes n}) E_{n,1}](R_{1,\epsilon}+\epsilon_2)+\Tr [{\cal E}_n\qty(\rho^{\otimes n}) E_{n,2}](R_2+ \epsilon_0+\epsilon_2)+\Tr [{\cal E}_n\qty(\rho^{\otimes n}) E_{n,3}]c_n^\prime\notag\\
&\stackrel{(b)}{=} \qty(R_{1,\epsilon}+\epsilon_2)
+\Tr [ P_{n,1}{\cal E}_n(\rho)] \qty(\qty(R_2+ \epsilon_0+\epsilon_2)-\qty(R_{1,\epsilon}+\epsilon_2))
+\Tr [ P_{n,2}{\cal E}_n(\rho)](c_n^\prime-\qty(R_2+ \epsilon_0+\epsilon_2)),
\Label{BN2T}
\end{align}
where
$(a)$ follows from the relations
\eqref{TY1}, \eqref{TY2}, and \eqref{TY3}, and
$(b)$ follows from the definitions of the projections
$E_{n,1}$, $E_{n,2}$, and $E_{n,3}$ in~\eqref{eq:E_n_1},~\eqref{eq:E_n_2}, and~\eqref{eq:E_n_3}.

Using the general relation 
$\liminf_{n \to \infty}(\eta_n+\eta_n')
\le
\liminf_{n \to \infty}\eta_n
+\limsup_{n \to \infty}\eta_n' $ for two sequences
$\eta_n,\eta_n'$,
by taking the limit $n\to \infty$ in \eqref{BN2T}, 
it follows from~\eqref{eq:projection_limsup} and~\eqref{eq:projection_liminf} that
\begin{align}
\liminf_{n\to \infty}\frac{1}{n}D({\cal E}_n(\rho_n)\| \tilde{\sigma}_n')
&\leq 
\qty(R_{1,\epsilon}+\epsilon_2)\nonumber\\
&\quad+\liminf_{n\to\infty}\Tr [ P_{n,1}{\cal E}_n\qty(\rho_n)] \qty(\qty(R_2+ \epsilon_0+\epsilon_2)-\qty(R_{1,\epsilon}+\epsilon_2))\nonumber\\
&\quad+\limsup_{n\to\infty}\Tr [ P_{n,2}{\cal E}_n\qty(\rho_n)](c_n^\prime-\qty(R_2+ \epsilon_0+\epsilon_2))\notag\\
\label{eq:S92}
&\leq 
\qty(R_{1,\epsilon}+\epsilon_2)\nonumber\\
&\quad+\qty(1-\epsilon) \qty(\qty(R_2+ \epsilon_0+\epsilon_2)-\qty(R_{1,\epsilon}+\epsilon_2))\nonumber\\
&\quad+\epsilon_1(c'-\qty(R_2+ \epsilon_0+\epsilon_2))\\
&\to R_{1,\epsilon}+(1-\epsilon)(R_2+\epsilon_0-R_{1,\epsilon})\quad\text{as $\epsilon_1,\epsilon_2 \to 0$},
\end{align}
where $c'\coloneqq\limsup_{n\to\infty}c_n'$.
Since~\eqref{eq:S92} holds for arbitrarily small $\epsilon_1$ and $\epsilon_2$, for the fixed $\epsilon_0$ in~\eqref{eq:epsilon_0}, we obtain
\begin{align}
\liminf_{n\to \infty}\frac{1}{n}D\left({\cal E}_n(\rho^{\otimes n})\middle\|\tilde{\sigma}_n'\right)
&\leq R_{1,\epsilon}+(1-\epsilon)(R_2+\epsilon_0-R_{1,\epsilon})
=R_{1,\epsilon} +\qty(1-\tilde{\epsilon})( R_2-R_{1,\epsilon}),
\end{align}
which yields~\eqref{NNI2G}.
\end{proof}

\subsubsection{Proof of the direct part of the generalized quantum Stein's lemma}
\label{sec:proof_of_the_direct_part}

Using Lemma~\ref{L5}, we prove Proposition~\ref{prp:direct} as follows.
Note that, in Methods, we have presented a short proof of Proposition~\ref{prp:direct} from Lemma~\ref{L6} by contradiction; by contrast, we here present a constructive proof.
A merit of the constructive proof is that we see from the proof how to construct the optimal sequence of the states in the minimization of the regularized relative entropy of resource, clarifying its connection to the optimal state in maximizing the type II error in the corresponding non-IID version of quantum hypothesis testing.

\begin{proof}[Proof of Proposition~\ref{prp:direct}]

For any $\epsilon\in(0,1)$ and $R_{1,\epsilon}$ in~\eqref{eq:R_1}, the goal of our proof is to find a sequence $\qty{\sigma_{n,*}\in\mathcal{S}^{(n)}}_n$ of states such that 
\begin{align}
R_{1,\epsilon}\geq
\liminf_{n \to \infty}
\frac{1}{n}
D\left(\rho^{\otimes n}\middle\| \sigma_{n,*}\right),
\label{NBE}
\end{align}
which would imply the conclusion
\begin{align}
&\liminf_{n\to \infty}-\frac{1}{n}\max_{\sigma \in\mathcal{S}^{(n)}}
\log \beta_\epsilon \left(\rho^{\otimes n}\middle\| \sigma\right)\geq\lim_{n\to \infty}\frac{1}{n}\min_{
\sigma \in\mathcal{S}^{(n)}
}
D\left(\rho^{\otimes n}\middle\| \sigma\right).
\label{BHC}
\end{align}
    
Given $R_{1,\epsilon}$ in~\eqref{eq:R_1}, we start with any sequence $\qty{\sigma_n\in\mathcal{S}^{(n)}}_n$ that does not satisfy~\eqref{NBE}.
We apply Lemma~\ref{L6} to 
the sequence 
$\qty{\sigma_n\in\mathcal{S}^{(n)}}_n$, which guarantees the existence of an updated sequence $\qty{\sigma_{n,1}}_n$ satisfying
\begin{align}
\liminf_{n\to \infty}\frac{1}{n}
D\left(\rho^{\otimes n}\middle\| \sigma_{n,1}\right) -R_{1,\epsilon}
&\le 
(1-\tilde{\epsilon})(R_2-R_{1,\epsilon}).
\end{align}
Applying Lemma~\ref{L6} to the case when 
$\{\sigma_n\}$ is $\{\sigma_{n,1}\}$, 
we obtain a sequence
$\qty{\sigma_{n,2}}_n$ satisfying
\begin{align}
\liminf_{n\to \infty}\frac{1}{n}
D\left(\rho^{\otimes n}\middle\| \sigma_{n,2}\right) -R_{1,\epsilon}
&\le 
(1-\tilde{\epsilon})^2(R_2-R_{1,\epsilon}).
\end{align}
When Lemma~\ref{L6} is applied $k$ times, there exists a sequence 
$\qty{\sigma_{n,k}}_n$ to satisfy
\begin{align}
\liminf_{n\to \infty}\frac{1}{n}
D\left(\rho^{\otimes n}\middle\| \sigma_{n,k}\right) -R_{1,\epsilon}
&\le 
(1-\tilde{\epsilon})^k(R_2-R_{1,\epsilon}).\Label{ZXC}
\end{align}
Due to \eqref{ZXC}, for any vanishing sequence $\{\epsilon_k>0\}_k$ ($\lim_{k\to\infty}\epsilon_k=0$), there exists a subsequence $\{n_k\}_k$ such that for all $k$
\begin{align}
\frac{1}{n_k}
D\left(\rho^{\otimes n_k}\middle\| \sigma_{n_k,k}\right) -R_{1,\epsilon}
&\le 
(1-\tilde{\epsilon})^k(R_2-R_{1,\epsilon})+\epsilon_k\\
&\to 0\quad\text{as $k\to\infty$}.
\end{align}
Therefore, a sequence $\{\sigma_{n,*}\}_n$ such that $\sigma_{n_k,*}=\sigma_{n_k,k}$ for all $k$ satisfies~\eqref{NBE}.
\end{proof}

\subsection{Overall proof of the generalized quantum Stein's lemma}
\label{sec:summary}
To summarize, our proof of the generalized quantum Stein's lemma in Theorem~\ref{TH1_supplementary_information} is completed as follows.

\begin{proof}[Proof of Theorem~\ref{TH1_supplementary_information}]
Due to Proposition~\ref{prp:strong_converse}, we have the strong converse part
\begin{align}
\limsup_{n\to \infty} -\frac{1}{n}\log\max_{\sigma\in\mathcal{S}^{(n)}} \beta_\epsilon\left(\rho^{\otimes n}\middle\| \sigma\right)\leq\lim_{n\to \infty} \frac{1}{n}\min_{\sigma \in \mathcal{S}^{(n)}}D\left(\rho^{\otimes n}\middle\|\sigma\right).   
\end{align}
Also, due to Proposition~\ref{prp:direct}, we obtain the direct part
\begin{align}
\liminf_{n\to \infty} -\frac{1}{n}\log\max_{\sigma\in\mathcal{S}^{(n)}} \beta_\epsilon\left(\rho^{\otimes n}\middle\| \sigma\right)\geq\lim_{n\to \infty} \frac{1}{n}\min_{\sigma \in \mathcal{S}^{(n)}}D\left(\rho^{\otimes n}\middle\|\sigma\right).   
\end{align}
With these two parts, Proposition~\ref{prop:simplification} indicates that Theorem~\ref{TH1_supplementary_information} holds, i.e.,
\begin{align}
\lim_{n\to \infty} -\frac{1}{n}\log \beta_\epsilon\left(\rho^{\otimes n}\middle\| \mathcal{S}^{(n)}\right)=\lim_{n\to \infty} \frac{1}{n}\min_{\sigma \in \mathcal{S}^{(n)}}D\left(\rho^{\otimes n}\middle\|\sigma\right).   
\end{align}
\end{proof}

\section{Proof of the second law of QRTs for CQ channels}
\label{sec:second_law}

As an application of the generalized quantum Stein's lemma to QRTs, we prove the second law of QRTs for CQ channels, that is, Theorem~\ref{thm:second_law} in Methods, which includes that for states as a special case. 
As formulated in Methods, 
with $J(\mathcal{N})\coloneqq(\id\otimes\mathcal{N})\qty(\Phi_d)$ denoting the (normalized) Choi state of a channel $\mathcal{N}\in\mathcal{C}(\mathcal{H}_\mathrm{in}\to\mathcal{H}_\mathrm{out})$, where $\id$ is the identity map, $\Phi_d\coloneqq\ket{\Phi_d}\bra{\Phi_d}$, $\ket{\Phi_d}\coloneqq\frac{1}{\sqrt{d}}\sum_{k=0}^{d-1}\ket{k}\otimes\ket{k}$, and $d=\dim(\mathcal{H}_\mathrm{in})$,
consider any QRT with a family of sets $\mathcal{F}(\mathcal{H}_\mathrm{in}\to\mathcal{H}_\mathrm{out})\subset\mathcal{C}_\mathrm{CQ}(\mathcal{H}_\mathrm{in}\to\mathcal{H}_\mathrm{out})$ of free CQ channels satisfying the following properties.
\begin{enumerate}[label={CQ\arabic*}]
    \item \label{p1:si}The set $\mathcal{F}(\mathcal{H}_\mathrm{in}\to\mathcal{H}_\mathrm{out})$ is closed and convex.
    \item \label{p3:si}For any $\mathcal{N}_\mathrm{free}\in\mathcal{F}\qty(\mathcal{H}_\mathrm{in}\to\mathcal{H}_\mathrm{out})$ and $\mathcal{N}_\mathrm{free}^\prime\in\mathcal{F}(\mathcal{H}_\mathrm{in}^\prime\to\mathcal{H}_\mathrm{out}^\prime)$, it holds that $\mathcal{N}_\mathrm{free}\otimes\mathcal{N}_\mathrm{free}^\prime\in\mathcal{F}(\mathcal{H}_\mathrm{in}\otimes\mathcal{H}_\mathrm{in}^\prime\to\mathcal{H}_\mathrm{out}\otimes\mathcal{H}_\mathrm{out}^\prime)$.
    \item \label{p4:si}For each $\mathcal{H}_\mathrm{in}$ and $\mathcal{H}_\mathrm{out}$, $\mathcal{F}(\mathcal{H}_\mathrm{in}\to\mathcal{H}_\mathrm{out})$ contains $\mathcal{N}_\mathrm{full}\in\mathcal{F}(\mathcal{H}_\mathrm{in}\to\mathcal{H}_\mathrm{out})$ that outputs a full-rank state $\rho_\mathrm{full}$ of $\mathcal{H}_\mathrm{out}$ for any input; that is, its Choi state is $J(\mathcal{N}_\mathrm{full})=\qty(\mathds{1}/d)\otimes\rho_\mathrm{full}>0$, where $d=\dim(\mathcal{H}_\mathrm{in})$.
\end{enumerate}
We define
\begin{align}
\label{eq:relative_entropy_of_resource_si}
R_\mathrm{R}\qty(\mathcal{N})&\coloneqq\min_{\mathcal{N}_\mathrm{free}\in\mathcal{F}\qty(\mathcal{H}_\mathrm{in}\to\mathcal{H}_\mathrm{out})}D(J(\mathcal{N})\|J(\mathcal{N}_\mathrm{free})),\\
\label{eq:regularized_relative_entropy_resource_si}
R_\mathrm{R}^\infty\qty(\mathcal{N})&\coloneqq\lim_{n\to\infty}\frac{1}{n}R_\mathrm{R}\qty(\mathcal{N}^{\otimes n}),\\
\label{eq:R_G_si}
R_\mathrm{G}\qty(\mathcal{N})&\coloneqq\min\left\{s\geq 0:\frac{\mathcal{N}+s\mathcal{N}^\prime}{1+s}\in\mathcal{F}\qty(\mathcal{H}_\mathrm{in}\to\mathcal{H}_\mathrm{out}),{\mathcal{N}^\prime\in\mathcal{C}_\mathrm{CQ}\qty(\mathcal{H}_\mathrm{in}\to\mathcal{H}_\mathrm{out})}\right\}.
\end{align}
Let $\tilde{\mathcal{O}}$ denote the set of asymptotically free operations satisfying the following.
\begin{description}
    \item[Asymptotically resource-non-generating property] Any sequence $\qty{\Theta_n}_n\in\tilde{\mathcal{O}}$ of operations in this set asymptotically generates no resource from any free CQ channels in terms of the generalized robustness, i.e., for any sequence 
    $\qty{\mathcal{N}_\mathrm{free}^{(n)}\in\mathcal{F}\qty(\mathcal{H}_\mathrm{in}^{(1)\otimes n}\to\mathcal{H}_\mathrm{out}^{(1)\otimes n})}_n$ of free CQ channels
    \begin{align}
    \label{eq:asymptotically_resource_non_generating_operations_si}
    \lim_{n\to\infty}
    R_\mathrm{G}\qty(\Theta_n\qty(\mathcal{N}_\mathrm{free}^{(n)}))=0,
    \end{align}
    where $R_\mathrm{G}$ is defined in~\eqref{eq:R_G_si}.
    \item[Asymptotic continuity] For any two sequences $\qty{\mathcal{N}_n}_n$ and $\qty{\mathcal{N}_n^\prime}_n$ of CQ channels satisfying $\lim_{n\to\infty}\frac{1}{2}\|J\qty(\mathcal{N}_n)-J\qty(\mathcal{N}_n^\prime)\|_1=0$, any sequence $\qty{\Theta_n}_n\in\tilde{\mathcal{O}}$ of operations in this set satisfies
    \begin{align}
    \label{eq:condition_asymptotic_continuity_si}
        \lim_{n\to\infty}\frac{1}{2}\left\|J\qty(\Theta_n\qty(\mathcal{N}_n))-J\qty(\Theta_n\qty(\mathcal{N}_n^\prime))\right\|_1=0.
    \end{align}
\end{description}
Under $\tilde{\mathcal{O}}$, the asymptotic conversion rate of parallel quantum channels is
\begin{align}
\label{eq:conversion_rate_si}
    &r_{\tilde{\mathcal{O}}}\qty(\mathcal{N}_1\to\mathcal{N}_2)\coloneqq\sup\left\{r\geq 0:\exists\qty{\Theta_n}_n\in\tilde{\mathcal{O}},\liminf_{n\to\infty}\frac{1}{2}\left\|J\qty(\Theta_n\qty(\mathcal{N}_1^{\otimes n}))-J\qty(\mathcal{N}_2^{\otimes \lceil rn\rceil})\right\|_1=0\right\}.
\end{align}

For readability, we here repeat the statement of the second law of QRTs for states and CQ channels.
\begin{theorem}[Second law of QRTs for states and CQ channels]
\label{thm:second_law_si}
Given any family $\mathcal{F}$ of sets of free CQ channels satisfying Properties~\ref{p1:si},\ref{p3:si}, and~\ref{p4:si},
for any CQ channels $\mathcal{N}_1\in\mathcal{C}_\mathrm{CQ}\qty(\mathcal{H}_\mathrm{in}^{(1)}\to\mathcal{H}_\mathrm{out}^{(1)})$ and $\mathcal{N}_2\in\mathcal{C}_\mathrm{CQ}\qty(\mathcal{H}_\mathrm{in}^{(2)}\to\mathcal{H}_\mathrm{out}^{(2)})$ satisfying $R_\mathrm{R}^\infty\qty(\mathcal{N}_j)>0$ ($j\in\{1,2\}$),
the asymptotic conversion rate~\eqref{eq:conversion_rate_si} between $\mathcal{N}_1$ and $\mathcal{N}_2$ under $\tilde{\mathcal{O}}$  satisfying~\eqref{eq:asymptotically_resource_non_generating_operations_si} and~\eqref{eq:condition_asymptotic_continuity_si} is
\begin{align}
    r_{\tilde{\mathcal{O}}}\qty(\mathcal{N}_1\to\mathcal{N}_2)=\frac{R_\mathrm{R}^\infty\qty(\mathcal{N}_1)}{R_\mathrm{R}^\infty\qty(\mathcal{N}_2)},
\end{align}
where $R_\mathrm{R}^\infty$ is defined in~\eqref{eq:regularized_relative_entropy_resource_si}.
\end{theorem}

To show this, in Sec.~\ref{sec:relation}, we will first provide Lemmas~\ref{lem:regularized_relative_entropy_generalized_robustness} and~\ref{lem:converse_robustness} to show Corollary~\ref{cor:robustness_characterization}, which provides a relation between $R_\mathrm{R}^\infty$ and $R_\mathrm{G}$, generalizing Proposition~II.1 and Corollary III.2 in Ref.~\cite{Brandao2010} in QRTs for states to those for CQ channels.
Using Corollary~\ref{cor:robustness_characterization}, we analyze the direct part in Sec.~\ref{sec:direct_second_law}, where we will show $r_{\tilde{\mathcal{O}}}\qty(\mathcal{N}_1\to\mathcal{N}_2)\geq\frac{R_\mathrm{R}^\infty\qty(\mathcal{N}_1)}{R_\mathrm{R}^\infty\qty(\mathcal{N}_2)}$ in Proposition~\ref{prp:direct_asymptotic_conversion}.
As for the other direction of inequality, we analyze the converse part in Sec.~\ref{sec:converse_second_law}, where, by showing an asymptotic version of monotonicity of $R_\mathrm{R}^\infty$ in Lemma~\ref{lem:monotonicity}, we will show $r_{\tilde{\mathcal{O}}}\qty(\mathcal{N}_1\to\mathcal{N}_2)\leq\frac{R_\mathrm{R}^\infty\qty(\mathcal{N}_1)}{R_\mathrm{R}^\infty\qty(\mathcal{N}_2)}$ in Proposition~\ref{prp:converse}.
As a whole, the proof of Theorem~\ref{thm:second_law_si} is summarized as follows.
\begin{proof}[Proof of Theorem~\ref{thm:second_law_si}]
The combination of Propositions~\ref{prp:direct_asymptotic_conversion} and~\ref{prp:converse} completes the proof of Theorem~\ref{thm:second_law_si}.
\end{proof}

\subsection{Relation between the regularized relative entropy of resource and the generalized robustness}
\label{sec:relation}
We discuss a relation between the regularized relative entropy of resource and the generalized robustness for CQ channels, which extends and simplifies their relation between those for quantum states shown by Ref.~\cite{Brandao2010}.

We first provide an upper bound of the relative entropy of resource in terms of the generalized robustness.

\begin{lemma}\label{lem:regularized_relative_entropy_generalized_robustness}
    For any sequence $\qty{\mathcal{N}_n}_n$ of CQ channels, we have
    \begin{align}
    \label{eq:R_R_R_G}
        \liminf_{n\to\infty}\frac{1}{n}R_\mathrm{R}\qty(\mathcal{N}_n)\leq\inf_{\qty{\tilde{\mathcal{N}}_n}_n}\qty{\liminf_{n\to\infty}\frac{1}{n}\log\qty(1+R_\mathrm{G}\qty(\tilde{\mathcal{N}}_n)):\lim_{n\to\infty}\frac{1}{2}\left\|J\qty(\tilde{\mathcal{N}}_n)-J\qty(\mathcal{N}_n)\right\|_1=0}.
    \end{align}
\end{lemma}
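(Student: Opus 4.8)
The plan is to obtain~\eqref{eq:R_R_R_G} by combining a single-shot bound of the relative entropy of resource by the generalized robustness with the asymptotic continuity of $R_\mathrm{R}$, the latter allowing me to transfer the estimate from each approximating sequence $\{\tilde{\mathcal{N}}_n\}_n$ back to $\{\mathcal{N}_n\}_n$.

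\emph{Step 1 (single-shot bound).} First I would show that $R_\mathrm{R}(\tilde{\mathcal{N}})\le\log\qty(1+R_\mathrm{G}(\tilde{\mathcal{N}}))$ for every channel $\tilde{\mathcal{N}}$. Setting $s\coloneqq R_\mathrm{G}(\tilde{\mathcal{N}})$, the definition~\eqref{eq:R_G_methods} supplies a channel $\mathcal{N}'\in\mathcal{C}$ with $\mathcal{N}_\mathrm{free}\coloneqq\qty(\tilde{\mathcal{N}}+s\mathcal{N}')/(1+s)\in\mathcal{F}$; since $J(\mathcal{N}')\ge 0$, the Choi operators satisfy $J(\tilde{\mathcal{N}})\le(1+s)J(\mathcal{N}_\mathrm{free})$, which also forces $\supp J(\tilde{\mathcal{N}})\subseteq\supp J(\mathcal{N}_\mathrm{free})$. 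Using the bound $D(\rho\|\sigma)\le\log\inf\qty{\lambda:\rho\le\lambda\sigma}$ (the $\alpha\to\infty$ limit of the monotone family $\widetilde{D}_\alpha$ in~\eqref{eq:renyi_relative_entropy}) with $\rho=J(\tilde{\mathcal{N}})$, $\sigma=J(\mathcal{N}_\mathrm{free})$, $\lambda=1+s$, I obtain $D(J(\tilde{\mathcal{N}})\|J(\mathcal{N}_\mathrm{free}))\le\log(1+s)$, and minimizing over free channels gives the claim.

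\emph{Step 2 (asymptotic continuity).} Property~\ref{p4:methods} furnishes a full-rank free Choi state, so $R_\mathrm{R}$ is finite and obeys a Donald--Horodecki/Winter-type continuity bound of the product form $|R_\mathrm{R}(\mathcal{N})-R_\mathrm{R}(\tilde{\mathcal{N}})|\le\epsilon\log(\dim)+f(\epsilon)$, where $\epsilon=\tfrac12\|J(\mathcal{N})-J(\tilde{\mathcal{N}})\|_1$, $\dim$ is the dimension of the Choi-state space, and $f(\epsilon)\to 0$ as $\epsilon\to 0$. Applying this to $\mathcal{N}_n$ and $\tilde{\mathcal{N}}_n$, whose Choi states act on a space whose dimension grows only exponentially in $n$ (so $\log(\dim)=O(n)$), and writing $\epsilon_n\coloneqq\tfrac12\|J(\mathcal{N}_n)-J(\tilde{\mathcal{N}}_n)\|_1\to 0$, I get
\begin{align}
\frac1n\abs{R_\mathrm{R}(\mathcal{N}_n)-R_\mathrm{R}(\tilde{\mathcal{N}}_n)}\le\epsilon_n\cdot O(1)+\frac{f(\epsilon_n)}{n}\longrightarrow 0.
\end{align}

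\emph{Step 3 (assembly).} For any admissible $\{\tilde{\mathcal{N}}_n\}_n$, Step 2 shows that $\tfrac1n R_\mathrm{R}(\mathcal{N}_n)$ and $\tfrac1n R_\mathrm{R}(\tilde{\mathcal{N}}_n)$ differ by a null sequence and hence share the same $\liminf$, so with Step 1,
\begin{align}
\liminf_{n\to\infty}\frac1n R_\mathrm{R}(\mathcal{N}_n)
=\liminf_{n\to\infty}\frac1n R_\mathrm{R}(\tilde{\mathcal{N}}_n)
\le\liminf_{n\to\infty}\frac1n\log\qty(1+R_\mathrm{G}(\tilde{\mathcal{N}}_n)).
\end{align}
Taking the infimum over all admissible sequences $\{\tilde{\mathcal{N}}_n\}_n$ yields~\eqref{eq:R_R_R_G}. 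I expect the continuity estimate of Step 2 to be the main obstacle: because the Choi dimension grows exponentially, $\log(\dim)$ scales like $n$, so a crude Fannes-type bound would diverge after division by $n$; the argument works only because the bound has the genuine product structure (trace distance)$\,\times\,(\log\text{-dimension})$ and because the full-rank free state keeps $R_\mathrm{R}$ finite, so that the vanishing of $\epsilon_n$ dominates the linear growth of $\log(\dim)$.
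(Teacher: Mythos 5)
Your proposal is correct and follows essentially the same route as the paper: the single-shot bound $R_\mathrm{R}(\tilde{\mathcal{N}})\le\log\qty(1+R_\mathrm{G}(\tilde{\mathcal{N}}))$ obtained from the operator inequality $J(\tilde{\mathcal{N}})\le(1+R_\mathrm{G}(\tilde{\mathcal{N}}))J(\mathcal{N}_\mathrm{free})$ (the paper phrases this via operator monotonicity of $\log$, you via $D\le\widetilde{D}_\infty$, which is the same estimate), combined with Winter-type asymptotic continuity of $R_\mathrm{R}$ made harmless after division by $n$ because the additive term is $\epsilon_n\times O(n)$ with $\epsilon_n\to 0$, exactly as the paper notes when invoking the full-rank free Choi state. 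No gaps.
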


\begin{proof}
    Take any sequence $\qty{\tilde{\mathcal{N}}_n}_n$ satisfying
    \begin{align}
    \label{eq:J_continuity_bound}
        \lim_{n\to\infty}\frac{1}{2}\left\|J\qty(\tilde{\mathcal{N}}_n)-J\qty(\mathcal{N}_n)\right\|_1=0.
    \end{align}
    
Given any CQ channel $\mathcal{N}$,
due to the definition of $R_\mathrm{G}\qty(\mathcal{N})$,
there exists a free CQ channel $\mathcal{N}_\mathrm{free}$
and a CQ channel $\mathcal{N}'$ such that
\begin{align}
\frac{J(\mathcal{N})+R_\mathrm{G}\qty(\mathcal{N}) J(\mathcal{N}^\prime)}{1+R_\mathrm{G}\qty(\mathcal{N})}
=J(\mathcal{N}_\mathrm{free}).
\end{align}
Thus, we have
\begin{align}
(1+R_\mathrm{G}\qty(\mathcal{N}))J(\mathcal{N}_\mathrm{free})
-J(\mathcal{N})=
R_\mathrm{G}\qty(\mathcal{N}) J(\mathcal{N}^\prime)\ge 0,
\end{align}
which implies
\begin{align}
J(\mathcal{N}) \le
(1+R_\mathrm{G}\qty(\mathcal{N}))J(\mathcal{N}_\mathrm{free}).
\end{align}
For each $n$, applying this argument to $\mathcal{N}=\tilde{\mathcal{N}}_n$, 
we have a free CQ channel $\mathcal{N}_\mathrm{free}^{(n)}$ such that    \begin{align}
        J\qty(\tilde{\mathcal{N}}_n)\leq \qty(1+R_\mathrm{G}\qty(\tilde{\mathcal{N}}_n))J\qty(\mathcal{N}_\mathrm{free}^{(n)}).
    \end{align}
(For this derivation, also see Ref.~\cite[(55)--(58)]{Takagi2019a} and Ref.~\cite[Lemma~5]{doi:10.1142/S0219749909005298}.
Note that Ref.~\cite{Takagi2019a} considers Choi states for general channels, but the same argument holds for Choi states for CQ channels.)

    Then, due to the operator monotonicity of $\log$ 
    (see also Ref.~\cite[Sec.~A.4]{hayashi2016quantum} along with Ref.~\cite[Proposition~S17]{yamasaki2024generalized}),
    it holds that
    \begin{align}
    \label{eq:D_R_G}
        D\left(J\qty(\tilde{\mathcal{N}}_n)\middle\|J\qty(\mathcal{N}_\mathrm{free}^{(n)})\right)&\leq\log\qty(1+R_\mathrm{G}\qty(\tilde{\mathcal{N}}_n)).
    \end{align}
    Therefore, we have
    \begin{align}
        \liminf_{n\to\infty}\frac{1}{n}R_\mathrm{R}\qty(\mathcal{N}_n)
        &\stackrel{(a)}{=}\liminf_{n\to\infty}\frac{1}{n}R_\mathrm{R}\qty(\tilde{\mathcal{N}}_n)\notag\\
        &\stackrel{(b)}{\leq}\liminf_{n\to\infty}\frac{1}{n}D\left(J\qty(\tilde{\mathcal{N}}_n)\middle\|J\qty(\mathcal{N}_\mathrm{free}^{(n)})\right)\\
        \label{eq:lemma_S4_conclusion}
        &\stackrel{(c)}{\leq}\liminf_{n\to\infty}\frac{1}{n}\log\qty(1+R_\mathrm{G}\qty(\tilde{\mathcal{N}}_n)),
    \end{align}
    where $(a)$ follows from~\eqref{eq:J_continuity_bound} due to the asymptotic continuity~\cite[Lemma~7]{Winter2016} (note that $\min_{\mathcal{N}_\mathrm{free}\in\mathcal{F}} D\left(J\qty(\mathcal{N}_n)\middle\|J\qty(\mathcal{N}_\mathrm{free})\right)=O(n)$ due to the finite dimension and the existence of the full-rank Choi state $J(\mathcal{N}_\mathrm{full})^{\otimes n}$), 
$(b)$ follows from the definition~\eqref{eq:relative_entropy_of_resource_si} of $R_\mathrm{R}\qty(\tilde{\mathcal{N}}_n)$,
        and $(c)$ is~\eqref{eq:D_R_G}.
    Since~\eqref{eq:lemma_S4_conclusion} holds for any choice of $\qty{\tilde{\mathcal{N}}_n}_n$ satisfying~\eqref{eq:J_continuity_bound}, we obtain the conclusion.
\end{proof}

Conversely, we provide a lower bound of the relative entropy of resource in terms of the generalized robustness.

\begin{lemma}
\label{lem:converse_robustness}
For any CQ channel $\mathcal{N}$, there exists a sequence $\qty{\tilde{\mathcal{N}}_{n}}_n$ of CQ channels such that
\begin{align}
&R_\mathrm{R}^\infty(\mathcal{N}) \geq 
\limsup_{n\to\infty}
\frac{1}{n}
\log\qty(1+R_\mathrm{G}\qty(\tilde{\mathcal{N}}_{n})),\label{NMN1}\\
&\lim_{n\to \infty}\frac{1}{2}\left\| J\qty(\tilde{\mathcal{N}}_{n})-J(\mathcal{N}^{\otimes n}) \right\|_1
= 0.\label{NMN2}
\end{align}
\end{lemma}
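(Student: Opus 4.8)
The plan is to reduce \eqref{NMN1} to a one-shot operator-domination bound and then realize it by an asymptotic-equipartition construction carried out block by block, in direct parallel with the strong-converse argument of Lemma~\ref{L5} but now producing the \emph{approximating} channel $\tilde{\mathcal{N}}_n$ rather than the competing free state. The entry point is the Choi-state rephrasing of the generalized robustness already used in Lemma~\ref{lem:regularized_relative_entropy_generalized_robustness}: writing $D_{\max}(\rho\|\sigma)\coloneqq\log\min\qty{\lambda\geq 0:\rho\leq\lambda\sigma}$, one has $\log\qty(1+R_\mathrm{G}(\mathcal{M}))=\min_{\mathcal{M}_\mathrm{free}\in\mathcal{F}}D_{\max}\qty(J(\mathcal{M})\|J(\mathcal{M}_\mathrm{free}))$. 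Thus it suffices, for each fixed block length $m$, to construct channels whose Choi states approximate $J(\mathcal{N}^{\otimes n})$ in trace distance and obey $J(\tilde{\mathcal{N}}_n)\leq\lambda_n\,J(\mathcal{N}_\mathrm{free}^{(n)})$ for some free $\mathcal{N}_\mathrm{free}^{(n)}$ with $\tfrac1n\log\lambda_n\to\tfrac1m R_\mathrm{R}(\mathcal{N}^{\otimes m})$; since $R_\mathrm{R}^\infty(\mathcal{N})=\lim_{m\to\infty}\tfrac1m R_\mathrm{R}(\mathcal{N}^{\otimes m})$, a diagonal selection over $m$ and the slack parameters below then yields \eqref{NMN1}--\eqref{NMN2}.

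For the per-block construction I would fix $m$, pick $\mathcal{N}_{\mathrm{free},m}$ attaining $R_\mathrm{R}(\mathcal{N}^{\otimes m})=D\qty(J(\mathcal{N}^{\otimes m})\|J(\mathcal{N}_{\mathrm{free},m}))$, and abbreviate $\xi\coloneqq J(\mathcal{N}^{\otimes m})$, $\tau\coloneqq J(\mathcal{N}_{\mathrm{free},m})$. Writing $n=km+r$ with $0\leq r<m$, Properties~\ref{p3:methods} and~\ref{p4:methods} make $\sigma_n^\mathrm{free}\coloneqq\tau^{\otimes k}\otimes J(\mathcal{N}_\mathrm{full})^{\otimes r}$ a free Choi state, mirroring $\tilde{\sigma}_n^\ast$ of Lemma~\ref{L5}. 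Introduce the pinching $\mathcal{E}_k$ onto the eigenspaces of $\tau^{\otimes k}$; since $\tau$ acts on a fixed finite-dimensional space, $\tau^{\otimes k}$ has only polynomially many distinct eigenvalues $N_k$, so the pinching inequality $\xi^{\otimes k}\leq N_k\,\mathcal{E}_k(\xi^{\otimes k})$ holds with $\log N_k=o(k)$, exactly as in \eqref{E60}. Let $P_k\coloneqq\qty{\mathcal{E}_k(\xi^{\otimes k})\leq e^{k(R_\mathrm{R}(\mathcal{N}^{\otimes m})+\delta)}\tau^{\otimes k}}$ be the commuting-case threshold projector; the achievability side of quantum Stein's lemma for the pinched spectrum gives $\Tr[\mathcal{E}_k(\xi^{\otimes k})P_k]\to1$, and because $P_k$ commutes with the eigenprojections of $\tau^{\otimes k}$ this equals $\Tr[\xi^{\otimes k}P_k]\to1$. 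Setting $\bar\xi_k\coloneqq P_k\,\xi^{\otimes k}\,P_k$, the gentle-measurement lemma yields $\|\bar\xi_k-\xi^{\otimes k}\|_1\to0$, while the pinching inequality transfers the one-sided bound, $\bar\xi_k\leq N_k\,P_k\,\mathcal{E}_k(\xi^{\otimes k})\,P_k\leq N_k\,e^{k(R_\mathrm{R}(\mathcal{N}^{\otimes m})+\delta)}\tau^{\otimes k}$. Tensoring $\bar\xi_k$ with the exact copies $J(\mathcal{N})^{\otimes r}$ (which cost only a bounded, hence $o(n)$, robustness penalty) gives a subnormalized $\bar\rho_n$ with $\|\bar\rho_n-J(\mathcal{N}^{\otimes n})\|_1\to0$ and $\bar\rho_n\leq e^{k(R_\mathrm{R}(\mathcal{N}^{\otimes m})+\delta)+o(n)}\sigma_n^\mathrm{free}$.

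It then remains to repair $\bar\rho_n$ into a genuine Choi state. Since the partial trace contracts the trace norm, $\Tr_{\mathrm{out}}\bar\rho_n\to\mathds{1}/d^n$; rescaling by $c_n\coloneqq d^n\,\lambda_{\max}(\Tr_{\mathrm{out}}\bar\rho_n)\to1$ ensures $\Tr_{\mathrm{out}}(c_n^{-1}\bar\rho_n)\leq\mathds{1}/d^n$, and I complete to $J(\tilde{\mathcal{N}}_n)\coloneqq c_n^{-1}\bar\rho_n+\Delta_n\otimes\rho_\mathrm{full}^{\otimes n}$ with $\Delta_n\coloneqq\mathds{1}/d^n-\Tr_{\mathrm{out}}(c_n^{-1}\bar\rho_n)\geq0$, a valid Choi state lying within $o(1)$ trace distance of $J(\mathcal{N}^{\otimes n})$, which gives \eqref{NMN2}. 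The correction is harmless for the domination: $\Delta_n\leq\mathds{1}/d^n$ forces $\Delta_n\otimes\rho_\mathrm{full}^{\otimes n}\leq J(\mathcal{N}_\mathrm{full}^{\otimes n})$, so $J(\tilde{\mathcal{N}}_n)$ is dominated by a constant multiple of the free Choi state $\tfrac12\qty(\sigma_n^\mathrm{free}+J(\mathcal{N}_\mathrm{full}^{\otimes n}))$ (free by convexity, Property~\ref{p1:methods}) at exponent $k\qty(R_\mathrm{R}(\mathcal{N}^{\otimes m})+\delta)+o(n)$. Hence $\tfrac1n\log\qty(1+R_\mathrm{G}(\tilde{\mathcal{N}}_n))\to\tfrac1m\qty(R_\mathrm{R}(\mathcal{N}^{\otimes m})+\delta)$, and letting $\delta\to0$ and $m\to\infty$ drives the rate to $R_\mathrm{R}^\infty(\mathcal{N})$, establishing \eqref{NMN1} after a diagonal argument fixing a single sequence $\qty{\tilde{\mathcal{N}}_n}_n$.

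The step I expect to be the main obstacle --- and the genuinely new ingredient beyond the states case of Ref.~\cite{Brandao2010} --- is maintaining the trace-preserving (Choi) constraint throughout, i.e.\ smoothing so as to keep closeness in trace distance and a one-sided operator domination by a free Choi state simultaneously, and then repairing the normalization without spoiling either property. The device that reconciles the two requirements is applying the threshold projector $P_k$ of the \emph{pinched} state to the \emph{un-pinched} $\xi^{\otimes k}$: the identity $\Tr[\xi^{\otimes k}P_k]=\Tr[\mathcal{E}_k(\xi^{\otimes k})P_k]$ delivers the trace-distance closeness, while the polynomial pinching inequality transfers the operator bound with only an $o(k)$ loss. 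Coordinating this with the double regularization over $m$ and the slack $\delta$, and with the diagonal extraction of one sequence valid for all $n$, is where the care is concentrated.
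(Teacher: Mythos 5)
Your construction is, in architecture, the paper's own proof: fix a block length $m$ with a near-optimal free channel, pinch with respect to the IID free Choi state $\tau^{\otimes k}$, cut $J(\mathcal{N}^{\otimes km})$ by the threshold projector of the pinched likelihood ratio, transfer the one-sided operator bound through the polynomial pinching inequality, pad the marginal deficit with $\rho_\mathrm{full}$ to restore a genuine Choi state, tensor exact copies of $\mathcal{N}$ onto the residual $n-km$ systems using submultiplicativity of $1+R_\mathrm{G}$, and finish with the double limit in $m,\delta$ and a diagonal extraction. Even the device you single out as the key ingredient (applying the threshold projector of the \emph{pinched} state to the \emph{unpinched} state, using commutation with the pinching for the trace weight and the pinching inequality for the operator domination) is exactly what the paper does; the only presentational difference in that part is that the paper derives $\Tr[P_{k,m}J(\mathcal{N}^{\otimes m})^{\otimes k}]\to 0$ in \eqref{BH1} by contradiction with the strong-converse bound of Lemma~\ref{L5}, whereas you invoke a direct upper-deviation estimate on the pinched spectrum; these are the same content.

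The one step that does not follow as you justify it is the normalization repair via $c_n\coloneqq d^n\lambda_{\max}\left(\Tr_{\mathrm{out}}\bar\rho_n\right)\to 1$. Contractivity of the trace norm under the partial trace gives only $\left\|\Tr_{\mathrm{out}}\bar\rho_n-\mathds{1}/d^n\right\|_\infty\le\left\|\bar\rho_n-J(\mathcal{N}^{\otimes n})\right\|_1=O(\sqrt{\epsilon_k})$ with $\epsilon_k=\Tr[(\mathds{1}-P_k)\xi^{\otimes k}]$, hence only $c_n\le 1+O\left(d^{n}\sqrt{\epsilon_k}\right)$; the deviation exponent governing $\epsilon_k$ is a small $\delta$-dependent constant with no relation to $2m\log d$, so $d^{n}\sqrt{\epsilon_k}$ need not vanish and neither $c_n\to 1$ nor $\Tr[\Delta_n]\to 0$ is secured. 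Your rescaling does guarantee $\Delta_n\ge 0$, but at the cost of this unjustified limit. The paper does not rescale: it adds \emph{exactly} the marginal deficit, $\rho_{k,m}=(\mathds{1}-P_{k,m})J(\mathcal{N}^{\otimes km})(\mathds{1}-P_{k,m})+\left((\mathds{1}/d)^{\otimes km}-\Tr_{\mathrm{out}}[(\mathds{1}-P_{k,m})J(\mathcal{N}^{\otimes km})(\mathds{1}-P_{k,m})]\right)\otimes\rho_\mathrm{full}^{\otimes km}$, whose input marginal is exactly $(\mathds{1}/d)^{\otimes km}$, whose correction has trace $\Tr[P_{k,m}J(\mathcal{N}^{\otimes km})]\to 0$ (controlled by the trace norm, not the operator norm), and whose correction is dominated by $J(\mathcal{N}_\mathrm{full}^{\otimes km})$ so the $D_{\max}$ bound survives. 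Substituting this direct padding for your rescaling (the positivity of the deficit operator is then the one point requiring care, in either version of the repair) closes the gap; the rest of your argument coincides with the paper's.
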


\begin{proof}
\textit{Preparation}:
We choose any $R>R_\mathrm{R}^\infty\qty(\mathcal{N})$.
By definition of $R_\mathrm{R}^\infty$, there exist an integer $m>0$ and a free CQ channel $\mathcal{N}_\mathrm{free}^{(m)}$ on $m$-fold systems such that
\begin{align}
\frac{1}{m}
D\left( J\qty(\mathcal{N}^{\otimes m}) \middle\|  J\qty(\mathcal{N}_\mathrm{free}^{(m)}) \right)
<R. \Label{AS1}
\end{align}
For this fixed $m$, 
let $\mathcal{E}_{k}$ be a pinching map with respect to the state $J\qty(\mathcal{N}_\mathrm{free}^{(m)\otimes k})$; that is, for its spectral decomposition $J\qty(\mathcal{N}_\mathrm{free}^{(m)\otimes k})=\sum_{j=0}^{d_{k}-1} \lambda_j E_j$, we have $\mathcal{E}_{k}\qty(\sigma)=\sum_{j=0}^{d_{k}-1} E_j\sigma E_j$, where $d_{k}$ is the number of different eigenvalues of $J\qty(\mathcal{N}_\mathrm{free}^{(m)\otimes k})$.
Since $J\qty(\mathcal{N}_\mathrm{free}^{(m)\otimes k})$ is supported on the symmetric subspace under the permutation of $k$ fixed-dimensional systems (under the condition that $m$ is fixed), we have (see Ref.~\cite[Sec.~6.2]{hayashi2017group} and Ref.~\cite[Sec.~4.4]{hayashi2017grouprepresentation})
\begin{align}
\label{eq:d_k_bound}
    d_{k}=O(\mathrm{poly}(k))\quad\text{as $k\to\infty$}.
\end{align}
We define a projection
\begin{align}
P_{k}\coloneqq
\qty{\mathcal{E}_{k}\qty(J\qty(\mathcal{N}^{\otimes km}))  \geq e^{k m R} J\qty(\mathcal{N}_\mathrm{free}^{(m)\otimes k}) },\label{BNGH}
\end{align}
so that $J\qty(\mathcal{N}_\mathrm{free}^{(m)\otimes k})$ and $P_{k}$ should commute.

\textit{Construction of a sequence $\qty{\tilde{\mathcal{N}}_{n}}_n$ of CQ channels}:
To construct the sequence $\qty{\tilde{\mathcal{N}}_n}_n$ of channels, 
we take a subsequence $\{n_k\}_k$ of $\{1,2,\ldots\}$ as
\begin{align}
\label{eq:n_l}
    n_k\coloneqq km,
\end{align}
and define a subsequence $\{\rho_{n_k}\}_k$ of states as
\begin{align}
\label{eq:rho_n_k_si}
&\rho_{n_k}\coloneqq 
(\mathds{1}-P_{k}) J\qty(\mathcal{N}^{\otimes km})(\mathds{1}-P_{k})+
\qty(\qty(\frac{\mathds{1}}{d})^{\otimes km} - \Tr_{\mathrm{out}}\qty[(\mathds{1}-P_{k}) J\qty(\mathcal{N}^{\otimes km})(\mathds{1}-P_{k})])
\otimes \rho_\mathrm{full}^{\otimes km},
\end{align}
where $d$ is the dimension of the system for the input of $\mathcal{N}$, $\Tr_{\mathrm{out}}$ is the partial trace of the output system of $\mathcal{N}^{\otimes km}$, and $\rho_\mathrm{full}$ is the full-rank state output by the free CQ channel $\mathcal{N}_\mathrm{full}$ satisfying
\begin{align}
\label{eq:J_N_full}
    J\qty(\mathcal{N}_\mathrm{full})=\frac{\mathds{1}}{d}\otimes\rho_\mathrm{full}.
\end{align}
Since we have by construction
\begin{align}
    \Tr_{\mathrm{out}}\qty[\rho_{n_k}]=\qty(\frac{\mathds{1}}{d})^{\otimes km},
\end{align}
to see that $\rho_{n_k}$ is a Choi state of some quantum channel, 
it suffices to show
\begin{align}
\label{eq:positivity_choi_state}
    \rho_{n_k}\geq 0. 
\end{align}
To confirm~\eqref{eq:positivity_choi_state}, we use the fact that the Choi states of CQ channels are in the form of $J\qty(\mathcal{N}^{\otimes km})=\sum_j(1/d^{km})\ket{j}\bra{j}\otimes \rho_j$, and $P_k$ is also in the form of $\sum_j\ket{j}\bra{j}\otimes P_{k,j}$ for some projection $P_{k,j}$; using this fact, we obtain\footnote{It is challenging to generalize this inequality from CQ channels to a more general class of quantum channels with quantum inputs.}
\begin{align}
\Tr_\mathrm{out}\qty[(\mathds{1}-P_{k})J\qty(\mathcal{N}^{\otimes km})(\mathds{1}-P_{k})]=\sum_j\frac{\Tr[P_{k,j}\rho_j]}{d^{km}}\ket{j}\bra{j}
 \leq\qty(\frac{\mathds{1}}{d})^{\otimes km},
\end{align}
and hence~\eqref{eq:positivity_choi_state}.
Therefore, there exists a channel $\tilde{\mathcal{N}}_{n_k}$ such that
\begin{align}
    J\qty(\tilde{\mathcal{N}}_{n_k})=\rho_{n_k}.
\end{align}
With this subsequence $\qty{\tilde{\mathcal{N}}_{n_k}}_k$, we define the sequence $\qty{\tilde{\mathcal{N}}_{n}}_n$ as
\begin{align}
\label{eq:definition_tilde_n}
    \tilde{\mathcal{N}}_{n}\coloneqq\tilde{\mathcal{N}}_{n_k}\otimes\mathcal{N}^{\otimes (n-n_k)}
\end{align}
for each $n$ satisfying $n_k\leq n<n_{k+1}$.

\textit{Proof of \eqref{NMN1}}:
To show~\eqref{NMN1}, we will bound $R_\mathrm{G}\qty(\tilde{\mathcal{N}}_{n_k})$ in the following.
Due to~\eqref{eq:J_N_full},
we have
\begin{align}
\qty(\qty(\frac{\mathds{1}}{d})^{\otimes km} - \Tr_{\mathrm{out}}[(\mathds{1}-P_{k}) J\qty(\mathcal{N}^{\otimes km})(\mathds{1}-P_{k})])
\otimes \rho_\mathrm{full}^{\otimes km}
\le J\qty(\mathcal{N}_\mathrm{full}^{\otimes n_k}).\label{BN1}
\end{align}
Also, since the definition \eqref{BNGH} of $P_{k}$
guarantees
\begin{align}
\qty(\mathds{1}-P_{k})\qty( e^{n_k R} J\qty(\mathcal{N}_\mathrm{free}^{(m)\otimes k})-\mathcal{E}_k\qty(J\qty(\mathcal{N}^{\otimes n_k})))\geq 0,
\end{align}
we have
\begin{align}
& \qty(\mathds{1}-P_{k}) J\qty(\mathcal{N}^{\otimes n_k})\qty(\mathds{1}-P_{k}) \le  d_{n_k} \qty(\mathds{1}-P_{k}) {\cal E}_{k}
\qty(J\qty(\mathcal{N}^{\otimes n_k})) \qty(\mathds{1}-P_{k}) 
\le d_{n_k} e^{n_k R} J\qty(\mathcal{N}_\mathrm{free}^{(m)\otimes k}),\label{BN2}
\end{align}
where $d_{n_k}$ is in~\eqref{eq:d_k_bound}, and the first inequality follows from the pinching inequality $J\qty(\mathcal{N}^{\otimes n_k})\leq d_{n_k}{\cal E}_{k}
\qty(J\qty(\mathcal{N}^{\otimes n_k}))$~\cite[Lemma~3.10]{hayashi2016quantum}.
Combining~\eqref{BN1} and~\eqref{BN2}, we have
\begin{align}
J\qty(\tilde{\mathcal{N}}_{n_k})\leq\qty(1+d_{n_k} e^{n_k R}) J\qty(\frac{d_{n_k} e^{n_k R}N_\mathrm{free}^{(m)\otimes k}+\mathcal{N}_\mathrm{full}^{\otimes n_k}}{d_{n_k} e^{n_k R}+1}),
\label{eq:operator_inequality_free}
\end{align}
where the Choi state on the right-hand side is that of a free CQ channel.
Since
\begin{align}
\Tr_\mathrm{out}\qty[\frac{1}{d_{n_k} e^{n_k R}}\qty(
\qty(1+d_{n_k} e^{n_k R}) J\qty(\frac{d_{n_k} e^{n_k R}N_\mathrm{free}^{(m)\otimes k}+\mathcal{N}_\mathrm{full}^{\otimes n_k}}{d_{n_k} e^{n_k R}+1})
-J\qty(\tilde{\mathcal{N}}_{n_k}))]
&=\qty(\frac{\mathds{1}}{d})^{\otimes n_k},  \\
\frac{1}{d_{n_k} e^{n_k R}}\qty(
\qty(1+d_{n_k} e^{n_k R}) J\qty(\frac{d_{n_k} e^{n_k R}N_\mathrm{free}^{(m)\otimes k}+\mathcal{N}_\mathrm{full}^{\otimes n_k}}{d_{n_k} e^{n_k R}+1})
-J\qty(\tilde{\mathcal{N}}_{n_k}))
&\ge  0,
\end{align}
there exists a CQ channel $\tilde{\mathcal{N}}_{n_k}'$
such that
\begin{align}
    \frac{1}{d_{n_k} e^{n_k R}}\Big(
\qty(1+d_{n_k} e^{n_k R}) J\qty(\frac{d_{n_k} e^{n_k R}N_\mathrm{free}^{(m)\otimes k}+\mathcal{N}_\mathrm{full}^{\otimes n_k}}{d_{n_k} e^{n_k R}+1})
-J\qty(\tilde{\mathcal{N}}_{n_k})\Big)
=J\qty(\tilde{\mathcal{N}}_{n_k}').
\end{align}
Therefore, we obtain
\begin{align}
\label{eq:R_G_n_k_bound}
R_\mathrm{G}\qty(\tilde{\mathcal{N}}_{n_k})\leq d_{n_k} e^{n_k R}.
\end{align}
(For this derivation, see also 
Ref.~\cite[(55)--(58)]{Takagi2019a} and 
Ref.~\cite[Lemma~5]{doi:10.1142/S0219749909005298}. 
Note that Ref.~\cite{Takagi2019a} considers Choi states for general channels, but the same argument holds for Choi states for CQ channels.)

We also have
\begin{align}
\label{eq:R_G_n_n_k_bound}
    R_\mathrm{G}\qty(\mathcal{N}^{\otimes (n-n_k)})\leq R_\mathrm{G}\qty(\mathcal{N}^{\otimes m})=O\qty(1)\quad\text{as $n\to\infty$}.
\end{align}
By definition of $R_\mathrm{G}$, there exist CQ channels $\tilde{\mathcal{N}}_{n_k}^{\prime\prime}$ and $\mathcal{N}_{n-n_k}^{\prime\prime}$ such that
\begin{align}
    \frac{\tilde{\mathcal{N}}_{n_k}+R_\mathrm{G}\qty(\tilde{\mathcal{N}}_{n_k})\tilde{\mathcal{N}}_{n_k}^{\prime\prime}}{1+R_\mathrm{G}\qty(\tilde{\mathcal{N}}_{n_k})}&\in\mathcal{F},\\
    \frac{\mathcal{N}^{\otimes (n-n_k)}+R_\mathrm{G}\qty(\mathcal{N}^{\otimes (n-n_k)})\mathcal{N}_{n-n_k}^{\prime\prime}}{1+R_\mathrm{G}\qty(\mathcal{N}^{\otimes (n-n_k)})}&\in\mathcal{F},
\end{align}
and thus, it holds that
\begin{align}
\frac{\qty(\tilde{\mathcal{N}}_{n_k}+R_\mathrm{G}\qty(\tilde{\mathcal{N}}_{n_k})\tilde{\mathcal{N}}_{n_k}^{\prime\prime})\otimes\qty(\mathcal{N}^{\otimes (n-n_k)}+R_\mathrm{G}\qty(\mathcal{N}^{\otimes (n-n_k)})\mathcal{N}_{n-n_k}^{\prime\prime})}{1+ R_\mathrm{G}\qty(\tilde{\mathcal{N}}_{n_k})R_\mathrm{G}\qty(\mathcal{N}^{\otimes (n-n_k)})+R_\mathrm{G}\qty(\tilde{\mathcal{N}}_{n_k})+R_\mathrm{G}\qty(\mathcal{N}^{\otimes (n-n_k)}) }\in\mathcal{F}.
\end{align}
Hence, for $\tilde{\mathcal{N}}_{n}$ in~\eqref{eq:definition_tilde_n}, we have
\begin{align}
\label{eq:R_G_n_bound}
    R_\mathrm{G}\qty(\tilde{\mathcal{N}}_{n})
=    R_\mathrm{G}\qty(    \tilde{\mathcal{N}}_{n_k}\otimes\mathcal{N}^{\otimes (n-n_k)}) 
    \leq R_\mathrm{G}\qty(\tilde{\mathcal{N}}_{n_k})R_\mathrm{G}\qty(\mathcal{N}^{\otimes (n-n_k)})+R_\mathrm{G}\qty(\tilde{\mathcal{N}}_{n_k})+R_\mathrm{G}\qty(\mathcal{N}^{\otimes (n-n_k)}).
\end{align}
Since $\frac{1}{n_k}\log d_{n_k}\to 0$ due to~\eqref{eq:d_k_bound}, we obtain from~\eqref{eq:R_G_n_k_bound},~\eqref{eq:R_G_n_n_k_bound}, and~\eqref{eq:R_G_n_bound}
\begin{align}
\limsup_{n\to\infty}\frac{1}{n}\log\qty(1+R_\mathrm{G}\qty(\tilde{\mathcal{N}}_{n}))\leq R,
\label{eq:limsup_R_G}
\end{align}
which yields~\eqref{NMN1}.

\textit{Proof of \eqref{NMN2}}:
The strong converse part of our proof of generalized quantum Stein's lemma---specifically, by setting $\rho=J\qty(\mathcal{N})$ and $\sigma_n=J\qty(\mathcal{N}_\mathrm{free}^{(m)\otimes k})$ for $n=mk$ in Lemma~\ref{L5}---yields the following inequality for any $\epsilon\in[0,1)$:
\begin{align}
\label{eq:strong_converse_implication}
    \limsup_{k\to\infty}-\frac{1}{km}\log\min_{T}\qty{\Tr[TJ\qty(\mathcal{N}_\mathrm{free}^{(m)\otimes k})]:0\leq T\leq \mathds{1},\,\Tr[TJ\qty(\mathcal{N}^{\otimes km})]\geq1-\epsilon }\leq\frac{1}{m} D\left( J\qty(\mathcal{N}^{\otimes m}) \middle\|  J\qty(\mathcal{N}_\mathrm{free}^{(m)}) \right).
\end{align}
On the other hand, the definition~\eqref{BNGH} of $P_{k}$ implies for all $k$, 
\begin{align}
    \Tr[P_{k}\qty(e^{-k m R}\mathcal{E}_k\qty(J\qty(\mathcal{N}^{\otimes km}))  -  J\qty(\mathcal{N}_\mathrm{free}^{(m)\otimes k})) ]\geq 0,
\end{align}
which yields that
\begin{align}
\Tr[P_{k}J\qty(\mathcal{N}_\mathrm{free}^{(m)\otimes k}) ]
\leq e^{-k m R}\Tr[P_{k}\mathcal{E}_k\qty(J\qty(\mathcal{N}^{\otimes km}))]
\leq e^{-k m R}.
\end{align}
Thus, it follows from~\eqref{AS1} that
\begin{align}
\label{eq:p_l_m_bound}
\limsup_{k\to\infty}-\frac{1}{km}\log\Tr[P_{k}J\qty(\mathcal{N}_\mathrm{free}^{(m)\otimes k}) ]\geq R
>
\frac{1}{m} D\left( J\qty(\mathcal{N}^{\otimes m}) \middle\|  J\qty(\mathcal{N}_\mathrm{free}^{(m)}) \right).
\end{align}
Therefore, due to~\eqref{eq:strong_converse_implication} and~\eqref{eq:p_l_m_bound}, 
we find that the strong converse part of the generalized quantum Stein's lemma implies
\begin{align}
\lim_{k\to\infty}\Tr[P_{k}J\qty(\mathcal{N}^{\otimes km}) ]
= 0,\label{BH1}
\end{align}
which yields
\begin{align}
\lim_{k\to\infty}\frac{1}{2}\left\|\qty(\mathds{1}-P_{k}) J\qty(\mathcal{N}^{\otimes km})\qty(\mathds{1}-P_{k})- J\qty(\mathcal{N}^{\otimes km})\right\|_1 = 0.
\label{BH2}
\end{align}
Due to~\eqref{BH2}, we have
\begin{align}
    \lim_{k\to\infty}\left\|\qty(\frac{\mathds{1}}{d})^{\otimes km}-\Tr_{\mathrm{out}}[(\mathds{1}-P_{k}) J\qty(\mathcal{N}^{\otimes km})(\mathds{1}-P_{k})]\right\|_1= 0,
\end{align}
and thus
\begin{align}
\lim_{k\to\infty}\frac{1}{2}\left\|J\qty(\tilde{\mathcal{N}}_{n_k})-J\qty(\mathcal{N}^{\otimes n_k})\right\|_1 = 0.\label{BH3}
\end{align}
Thus, $\qty{\tilde{\mathcal{N}}_{n}}_n$ satisfies~\eqref{NMN2}.

\end{proof}

By combining the above two lemmas, we obtain a relation between the regularized relative entropy of resource and the generalized robustness.

\begin{corollary}
\label{cor:robustness_characterization}
For any CQ channel $\mathcal{N}$, we have
\begin{align}
R_\mathrm{R}^\infty(\mathcal{N}) &= 
\min_{\qty{\tilde{\mathcal{N}}_{n}}_n}\left\{
\lim_{n\to\infty}
\frac{1}{n}
\log\qty(1+R_\mathrm{G}\qty(\tilde{\mathcal{N}}_{n})):
\lim_{n\to \infty}\frac{1}{2}\left\| J\qty(\tilde{\mathcal{N}}_{n})-J(\mathcal{N}^{\otimes n}) \right\|_1
= 0,\right.\nonumber\\
&\quad\left.\text{the limit $\lim_{n\to\infty}\frac{1}{n}
\log\qty(1+R_\mathrm{G}\qty(\tilde{\mathcal{N}}_{n}))$ exists}
\right\},
\end{align}
where the minimum on the right-hand side exists.
\end{corollary}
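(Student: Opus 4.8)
The plan is to read off the identity as a two-sided squeeze between Lemma~\ref{lem:regularized_relative_entropy_generalized_robustness} and Lemma~\ref{lem:converse_robustness}, with the minimizing sequence supplied by the latter. Throughout I call a sequence $\{\tilde{\mathcal{N}}_n\}_n$ of channels \emph{admissible} if $\lim_{n\to\infty}\frac{1}{2}\|J(\tilde{\mathcal{N}}_n)-J(\mathcal{N}^{\otimes n})\|_1=0$ and the limit $\lim_{n\to\infty}\frac{1}{n}\log(1+R_\mathrm{G}(\tilde{\mathcal{N}}_n))$ exists, so that the corollary's right-hand side is the infimum of $\lim_{n\to\infty}\frac{1}{n}\log(1+R_\mathrm{G}(\tilde{\mathcal{N}}_n))$ over admissible sequences. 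I first note that $R_\mathrm{R}^\infty(\mathcal{N})=\lim_{n\to\infty}\frac{1}{n}R_\mathrm{R}(\mathcal{N}^{\otimes n})$ exists: the additivity of the relative entropy over tensor products together with the tensor-product closedness of $\mathcal{F}$ (Property~\ref{p3:methods}) gives subadditivity $R_\mathrm{R}(\mathcal{N}^{\otimes(n+m)})\leq R_\mathrm{R}(\mathcal{N}^{\otimes n})+R_\mathrm{R}(\mathcal{N}^{\otimes m})$, so Fekete's lemma applies, analogously to Lemma~\ref{L4}.

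For the inequality $R_\mathrm{R}^\infty(\mathcal{N})\leq(\text{RHS})$ I would apply Lemma~\ref{lem:regularized_relative_entropy_generalized_robustness} with the IID choice $\mathcal{N}_n=\mathcal{N}^{\otimes n}$, which yields $R_\mathrm{R}^\infty(\mathcal{N})=\liminf_{n\to\infty}\frac{1}{n}R_\mathrm{R}(\mathcal{N}^{\otimes n})\leq\inf\{\liminf_{n\to\infty}\frac{1}{n}\log(1+R_\mathrm{G}(\tilde{\mathcal{N}}_n))\}$, where the infimum runs over all sequences obeying the trace-distance constraint. Since admissible sequences form a subclass of these and their $\liminf$ coincides with the $\lim$, the infimum over admissible sequences is at least the infimum over the larger class, hence at least $R_\mathrm{R}^\infty(\mathcal{N})$. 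This shows $R_\mathrm{R}^\infty(\mathcal{N})$ lower-bounds every admissible value.

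It remains to produce an admissible sequence attaining the value $R_\mathrm{R}^\infty(\mathcal{N})$, which simultaneously gives the reverse inequality and the existence of the minimum. I would take the sequence $\{\tilde{\mathcal{N}}_n\}_n$ furnished by Lemma~\ref{lem:converse_robustness}: it meets the trace-distance constraint~\eqref{NMN2} and satisfies $\limsup_{n\to\infty}\frac{1}{n}\log(1+R_\mathrm{G}(\tilde{\mathcal{N}}_n))\leq R_\mathrm{R}^\infty(\mathcal{N})$ by~\eqref{NMN1}. On the other hand, this sequence lies in the class to which Lemma~\ref{lem:regularized_relative_entropy_generalized_robustness} applies, so $R_\mathrm{R}^\infty(\mathcal{N})\leq\liminf_{n\to\infty}\frac{1}{n}\log(1+R_\mathrm{G}(\tilde{\mathcal{N}}_n))$. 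Combining these with the trivial $\liminf\leq\limsup$ gives the chain $\limsup\leq R_\mathrm{R}^\infty(\mathcal{N})\leq\liminf\leq\limsup$, forcing all four quantities to be equal; hence the limit exists and equals $R_\mathrm{R}^\infty(\mathcal{N})$, so the sequence is admissible and attains $R_\mathrm{R}^\infty(\mathcal{N})$. Together with the previous paragraph, this shows the infimum equals $R_\mathrm{R}^\infty(\mathcal{N})$ and is attained, upgrading it to a minimum.

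The substantive work has already been carried out inside the two lemmas—Lemma~\ref{lem:regularized_relative_entropy_generalized_robustness} through the asymptotic continuity of the relative entropy of resource, and Lemma~\ref{lem:converse_robustness} through the strong-converse construction of the generalized quantum Stein's lemma—so I expect no genuine difficulty at this stage. The only delicate points are bookkeeping: confirming that $R_\mathrm{R}^\infty$ is a true limit via Fekete, and, more importantly, converting the one-sided $\liminf$/$\limsup$ estimates of the two lemmas into the two-sided claim that the limit exists for the minimizing sequence, which is exactly what the squeeze above delivers.
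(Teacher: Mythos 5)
Your proposal is correct and follows essentially the same route as the paper: the lower bound on every admissible value comes from Lemma~\ref{lem:regularized_relative_entropy_generalized_robustness} applied to the IID sequence, the attaining sequence comes from Lemma~\ref{lem:converse_robustness}, and the existence of its limit is forced by the same squeeze $\limsup\leq R_\mathrm{R}^\infty(\mathcal{N})\leq\liminf\leq\limsup$ that the paper uses in~\eqref{eq:limit_existence}. The only cosmetic difference is your explicit Fekete-type justification that $R_\mathrm{R}^\infty(\mathcal{N})$ is a genuine limit, which the paper leaves implicit here.
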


\begin{proof}
Due to Lemma~\ref{lem:regularized_relative_entropy_generalized_robustness}, 
it holds that
\begin{align}
R_\mathrm{R}^\infty\qty(\mathcal{N})&=\liminf_{n\to\infty}\frac{1}{n}R_\mathrm{R}(\mathcal{N}^{\otimes n})\notag\\
&\leq
\inf_{\qty{\tilde{\mathcal{N}}_{n}}_n}\left\{
\liminf_{n\to\infty}
\frac{1}{n}
\log\qty(1+R_\mathrm{G}\qty(\tilde{\mathcal{N}}_{n})):
\lim_{n\to \infty}\frac{1}{2}\left\| J\qty(\tilde{\mathcal{N}}_{n})-J(\mathcal{N}^{\otimes n}) \right\|_1
= 0
\right\}\notag\\
&\leq\inf_{\qty{\tilde{\mathcal{N}}_{n}}_n}\left\{
\lim_{n\to\infty}
\frac{1}{n}
\log\qty(1+R_\mathrm{G}\qty(\tilde{\mathcal{N}}_{n})):
\lim_{n\to \infty}\frac{1}{2}\left\| J\qty(\tilde{\mathcal{N}}_{n})-J(\mathcal{N}^{\otimes n}) \right\|_1
= 0,\right.\nonumber\\
&\quad\left.\text{the limit $\lim_{n\to\infty}\frac{1}{n}
\log\qty(1+R_\mathrm{G}\qty(\tilde{\mathcal{N}}_{n}))$ exists}
\right\},
\label{eq:limit_liminf_R_R}
\end{align}
where the last inequality is due to adding the constraint that the limit $\lim_{n\to\infty}\frac{1}{n}
\log\qty(1+R_\mathrm{G}\qty(\tilde{\mathcal{N}}_{n}))$ should exist.
Lemma~\ref{lem:converse_robustness} shows that there exists a sequence $\qty{\tilde{\mathcal{N}}_n}_n$ achieving
\begin{align}
& 
\limsup_{n\to\infty}
\frac{1}{n}
\log\qty(1+R_\mathrm{G}\qty(\tilde{\mathcal{N}}_{n}))\leq R_\mathrm{R}^\infty(\mathcal{N}),\\
&\lim_{n\to \infty}\frac{1}{2}\left\| J\qty(\tilde{\mathcal{N}}_{n})-J(\mathcal{N}^{\otimes n}) \right\|_1
= 0.
\end{align}
Thus, for this sequence $\qty{\tilde{\mathcal{N}}_n}_n$, we see that the limit
\begin{align}
\label{eq:limit_existence}
\lim_{n\to\infty} \frac{1}{n} \log\qty(1+R_\mathrm{G}\qty(\tilde{\mathcal{N}}_{n}))   
=\liminf_{n\to\infty} \frac{1}{n} \log\qty(1+R_\mathrm{G}\qty(\tilde{\mathcal{N}}_{n}))   
=\limsup_{n\to\infty} \frac{1}{n} \log\qty(1+R_\mathrm{G}\qty(\tilde{\mathcal{N}}_{n}))=R_\mathrm{R}^\infty(\mathcal{N})
\end{align}
exists.
Due to the existence of $\qty{\tilde{\mathcal{N}}_n}_n$ achieving~\eqref{eq:limit_existence}, it follows from~\eqref{eq:limit_liminf_R_R} that
\begin{align}
R_\mathrm{R}^\infty(\mathcal{N})&=\min_{\qty{\tilde{\mathcal{N}}_{n}}_n}\left\{
\liminf_{n\to\infty}
\frac{1}{n}
\log\qty(1+R_\mathrm{G}\qty(\tilde{\mathcal{N}}_{n})):
\lim_{n\to \infty}\frac{1}{2}\left\| J\qty(\tilde{\mathcal{N}}_{n})-J(\mathcal{N}^{\otimes n}) \right\|_1
= 0
\right\}\\
&=\min_{\qty{\tilde{\mathcal{N}}_{n}}_n}\left\{
\lim_{n\to\infty}
\frac{1}{n}
\log\qty(1+R_\mathrm{G}\qty(\tilde{\mathcal{N}}_{n})):
\lim_{n\to \infty}\frac{1}{2}\left\| J\qty(\tilde{\mathcal{N}}_{n})-J(\mathcal{N}^{\otimes n}) \right\|_1
= 0,\right.\nonumber\\
&\quad\left.\text{the limit $\lim_{n\to\infty}\frac{1}{n}
\log\qty(1+R_\mathrm{G}\qty(\tilde{\mathcal{N}}_{n}))$ exists}
\right\},
\end{align}
where the minima exist.
\end{proof}

\subsection{The direct part of the second law of QRTs for CQ channels}
\label{sec:direct_second_law}

Using the relation shown in the previous section, we show the direct part of the second law of QRTs for CQ channels.

\begin{proposition}[The direct part of the second law of QRTs for CQ channels]
\label{prp:direct_asymptotic_conversion}
For any CQ channels $\mathcal{N}_1$ and $\mathcal{N}_2$ satisfying
\begin{align}
\label{eq:R_j_condition}
    R_\mathrm{R}^\infty\qty(\mathcal{N}_j)&>0\quad(j\in\{1,2\}),
\end{align}
it holds that
\begin{align}
\label{eq:achievability_bound}
    r_{\tilde{\mathcal{O}}}\qty(\mathcal{N}_1\to\mathcal{N}_2)\geq\frac{R_\mathrm{R}^\infty\qty(\mathcal{N}_1)}{R_\mathrm{R}^\infty\qty(\mathcal{N}_2)}.
\end{align}
\end{proposition}

\begin{proof}
    To prove~\eqref{eq:achievability_bound},
    due to $R_\mathrm{R}^\infty\qty(\mathcal{N}_j)>0$ in~\eqref{eq:R_j_condition}, we choose any
    \begin{align}
    \label{eq:delta}
        \delta\in\qty(0, \min\qty{R_\mathrm{R}^\infty\qty(\mathcal{N}_1),R_\mathrm{R}^\infty\qty(\mathcal{N}_2)}),
    \end{align}
    and we will construct a sequence $\{\Theta_n\}_n\in\tilde{O}$ of operations (superchannels) achieving the asymptotic conversion from $\mathcal{N}_1$ to $\mathcal{N}_2$ at the rate
    \begin{align}
    \label{eq:rate}
        r\coloneqq\frac{R_\mathrm{R}^\infty\qty(\mathcal{N}_1)-\delta}{R_\mathrm{R}^\infty\qty(\mathcal{N}_2)}.
    \end{align}
    In the following, we will first present the construction of $\{\Theta_n\}_n$, then prove that $\{\Theta_n\}_n$ satisfies the conditions for $\tilde{\mathcal{O}}$, and finally show that $\{\Theta_n\}_n$ achieves the asymptotic conversion at rate $r$.

    \textit{Construction of $\{\Theta_n\}_n$}:
    Applying the generalized quantum Stein's lemma to the state $J\qty(\mathcal{N}_1)^{\otimes n}=J\qty(\mathcal{N}_1^{\otimes n})$ and the set $\qty{J\qty(\mathcal{N}_\mathrm{free}):\mathcal{N}_\mathrm{free}\in\mathcal{F}}$ of states, we see that there exist a sequence $\{\epsilon_n\in(0,1)\}_n$ of type I error parameters such that
    \begin{align}
    \label{eq:epsilon_scalng}
        \lim_{n\to\infty}\epsilon_n=0,
    \end{align}
    and a sequence $\{T_n,\mathds{1}-T_n\}$ of POVMs such that for sufficiently large $n$
    \begin{align}
    \label{eq:epsilon_n}
        \Tr[\qty(\mathds{1}-T_n) J\qty(\mathcal{N}_1^{\otimes n})]&\leq\epsilon_n,\\
    \label{eq:beta_n}
        \max_{\mathcal{N}_\mathrm{free}\in\mathcal{F}}\qty{\Tr[T_n J\qty(\mathcal{N}_\mathrm{free})]}&\leq \exp\qty[-n\qty(R_\mathrm{R}^\infty(\mathcal{N}_1)-\frac{\delta}{3})].
    \end{align}
    Due to Corollary~\ref{cor:robustness_characterization},
    we have a sequence $\qty{\mathcal{N}_2^{(rn)}}_n$ of channels satisfying
    \begin{align}
    \label{eq:robustness_scalng}
    &R_\mathrm{R}^\infty\qty(\mathcal{N}_2)=\lim_{n\to\infty}\frac{\log\qty(1+R_\mathrm{G}\qty(\mathcal{N}_2^{(rn)}))}{\left\lceil rn\right\rceil},\\
    \label{eq:distance_scalng}
    &\lim_{n\to\infty}\frac{1}{2}\left\|J\qty(\mathcal{N}_2^{(rn)})-J\qty(\mathcal{N}_2^{\otimes\lceil rn\rceil})\right\|_1=0.
    \end{align}
    Let $\mathcal{N}_2^{(rn)\prime}$ denote an optimal channel in the minimization of the definition of $R_\mathrm{G}\qty(\mathcal{N}_2^{(rn)})$, i.e.,
    \begin{align}
    \label{eq:robustness_bound}
        \frac{\mathcal{N}_2^{(rn)}+R_\mathrm{G}\qty(\mathcal{N}_2^{(rn)})\mathcal{N}_2^{(rn)\prime}}{1+R_\mathrm{G}\qty(\mathcal{N}_2^{(rn)})}\in\mathcal{F}.
    \end{align}
    Then, we define $\Theta_n$ as
    \begin{align}
    \label{eq:Theta_n_definition}
        \Theta_n\qty(\mathcal{N})\coloneqq
            \Tr[T_{n}J\qty(\mathcal{N})]\mathcal{N}_2^{(rn)}+\Tr[\qty(\mathds{1}-T_{n})J\qty(\mathcal{N})]\mathcal{N}_2^{(rn)\prime}.
    \end{align}
    The map $\Theta_n$ is a superchannel that can be implemented using pre- and post-processing with the following physical meaning: given any input CQ channel $\mathcal{N}$ on the $n$-fold systems, we first input the maximally entangled state to $\mathcal{N}$ (or equivalently, the maximally correlated classical (diagonal) state since $\mathcal{N}$ is a CQ channel) to prepare its Choi state $J\qty(\mathcal{N})$, and then perform a POVM measurement $\qty{T_n,\mathds{1}-T_n}$ on $J\qty(\mathcal{N})$;\footnote{In the case of state conversion, i.e., when the input dimension of the CQ channels is one, this is equivalent to directly measuring the given state by $\qty{T_n,\mathds{1}-T_n}$.} then, depending on the measurement outcome, we use $\mathcal{N}_2^{(rn)}$ or $\mathcal{N}_2^{(rn)\prime}$ as the output CQ channel of $\Theta_n$, which will turn out to satisfy the asymptotically resource-non-generating property and the asymptotic continuity, as shown below.
    
    \textit{Conditions for $\{\Theta_n\}_n\in\tilde{\mathcal{O}}$}:
    The asymptotic continuity of $\{\Theta_n\}_n$ in~\eqref{eq:Theta_n_definition} is obvious from the fact that $J\qty(\mathcal{N})\mapsto J\qty(\Theta_n\qty(\mathcal{N}))$ is a CPTP map of Choi states.
    Thus, we will show that $\{\Theta_{n}\}_n$ has the asymptotically resource-non-generating property; that is, in the following, we will bound $R_\mathrm{G}\qty(\Theta_{n}\qty(\mathcal{N}_\mathrm{free}^{\qty(n)}))$ for any free channel $\mathcal{N}_\mathrm{free}^{\qty(n)}$.
    
    For simplicity of notation, we define
    \begin{align}
        s_n&\coloneqq R_\mathrm{G}\qty(\mathcal{N}_2^{(rn)}),\\
        t_n&\coloneqq \Tr[T_nJ\qty(\mathcal{N}_\mathrm{free}^{(n)})].
    \end{align}
    Due to~\eqref{eq:robustness_scalng}, for sufficiently large $n$, $s_{n}$ satisfies
    \begin{align}
    \label{eq:s_n_bound}
        \exp[nr\qty(R_\mathrm{R}^\infty\qty(\mathcal{N}_2)+\frac{\delta}{3r})]\geq 1+s_{n}\geq\exp[nr\qty(R_\mathrm{R}^\infty\qty(\mathcal{N}_2)-\frac{\delta}{3r})]\to\infty\quad\text{as $n\to\infty$}.
    \end{align}
    Due to~\eqref{eq:beta_n}, for sufficiently large $n$, $t_n$ satisfies
    \begin{align}
    \label{eq:t_n_bound}
        t_{n}&\leq\max_{\mathcal{N}_\mathrm{free}\in\mathcal{F}}\Tr[T_{n}J\qty(\mathcal{N}_\mathrm{free})]\leq\exp[-{n}\qty(R_\mathrm{R}^\infty\qty(\mathcal{N}_1)-\frac{\delta}{3})]\to 0\quad\text{as $n\to\infty$}.
    \end{align}
    Then, for sufficiently large $n$, it follows from~\eqref{eq:rate},~\eqref{eq:s_n_bound}, and~\eqref{eq:t_n_bound} that
    \begin{align}
    \label{eq:requirement}
        \frac{1}{1+s_{n}}-t_{n}\geq\exp[-n\qty(R_\mathrm{R}^\infty\qty(\mathcal{N}_1)-\frac{2\delta}{3})]-\exp[-n\qty(R_\mathrm{R}^\infty\qty(\mathcal{N}_1)-\frac{\delta}{3})]\geq 0,
    \end{align}
    where the first inequality is obtained from
    \begin{align}
        \frac{1}{1+s_{n}}&\geq\frac{1}{\exp[nr\qty(R_\mathrm{R}^\infty\qty(\mathcal{N}_2)+\frac{\delta}{3r})]}\notag\\
        &=\frac{1}{\exp[n\qty(rR_\mathrm{R}^\infty\qty(\mathcal{N}_2)+\frac{\delta}{3})]}\notag\\
        &=\frac{1}{\exp[n\qty(R_\mathrm{R}^\infty\qty(\mathcal{N}_1)-\frac{2\delta}{3})]}.
    \end{align}

    Using $s_n$ and $t_n$, we will bound $R_\mathrm{G}\qty(\Theta_{n}\qty(\mathcal{N}_\mathrm{free}^{\qty(n)}))$.
    On one hand, by definition~\eqref{eq:Theta_n_definition} of $\Theta_{n}$, we have
    \begin{align}
    \label{eq:t}
        \Theta_{n}\qty(\mathcal{N}_\mathrm{free}^{\qty(n)})=t_{n}\mathcal{N}_2^{\qty(rn)}+\qty(1-t_{n})\mathcal{N}_2^{\qty(rn)\prime}.
    \end{align}
    On the other hand, it follows from~\eqref{eq:robustness_bound} that
    \begin{align}
    \label{eq:s}
        \frac{1}{1+s_{n}}\mathcal{N}_2^{\qty(rn)}+\frac{s_{n}}{1+s_{n}}\mathcal{N}_2^{\qty(rn)\prime}\in\mathcal{F}.
    \end{align}
    Thus, for sufficiently large $n$ satisfying~\eqref{eq:requirement},
    it holds that
    \begin{align}
        R_\mathrm{G}\qty(\Theta_{n}\qty(\mathcal{N}_\mathrm{free}^{\qty(n)}))&\stackrel{(a)}{\leq}\min\qty{s\geq 0:\frac{\Theta_{n}\qty(\mathcal{N}_\mathrm{free}^{\qty(n)})+s\mathcal{N}_2^{\qty(rn)}}{1+s}\in\mathcal{F}\notag}\\
        \label{eq:S_101}
        &\stackrel{(b)}{\leq}\frac{\frac{1}{1+s_{n}}-t_{n}}{\frac{s_{n}}{1+s_{n}}}\notag\\
        &\stackrel{(c)}{\to} 0\quad\text{as $n\to\infty$},
   \end{align}
    where $(a)$ holds by taking $\mathcal{N}_2^{\qty(rn)}$ as the channel used in the minimization in the definition of $R_\mathrm{G}\qty(\Theta_{n}\qty(\mathcal{N}_\mathrm{free}^{\qty(n)}))$, and $(b)$ follows from~\eqref{eq:t} and~\eqref{eq:s} since $s=\frac{\frac{1}{1+s_{n}}-t_{n}}{\frac{s_{n}}{1+s_{n}}}\geq 0$ is the solution of
    \begin{align}
        \frac{\qty(t_{n}\mathcal{N}_2^{\qty(rn)}+(1-t_{n})\mathcal{N}_2^{\qty(rn)\prime})+s\mathcal{N}_2^{\qty(rn)}}{1+s}=\frac{1}{1+s_{n}}\mathcal{N}_2^{\qty(rn)}+\frac{s_{n}}{1+s_{n}}\mathcal{N}_2^{\qty(rn)\prime}\in\mathcal{F},
    \end{align}
    and $(c)$ is obtained from~\eqref{eq:s_n_bound} and~\eqref{eq:t_n_bound}.
    Therefore, we have shown that $\Theta_{n}$ has the asymptotically resource-non-generating property.

    \textit{Achievability of asymptotic conversion at rate $r$ by $\{\Theta_n\}_n$}:
    We will show that the sequence $\qty{\Theta_{n}}_n$ achieves the asymptotic conversion from $\mathcal{N}_1$ to $\mathcal{N}_2$ at the rate $r$.
    It follows from~\eqref{eq:Theta_n_definition} that
    \begin{align}
        \Theta_{n}\qty(\mathcal{N}_1^{\otimes n})=\Tr[T_{n}J\qty(\mathcal{N}_1^{\otimes n})]\mathcal{N}_2^{\qty(rn)}+\Tr[\qty(\mathds{1}-T_{n})J\qty(\mathcal{N}_1^{\otimes n})]\mathcal{N}_2^{\qty(rn)\prime}.
    \end{align}
    Therefore, we have
    \begin{align}
        &\frac{1}{2}\left\|J\qty(\Theta_{n}\qty(\mathcal{N}_1^{\otimes n}))-J\qty(\mathcal{N}_2^{\otimes\lceil rn\rceil})\right\|_1\nonumber\\
        &\stackrel{(a)}{\leq}\frac{\Tr[T_{n} J\qty(\mathcal{N}_1^{\otimes n})]}{2}\left\|J\qty(\mathcal{N}_2^{\qty(rn)})-J\qty(\mathcal{N}_2^{\otimes\lceil rn\rceil})\right\|_1+\frac{\Tr[\qty(\mathds{1}-T_{n}) J\qty(\mathcal{N}_1^{\otimes n})]}{2}\left(\left\|J\qty(\mathcal{N}_2^{\otimes\lceil rn\rceil})\right\|_1+\left\|J\qty(\mathcal{N}_2^{(rn)\prime})\right\|_1\right)\notag\\
        &\stackrel{(b)}{\leq}\frac{1-\epsilon_{n}}{2}\left\|J\qty(\mathcal{N}_2^{(rn)})-J\qty(\mathcal{N}_2^{\otimes\lceil rn\rceil})\right\|_1+2\epsilon_{n}\notag\\
        &\stackrel{(c)}{\to} 0\quad\text{as $n\to\infty$},
    \end{align}
    where $(a)$ is the subadditivity and the homogeneity of the trace norm, $(b)$ follows from $\frac{1}{2}\left\|J\qty(\mathcal{N}_2^{\otimes\lceil rn\rceil})\right\|_1\leq 1$, $\frac{1}{2}\left\|J\qty(\mathcal{N}_2^{(rn)\prime})\right\|_1\leq 1$, and~\eqref{eq:epsilon_n}, and $(c)$ from~\eqref{eq:epsilon_scalng} and~\eqref{eq:distance_scalng}.
\end{proof}

\subsection{The converse part of the second law of QRTs for CQ channels}
\label{sec:converse_second_law}

In this section, we prove the converse part of the second law of QRTs for CQ channels.

We first show an asymptotic version of the monotonicity of the relative entropy of resource for CQ channels as follows.

\begin{lemma}
\label{lem:monotonicity}
Let $\{\Theta_n\}_n\in\tilde{\mathcal{O}}$ be any sequence of asymptotically free operations.
Then, for any CQ channel $\mathcal{N}$, we have
\begin{align}
R_\mathrm{R}^\infty\qty(\mathcal{N}) \geq 
\liminf_{n\to \infty}\frac{1}{n}R_\mathrm{R}\qty(\Theta_n\qty(\mathcal{N}^{\otimes n})).
\label{NMN6}
\end{align}
\end{lemma}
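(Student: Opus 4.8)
The plan is to combine the robustness characterization of $R_\mathrm{R}^\infty$ in Corollary~\ref{cor:robustness_characterization} with the two defining properties of $\tilde{\mathcal{O}}$, namely the asymptotically resource-non-generating property~\eqref{eq:arng_prp} and the asymptotic continuity~\eqref{eq:asymptotic_continuity_prp}. First I would use Corollary~\ref{cor:robustness_characterization} to pick a sequence $\{\tilde{\mathcal{N}}_n\}_n$ of channels that smooths $\mathcal{N}^{\otimes n}$, i.e. $\lim_{n\to\infty}\frac{1}{2}\|J(\tilde{\mathcal{N}}_n)-J(\mathcal{N}^{\otimes n})\|_1=0$, while attaining the optimal robustness rate $\lim_{n\to\infty}\frac{1}{n}\log(1+R_\mathrm{G}(\tilde{\mathcal{N}}_n))=R_\mathrm{R}^\infty(\mathcal{N})$. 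The point of this smoothing is that $\frac{1}{n}\log(1+R_\mathrm{G}(\mathcal{N}^{\otimes n}))$ need not converge to $R_\mathrm{R}^\infty(\mathcal{N})$, so one cannot argue directly with $\mathcal{N}^{\otimes n}$ and a simple data-processing step; this is forced by the fact that $R_\mathrm{R}$ itself is not monotone under free super channels.

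Second, I would push $\tilde{\mathcal{N}}_n$ through $\Theta_n$ and bound $R_\mathrm{R}(\Theta_n(\tilde{\mathcal{N}}_n))$. Writing $s_n\coloneqq R_\mathrm{G}(\tilde{\mathcal{N}}_n)$ and letting $\mathcal{N}_{n,\mathrm{free}}\in\mathcal{F}$ and $\tilde{\mathcal{N}}_n'\in\mathcal{C}$ attain the robustness, we have $\tilde{\mathcal{N}}_n+s_n\tilde{\mathcal{N}}_n'=(1+s_n)\mathcal{N}_{n,\mathrm{free}}$; applying the linear map $\Theta_n$ and reading off Choi operators gives $J(\Theta_n(\tilde{\mathcal{N}}_n))\leq(1+s_n)J(\Theta_n(\mathcal{N}_{n,\mathrm{free}}))$, since $\Theta_n(\tilde{\mathcal{N}}_n')$ is CP so its Choi operator is positive. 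Now~\eqref{eq:arng_prp} ensures $\delta_n'\coloneqq R_\mathrm{G}(\Theta_n(\mathcal{N}_{n,\mathrm{free}}))\to 0$, so there is a free channel $\mathcal{M}_{n,\mathrm{free}}$ with $J(\Theta_n(\mathcal{N}_{n,\mathrm{free}}))\leq(1+\delta_n')J(\mathcal{M}_{n,\mathrm{free}})$, as in Ref.~\cite[(55)--(58)]{Takagi2019a} (see also Ref.~\cite[Lemma~5]{doi:10.1142/S0219749909005298}). Composing the two operator inequalities yields $J(\Theta_n(\tilde{\mathcal{N}}_n))\leq(1+s_n)(1+\delta_n')J(\mathcal{M}_{n,\mathrm{free}})$, and the operator monotonicity of $\log$ (cf. Ref.~\cite[Proposition~S17]{yamasaki2024generalized}) then gives $R_\mathrm{R}(\Theta_n(\tilde{\mathcal{N}}_n))\leq D(J(\Theta_n(\tilde{\mathcal{N}}_n))\|J(\mathcal{M}_{n,\mathrm{free}}))\leq\log((1+s_n)(1+\delta_n'))$. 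Dividing by $n$ and taking $\liminf$, using $\frac{1}{n}\log(1+s_n)\to R_\mathrm{R}^\infty(\mathcal{N})$ and $\frac{1}{n}\log(1+\delta_n')\to 0$, I obtain $\liminf_{n\to\infty}\frac{1}{n}R_\mathrm{R}(\Theta_n(\tilde{\mathcal{N}}_n))\leq R_\mathrm{R}^\infty(\mathcal{N})$.

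Finally, I would transfer this bound from $\tilde{\mathcal{N}}_n$ back to $\mathcal{N}^{\otimes n}$. Since $J(\tilde{\mathcal{N}}_n)\to J(\mathcal{N}^{\otimes n})$ in trace distance, the asymptotic continuity~\eqref{eq:asymptotic_continuity_prp} gives $\frac{1}{2}\|J(\Theta_n(\tilde{\mathcal{N}}_n))-J(\Theta_n(\mathcal{N}^{\otimes n}))\|_1\to 0$, and the asymptotic continuity of the relative entropy of resource (Ref.~\cite[Lemma~7]{Winter2016}, noting that $R_\mathrm{R}=O(n)$ because the full-rank free Choi state bounds $D$) then equates the two liminfs, so that $\liminf_{n\to\infty}\frac{1}{n}R_\mathrm{R}(\Theta_n(\mathcal{N}^{\otimes n}))=\liminf_{n\to\infty}\frac{1}{n}R_\mathrm{R}(\Theta_n(\tilde{\mathcal{N}}_n))\leq R_\mathrm{R}^\infty(\mathcal{N})$, which is exactly~\eqref{NMN6}. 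The hard part will be the bookkeeping of dimension growth: both the robustness bound and the continuity transfer involve quantities that scale with the exponentially large output dimension, and the argument survives only after $1/n$ normalization because $\delta_n'$ vanishes and because Winter's bound scales like $\varepsilon_n\log\dim+h(\varepsilon_n)$ with $\log\dim=O(n)$ and $\varepsilon_n\to 0$, so that $\frac{1}{n}\varepsilon_n\cdot O(n)\to 0$. Each of the two conditions on $\tilde{\mathcal{O}}$ is used exactly once here—\eqref{eq:arng_prp} to kill $\delta_n'$ and \eqref{eq:asymptotic_continuity_prp} to make the smoothing transparent—which is what makes this indirect route, rather than a one-line monotonicity argument, unavoidable.
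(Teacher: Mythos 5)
Your proposal is correct and follows essentially the same route as the paper: choose the optimal smoothed sequence $\{\tilde{\mathcal{N}}_n\}_n$ via Corollary~\ref{cor:robustness_characterization}, use~\eqref{eq:arng_prp} to get the multiplicative robustness bound $(1+s_n)(1+\delta_n')$ for $\Theta_n(\tilde{\mathcal{N}}_n)$, and use~\eqref{eq:asymptotic_continuity_prp} to transfer the bound back to $\Theta_n(\mathcal{N}^{\otimes n})$. The only difference is presentational: you inline the robustness-to-relative-entropy conversion and the Winter continuity step at the Choi-operator level, whereas the paper performs the same bound via channel mixtures and then routes the conclusion through Lemma~\ref{lem:regularized_relative_entropy_generalized_robustness} as a black box.
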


\begin{proof}
Due to Lemma~\ref{lem:regularized_relative_entropy_generalized_robustness},
we have
\begin{align}
\liminf_{n\to \infty}\frac{1}{n}R_\mathrm{R}\qty(\Theta_n\qty(\mathcal{N}^{\otimes n}))
\leq
\inf_{\{\tilde{\mathcal{N}}_n\}}\qty{
\liminf_{n\to\infty}
\frac{1}{n}
\log (1+R_\mathrm{G}\qty(\tilde{\mathcal{N}}_n)):
\lim_{n\to\infty}\frac{1}{2}\left\| J\qty(\tilde{\mathcal{N}}_n)-J\qty(\Theta_n \qty(\mathcal{N}^{\otimes n})) \right\|_1
= 0
}.
\end{align}
Thus, it suffices to show the following inequality
\begin{align}
\inf_{\qty{\tilde{\mathcal{N}}_n}}\qty{
\liminf_{n\to\infty}
\frac{1}{n}
\log (1+R_\mathrm{G}\qty(\tilde{\mathcal{N}}_n)):
\lim_{n\to\infty}\frac{1}{2}\left\| J\qty(\tilde{\mathcal{N}}_n)-J\qty(\Theta_n \qty(\mathcal{N}^{\otimes n})) \right\|_1
= 0
}\leq R_\mathrm{R}^\infty\qty(\mathcal{N}).
\label{NM3}
\end{align}

To show~\eqref{NM3}, due to Corollary~\ref{cor:robustness_characterization}, we choose a sequence $\qty{\tilde{\mathcal{N}}_n}_n$ of CQ channels satisfying
\begin{align}
\label{eq:limit_R_R_infty_tilde_N}
&\lim_{n\to\infty} \frac{1}{n} \log (1+R_\mathrm{G}\qty(\tilde{\mathcal{N}}_n))=R_\mathrm{R}^\infty\qty(\mathcal{N}),\\
&\lim_{n\to\infty}\frac{1}{2}\left\| J\qty(\tilde{\mathcal{N}}_n)-J\qty(\mathcal{N}^{\otimes n}) \right\|_1
\to 0.
\end{align}
Then, due to the asymptotic continuity~\eqref{eq:condition_asymptotic_continuity} of $\qty{\Theta_n}_n$, we have\footnote{We require the asymptotic continuity for this step while this requirement is unnecessary in QRTs for states, i.e., in the case where the dimension of inputs for CQ channels is one.}
\begin{align}
\lim_{n\to\infty}\frac{1}{2}\left\| J\qty(\Theta_n \qty(\tilde{\mathcal{N}}_n))-J\qty(\Theta_n \qty(\mathcal{N}^{\otimes n})) \right\|_1
= 0.\label{K1}
\end{align}
Set
\begin{align}
\label{eq:r_n}
    r_n\coloneqq R_\mathrm{G}\qty(\tilde{\mathcal{N}}_n).
\end{align}
There exists a CQ channel $\tilde{\mathcal{N}}_n'$ such that $
\frac{\tilde{\mathcal{N}}_n+r_n\tilde{\mathcal{N}}_n'}{1+r_n}$ is a free CQ channel.
The asymptotically resource-non-generating property of $\Theta_n$ implies
\begin{align}
\label{eq:r_n_prime_limit}
r_n'\coloneqq R_\mathrm{G}\qty(\Theta_n \qty(\frac{\tilde{\mathcal{N}}_n+r_n\tilde{\mathcal{N}}_n'}{1+r_n}))\to 0\quad\text{as $n\to\infty$}.
\end{align}
There exists a CQ channel $\tilde{\mathcal{N}}_n''$
such that $
\frac{\Theta_n \qty(\frac{\tilde{\mathcal{N}}_n+r_n\tilde{\mathcal{N}}_n'}{1+r_n})
+r_n'\tilde{\mathcal{N}}_n''}{1+r_n'}$ is a free CQ channel.
That is, we have a free CQ channel
\begin{align}
\frac{\Theta_n \qty(\frac{\tilde{\mathcal{N}}_n+r_n\tilde{\mathcal{N}}_n'}{1+r_n})
+r_n'\tilde{\mathcal{N}}_n''}{1+r_n'}
=\frac{\Theta_n \qty(\tilde{\mathcal{N}}_n)+r_n\Theta_n\qty(\tilde{\mathcal{N}}_n')
+r_n'\qty(1+r_n)\tilde{\mathcal{N}}_n''}{(1+r_n)(1+r_n')}.
\end{align}
Thus, we have
\begin{align}
R_\mathrm{G}\qty(\Theta_n\qty(\tilde{\mathcal{N}}_n))\leq
\qty(1+r_n)\qty(1+r_n')-1.
\end{align}
Therefore, due to~\eqref{eq:limit_R_R_infty_tilde_N},~\eqref{eq:r_n} and~\eqref{eq:r_n_prime_limit}, we have
\begin{align}
\liminf_{n\to\infty}
\frac{1}{n}
\log (1+R_\mathrm{G}\qty(\Theta_n\qty(\tilde{\mathcal{N}}_n)))
&\le
\lim_{n\to\infty}
\frac{1}{n}
\log (1+R_\mathrm{G}\qty(\tilde{\mathcal{N}}_n))=R_\mathrm{R}^\infty\qty(\mathcal{N}).\label{K2}
\end{align}
Since~\eqref{K1} guarantees that the sequence $\qty{\Theta_n\qty(\tilde{\mathcal{N}}_n)}_n$ satisfies the condition of $\qty{\tilde{\mathcal{N}}_n}_n$ on the left-hand side of~\eqref{NM3},~\eqref{K2} implies~\eqref{NM3} and thus the conclusion.
\end{proof}

Using this lemma, we show the converse part of the second law of QRTs for CQ channels.

\begin{proposition}[The converse part of the second law of QRTs for CQ channels]
\label{prp:converse}
For any CQ channels $\mathcal{N}_1$ and $\mathcal{N}_2$,
it holds that
\begin{align}
    r_{\tilde{\mathcal{O}}}\qty(\mathcal{N}_1\to\mathcal{N}_2)\leq\frac{R_\mathrm{R}^\infty\qty(\mathcal{N}_1)}{R_\mathrm{R}^\infty\qty(\mathcal{N}_2)},
\end{align}
where if $R_\mathrm{R}^\infty\qty(\mathcal{N}_2)=0$, the right-hand side should be understood as a trivial upper bound $\infty$.
\end{proposition}

\begin{proof}
    Take any achievable rate $r\leq r_{\tilde{\mathcal{O}}}\qty(\mathcal{N}_1\to\mathcal{N}_2)$.
    We have a sequence $\{\Theta_n\}_n$ of operations in $\tilde{\mathcal{O}}$ achieving
    \begin{align}
    \label{eq:achieve}
        \liminf_{n\to\infty}\frac{1}{2}\left\|J\qty(\Theta_n\qty(\mathcal{N}_1^{\otimes n}))-J\qty(\mathcal{N}_2^{\otimes\lceil rn\rceil})\right\|_1=0.
    \end{align}
    Then, due to Lemma~\ref{lem:monotonicity}, we obtain
    \begin{align}
    R_\mathrm{R}^\infty\qty(\mathcal{N}_1)
    &\geq \liminf_{n\to \infty}\frac{1}{n}R_\mathrm{R}\qty(\Theta_n\qty(\mathcal{N}_1^{\otimes n}))\notag\\
    &\stackrel{(a)}{=} \lim_{n\to \infty}\frac{1}{n}R_\mathrm{R}\qty(\mathcal{N}_2^{\otimes \lceil rn\rceil})\notag\\
    &\stackrel{(b)}{=}rR_\mathrm{R}^\infty\qty(\mathcal{N}_2),
    \end{align}
    where $(a)$ follows from the asymptotic continuity~\cite[Lemma~7]{Winter2016} (note that $\min_{\mathcal{N}_\mathrm{free}\in\mathcal{F}}D\left(J\qty(\mathcal{N}^{\otimes n})\middle\|J\qty(\mathcal{N}_\mathrm{free})\right)=O(n)$ due to the finite dimension and the existence of the full-rank Choi state $J(\mathcal{N}_\mathrm{full})$), and $(b)$ from the weak additivity of the regularized function $R_\mathrm{R}^\infty$.
    Thus, under the condition $R_\mathrm{R}^\infty\qty(\mathcal{N}_2)>0$ for avoiding the trivial bound $r_{\tilde{\mathcal{O}}}\qty(\mathcal{N}_1\to\mathcal{N}_2)\leq\infty$, the asymptotic conversion rate is bounded by
    \begin{align}
        r_{\tilde{\mathcal{O}}}\qty(\mathcal{N}_1\to\mathcal{N}_2)\leq\frac{R_\mathrm{R}^\infty\qty(\mathcal{N}_1)}{R_\mathrm{R}^\infty\qty(\mathcal{N}_2)}.
    \end{align}
\end{proof}

\section{Examples}
\label{sec:examples}

In this section, we present examples of QRTs for both static and dynamical resources that fall within the scope of our results, as well as counterexamples that emerge when certain assumptions in our analysis are relaxed.
In Sec.~\ref{sec:examples_state}, we discuss examples of QRTs for static resources, while Sec.~\ref{sec:examples_channels} covers examples for dynamical resources.
Finally, in Sec.~\ref{sec:counterexample}, we present counterexamples relevant to our analysis of the generalized quantum Stein's lemma.

\subsection{Example of QRTs for states}
\label{sec:examples_state}
Our results on the second law of QRTs for CQ channels apply to QRTs for states by considering the special case where the dimension of the input systems of the CQ channels is one.
As presented in the main text, the sets of free states satisfying Properties~\ref{p1:si},~\ref{p3:si}, and~\ref{p4:methods} of our results on the second law include representative classes of finite-dimensional convex QRTs, such as those for entanglement, magic, asymmetry, and coherence; for further details, see also Ref.~\cite[Sec.~IV~A]{Chitambar2018}.
We here provide applications of our results.

The entanglement theory considers ${\cal H}$ given by a bipartite system, i.e., ${\cal H}={\cal H}_A\otimes {\cal H}_B$, and free states are the set ${\cal F}_\mathrm{SEP}$ of separable states between $A$ and $B$, satisfying the above properties.
A central task in the entanglement theory is entanglement distillation~\cite{PhysRevA.53.2046}, but as discussed in the main text, the distillable entanglement under local operations and classical communication (LOCC) is hard to characterize in general.
However, LOCC is not the only possible set of free operations that can lead to ${\cal F}_\mathrm{SEP}$ as its free states; indeed, previous work has provided characterizations of distillable entanglement under various classes of free operations distinct from LOCC\@.
For example, under one-way LOCC, the distillable entanglement is characterized by coherent information~\cite{devetak2005distillation}.
Also, the Piani relative entropy of entanglement, named after Ref.~\cite{PhysRevLett.103.160504}, characterizes the distillable entanglement under dually non-entangling operations~\cite{lami2024distillable}.
Applying our results, as originally envisioned in Refs.~\cite{Brand_o_2008,brandao2010reversible}, we have another rare example of such a characterization: the distillable entanglement under the asymptotically resource-non-generating operations is equal to the regularized relative entropy of entanglement defined with respect to ${\cal F}_\mathrm{SEP}$.

Indeed, the scope of applications of our results extends even beyond the entanglement theory.
Another important example is the QRT of magic for qubits, i.e., $\mathcal{H}=\mathbb{C}^2$.
In this setting, a canonical choice of free operations---analogous to LOCC in the entanglement theory---is stabilizer operations, which are composed of the preparation of an eigenstate of Pauli operators, the measurement in the Pauli basis, Clifford gates, and classical feedforward~\cite{Veitch2014,Howard2017}.
Correspondingly, the set $\mathcal{F}_\mathrm{STAB}$ of free states is the convex hull of the set of stabilizer states of qubits.
In this case, in place of entanglement distillation, a task of central interest is magic state distillation~\cite{knill2004faulttolerantpostselectedquantumcomputation,knill2005quantum,PhysRevA.71.022316,PhysRevA.86.052329}, which aims at asymptotic state conversion from many copies of a given mixed state $\rho$ into as many as possible copies of magic states, e.g., $\ket{T}=\frac{1}{\sqrt{2}}\qty(\ket{0}+e^{i\pi/4}\ket{1})$, within a vanishingly small target error $\epsilon$ under the stabilizer operations.
The inverse of the corresponding achievable rate is also referred to as the overhead of magic state distillation.
Unlike the information-theoretic nature of entanglement distillation, magic state distillation is primarily motivated by applications to fault-tolerant quantum computation~\cite{knill2004faulttolerantpostselectedquantumcomputation,knill2005quantum,PhysRevA.71.022316,PhysRevA.86.052329}, where the classical feedforward must be computationally efficient.
In this setting, if one uses a conventional protocol for magic state distillation, the achievable conversion rate would vanish at the inverse-polylogarithmic speed as the target error $\epsilon$ becomes small~\cite{PhysRevA.86.052329}.
By contrast, a more recent study~\cite{wills2024constantoverheadmagicstatedistillation} has proposed and analyzed an alternative protocol, proving that the achievable rate of magic state distillation under stabilizer operations can remain strictly positive, even under the constraint on efficient classical feedforward.
Nonetheless, as in the case of entanglement theory, a full characterization of the optimal rate of magic state distillation remains an open and challenging problem in general~\cite{wills2024constantoverheadmagicstatedistillation}.
In this context, our results provide a rare alternative setting in which the optimal rate can be precisely characterized: specifically, our results show that the optimal rate of magic state distillation under asymptotically resource-non-generating operations equals the regularized relative entropy of magic defined with respect to $\mathcal{F}_\mathrm{STAB}$.

Still, the regularized relative entropy of resource, whether it is for entanglement or magic, is not tractable to compute in general, but we also have QRTs where the regularized relative entropy of resource is easy to compute.
The QRT of asymmetry provides such an example.
In the QRT of asymmetry, we consider a compact group $G$ and its unitary (projective) representation $f$ over a Hilbert space ${\cal H}$; for example, we can consider the group $G$ of diagonal unitaries as a special case, which leads to the QRT of coherence~\cite{Streltsov2017}.
Then, we define the projective representation $f^{(n)}$ of $G^n$ over ${\cal H}^{\otimes n}$ as $f^{(n)}(g_1,\ldots,g_n)\coloneqq \bigotimes_{j=1}^n f(g_j)$.
To define the set of free operations, we use a twirling map in the form of
\begin{align}
{\cal G}(\rho)\coloneqq \int_{G^n}
f^{(n)}(g_1,\ldots,g_n) \rho f^{(n)}(g_1,\ldots,g_n)^\dagger
\mu(dg_1) \cdots \mu(dg_n),
\end{align}
where $\mu$ is the Haar measure of $G$.
Then, in the QRT of asymmetry, one often considers the following set of free operations:
\begin{align}
{\cal O}\qty({\cal H}^{\otimes n}\to{\cal H}^{\otimes n})\coloneqq \qty{ \Gamma \in {\cal C}\qty({\cal H}^{\otimes n})
:  {\cal G}_n\circ \Gamma= \Gamma  \circ {\cal G}_n={\cal G}_n }.
\end{align}
Correspondingly, the set ${\cal F}_\mathrm{I}\qty(\mathcal{H})$ of free states of $\mathcal{H}$ becomes that of invariant states, i.e.,
\begin{align}
{\cal F}_\mathrm{I}\qty(\mathcal{H})\coloneqq\qty{ \rho \in {\cal D}\qty({\cal H}): \forall g \in G,f(g) \rho f(g)^\dagger =\rho};
\end{align}
in the same way, ${\cal F}_\mathrm{I}\qty(\mathcal{H}^{\otimes n})$ becomes the set of invariant states for $f^{(n)}$.
The set ${\cal F}_\mathrm{I}$ is convex, and contains the mixed state $\mathds{1}/\dim\qty({\cal H})$; also, the set ${\cal F}_\mathrm{I}\qty(\mathcal{H}^{\otimes(n+m)})$ includes ${\cal F}_\mathrm{I}\qty(\mathcal{H}^{\otimes n})\otimes{\cal F}_\mathrm{I}\qty(\mathcal{H}^{\otimes m})$.
Hence, ${\cal F}_\mathrm{I}$ satisfies Properties~\ref{p1:si},~\ref{p3:si}, and~\ref{p4:si}.
For this ${\cal F}_\mathrm{I}$, the regularized relative entropy of asymmetry is evaluated as
\begin{align}
\lim_{n\to\infty}\frac{1}{n}\min_{ \sigma \in {\cal F}_\mathrm{I}}
D\left(\rho^{\otimes n}\middle|\sigma\right)=\lim_{n\to\infty}\frac{1}{n}\cdot
n \min_{ \sigma \in {\cal F}_\mathrm{I}}
D(\rho|\sigma)
=D\left(\rho\middle| {\cal G}_1(\rho)\right).
\end{align}
When a state $\rho^{\otimes n}$ is given,
and the above free operations are allowed for coding operations,
the above quantity works as the channel capacity~\cite{9729748} and also plays an important role in dense coding~\cite{PRXQuantum.3.030346,TH2024}; that is, the relative entropy of asymmetry has another operational meaning, in addition to the meaning of the asymptotic conversion rate under asymptotically resource-non-generating operations discussed in this work.

\subsection{Example of QRTS for CQ channels with replacers}
\label{sec:examples_channels}

Whereas our results establish the second law of QRTs for a fundamental class of dynamical resources, i.e., classical-quantum (CQ) channels, QRTs for CQ channels may currently be less widely studied than those for states.
However, QRTs for CQ channels are relevant in the analysis of various communication tasks, such as entanglement-assisted quantum channel coding~\cite{PhysRevLett.83.3081}, visible compression of mixed states~\cite{PhysRevA.64.022308}, and the quantum reverse Shannon theorem~\cite{IEEE-IT-6757002,BCR2011}.
As an example of a QRT for CQ channels, we consider the case of replacers. In particular, for a quantum state $\sigma$, we define a replacer channel as one that always outputs $\sigma$ regardless of the input, denoted by
\begin{align}
{\cal N}_\sigma (\rho)\coloneqq\Tr[\rho]\sigma.
\end{align}

\subsubsection{QRTs of free CQ channels with replacers}
\label{SS2A}

In our formulation of QRTs with the second law, we consider asymptotically free operations that convert CQ channels into CQ channels and satisfy two properties: the asymptotically resource-non-generating property~\eqref{eq:asymptotically_resource_non_generating_operations_si} and the asymptotic continuity~\eqref{eq:condition_asymptotic_continuity_si}.
As a typical example of the set of free CQ channels, we focus on a set ${\cal F}_\mathrm{R}$ of replacers, defined as
\begin{align}
{\cal F}_\mathrm{R}\qty(\mathcal{H}_\mathrm{in}\to\mathcal{H}_\mathrm{out})\coloneqq\qty{{\cal N}_\sigma:\sigma \in {\cal D}({\cal H}_\mathrm{out})}.
\end{align}
By definition, this set ${\cal F}_\mathrm{R}$ satisfies Properties~\ref{p1:si}--\ref{p4:si}, as required.
As discussed in the main text, the asymptotically free operations are intended to be interpreted as a relaxation of a certain class of free operations, which are resource-non-generating superchannels with respect to ${\cal F}_\mathrm{R}$ in this case.
However, unlike the case of QRTs for states, where the asymptotically resource-non-generating property alone suffices, in the case of CQ channels, asymptotic continuity is additionally assumed for the asymptotically free operations, making this interpretation more non-trivial. 
To address this point, in the following analysis, we identify appropriate properties of superchannels such that the asymptotically free operations can be properly regarded as a relaxation of the set of such superchannels, thereby clarifying the scope of applications of our framework.

To obtain the asymptotically resource-non-generating property as a relaxation, we consider the set of resource-non-generating operations.
The asymptotically resource-non-generating property is introduced in terms of the generalized robustness $R_\mathrm{G}$ given by~\eqref{eq:R_G_si} with $\mathcal{F}=\mathcal{F}_\mathrm{R}$.
Using $R_\mathrm{G}$, the set of resource-non-generating operations is given by
\begin{align}
&{\cal O}_\mathrm{RNG}\qty(
  \qty(\mathcal{H}_{\mathrm{in}}^{(1)}\to
  \mathcal{H}_{\mathrm{out}}^{(1)})
  \to
  \qty(\mathcal{H}_\mathrm{in}^{(2)}\to\mathcal{H}_\mathrm{out}^{(2)})
  )\coloneqq\nonumber\\
&\qty{ \Theta \in 
  \mathcal{C}_{\mathrm{CQ}} \qty(
  \qty(\mathcal{H}_{\mathrm{in}}^{(1)}\to
  \mathcal{H}_{\mathrm{out}}^{(1)})
  \to
  \qty(\mathcal{H}_\mathrm{in}^{(2)}\to\mathcal{H}_\mathrm{out}^{(2)})
  )
: \forall
\mathcal{N}_{\mathrm{free}} \in {\cal F}_\mathrm{R},\,R_\mathrm{G}\qty(\Theta(\mathcal{N}_{\mathrm{free}}))=0
}.
\end{align}
Correspondingly, the set of asymptotically resource-non-generating operations, satisfying~\eqref{eq:asymptotically_resource_non_generating_operations_si}, is given by
\begin{align}
&\tilde{O}_\mathrm{RNG}\qty(
  \qty(\mathcal{H}_{\mathrm{in}}^{(1)}\to
  \mathcal{H}_{\mathrm{out}}^{(1)})
  \to
  \qty(\mathcal{H}_\mathrm{in}^{(2)}\to\mathcal{H}_\mathrm{out}^{(2)})
  )
  \coloneqq\nonumber\\
  &\left\{ \qty{\Theta_n \in 
  \mathcal{C}_{\mathrm{CQ}}\qty(
  \qty(\mathcal{H}_{\mathrm{in}}^{(1)\otimes n}\to
  \mathcal{H}_{\mathrm{out}}^{(1)\otimes n})
  \to
  \qty(\mathcal{H}_\mathrm{in}^{(2)\otimes n}\to\mathcal{H}_\mathrm{out}^{(2)\otimes n})
  )}_n
:\right.\nonumber\\
&\left.\forall
\qty{\mathcal{N}_{\mathrm{free}}^{(n)} \in {\cal F}_\mathrm{R}\qty(\mathcal{H}_{\mathrm{in}}^{(1)\otimes n}\to
  \mathcal{H}_{\mathrm{out}}^{(1)\otimes n})}_n,\,\lim_{n\to\infty}R_\mathrm{G}\qty(\Theta_n(\mathcal{N}_{\mathrm{free}}^{(n)}))=0
\right\}.
\end{align}
We can interpret $\tilde{O}_\mathrm{RNG}$ as a relaxation of ${\cal O}_\mathrm{RNG}$.

Additionally, we study conditions to obtain the asymptotic continuity as their relaxation.
A superchannel $\Theta\in\mathcal{C}\qty(
\qty(\mathcal{H}_{\mathrm{in}}^{(1)}\to
\mathcal{H}_{\mathrm{out}}^{(1)})
\to
\qty(\mathcal{H}_\mathrm{in}^{(2)}\to\mathcal{H}_\mathrm{out}^{(2)})
)$ is defined as a linear map that transforms any quantum-quantum (QQ) channel
\begin{align}
    \mathcal{N}\in\mathcal{C}\qty(\qty(\mathcal{H}_\mathrm{in}^{(\mathrm{aux})}\otimes\mathcal{H}_\mathrm{in}^{(1)})\to\qty(\mathcal{H}_\mathrm{out}^{(\mathrm{aux})}\otimes\mathcal{H}_\mathrm{out}^{(1)}))
\end{align}
into a channel
\begin{align}
    \qty(\id\otimes\Theta)\qty(\mathcal{N})\in\mathcal{C}\qty(\qty(\mathcal{H}_\mathrm{in}^{(\mathrm{aux})}\otimes\mathcal{H}_\mathrm{in}^{(2)})\to\qty(\mathcal{H}_\mathrm{out}^{(\mathrm{aux})}\otimes\mathcal{H}_\mathrm{out}^{(2)})),
\end{align}
where $\mathcal{H}_\mathrm{in}^{(\mathrm{aux})}$ and $\mathcal{H}_\mathrm{out}^{(\mathrm{aux})}$ represent any auxiliary systems, and $\id$ is the identity linear supermap that converts any channel $\mathcal{N}^{(\mathrm{aux})}\in\mathcal{C}\qty(\mathcal{H}_\mathrm{in}^{(\mathrm{aux})}\to\mathcal{H}_\mathrm{out}^{(\mathrm{aux})})$ to itself~\cite{PhysRevLett.101.060401,PhysRevA.80.022339}.
As shown in Ref.~\cite[(39) and Theorem~5]{PhysRevA.80.022339}, any superchannel $\Theta$ satisfying this condition can be represented by the corresponding completely positive (CP) linear map
\begin{align}
\label{eq:tilde_theta_definition}
    \tilde{\Theta}\coloneqq J\circ\Theta\circ J^{-1}
\end{align}
to transform the Choi state $J\qty(\mathcal{N})\in\mathcal{D}\qty(\mathcal{H}_{\mathrm{in}}^{(1)}\otimes
\mathcal{H}_{\mathrm{out}}^{(1)})$ to the Choi state $\tilde{\Theta}\qty(J\qty(\mathcal{N}))=J\qty(\Theta\qty(\mathcal{N}))\in\mathcal{D}\qty(\mathcal{H}_{\mathrm{in}}^{(2)}\otimes
\mathcal{H}_{\mathrm{out}}^{(2)})$ satisfying the following condition to be a superchannel (SC):
\begin{align}
\label{eq:normalizaton_condition_si}
    &\text{SC condition:}\notag\\
    &\quad\text{$\tilde{\Theta}$ is a CP linear map}\nonumber\\
    &\quad\text{with its Choi operator $J_2\coloneqq J\qty(\tilde{\Theta})=\qty(\id\otimes\tilde{\Theta})\qty(\Phi_{d_{\mathrm{in}}^{(1)}d_{\mathrm{out}}^{(1)}})\geq 0$ satisfying for some operator $J_1\geq 0$}\nonumber\\
    &\quad\Tr_{\mathrm{out}{(2)}}[J_2]=\frac{\mathds{1}_{\mathrm{out}}^{(1)}}{d_{\mathrm{out}}^{(1)}}\otimes J_1,\quad
    \Tr_{\mathrm{in}{(1)}}[J_1]=\frac{\mathds{1}_{\mathrm{in}}^{(2)}}{d_{\mathrm{in}}^{(2)}},
\end{align}
where for each $j\in\qty{1,2}$, we write the partial traces over $\mathcal{H}_\mathrm{in}^{(j)}$ and $\mathcal{H}_\mathrm{out}^{(j)}$ as $\Tr_{\mathrm{in}{(j)}}$ and $\Tr_{\mathrm{out}{(j)}}$, respectively, and write $d_\mathrm{in}^{(j)}\coloneqq\dim\qty(\mathcal{H}_\mathrm{in}^{(j)})$ (for brevity, we may also write $d_j\coloneqq d_\mathrm{in}^{(j)}$) and $d_\mathrm{out}^{(j)}\coloneqq\dim\qty(\mathcal{H}_\mathrm{out}^{(j)})$.
Specifically, for any CQ channel with $d$-dimensional inputs
\begin{align}
    \mathcal{N}\qty(\rho)=\sum_{k=0}^{d-1}\Tr[\ket{k}\bra{k}\rho]\rho_k,
\end{align}
its Choi state is a CQ state
\begin{align}
    J\qty(\mathcal{N})=\frac{1}{d}\sum_{k=0}^{d-1}\ket{k}\bra{k}\otimes\rho_k;
\end{align}
then, a superchannel $\Theta\in\mathcal{C}_\mathrm{CQ}\qty(
  \qty(\mathcal{H}_{\mathrm{in}}^{(1)}\to
  \mathcal{H}_{\mathrm{out}}^{(1)})
  \to
  \qty(\mathcal{H}_\mathrm{in}^{(2)}\to\mathcal{H}_\mathrm{out}^{(2)})
  )$ transforming CQ channels into CQ channels is a special case of the superchannels in $\mathcal{C}\qty(
  \qty(\mathcal{H}_{\mathrm{in}}^{(1)}\to
  \mathcal{H}_{\mathrm{out}}^{(1)})
  \to
  \qty(\mathcal{H}_\mathrm{in}^{(2)}\to\mathcal{H}_\mathrm{out}^{(2)}) )$, and the corresponding CP linear map $\tilde{\Theta}$ in~\eqref{eq:tilde_theta_definition} additionally satisfies the following condition:
\begin{align}
\label{eq:CQ_condition_si}
\text{CQ condition: $\tilde{\Theta}$ transforms CQ states to CQ states}.
\end{align}
However, as argued in Methods, the CP linear map $\tilde{\Theta}$ corresponding to a given superchannel $\Theta$ may not be a CPTP map in general.
Superchannels in this subclass satisfy the following conditions.

\begin{lemma}\label{LLLO}
Assume that a CP linear map $\tilde{\Theta}$ transforming operators on $\mathcal{H}_{\mathrm{in}}^{(1)}\otimes \mathcal{H}_{\mathrm{out}}^{(1)}$ to those on $\mathcal{H}_{\mathrm{in}}^{(2)}\otimes
\mathcal{H}_{\mathrm{out}}^{(2)}$ satisfy the SC condition~\eqref{eq:normalizaton_condition_si} and the CQ condition~\eqref{eq:CQ_condition_si}.
For each CQ channel $\mathcal{N}\in\mathcal{C}_\mathrm{CQ}\qty(\mathcal{H}_{\mathrm{in}}^{(1)}\to
  \mathcal{H}_{\mathrm{out}}^{(1)})$, which is in the form of
\begin{align}
    \mathcal{N}\qty(\rho)=\sum_{k=0}^{d_1-1}\Tr[\ket{k}\bra{k}\rho]\rho_k,
\end{align}
we define a function
\begin{align}
\label{eq:P_K_K_prime}
P_{K|K'}^{\qty(\mathcal{N})}(k|k')\coloneqq
\frac{d_2}{d_1}\Tr[\tilde{\Theta}\qty(\ket{k} \bra{k} \otimes \rho_k) \qty(\ket{k'}\bra{k'} \otimes \mathds{1})].
\end{align}
Then, under the above two conditions, $P_{K|K'}^{\qty(\mathcal{N})}(k|k')$ becomes a conditional probability distribution. 
\end{lemma}

\begin{proof}
For any $k'$, we will show $P_{K|K'}^{\qty(\mathcal{N})}(k|k')\geq 0$ and $\sum_k P_{K|K'}^{\qty(\mathcal{N})}(k|k')=1$.
Since $\rho_k\geq 0$ and $\tilde{\Theta}$ is a CP map, we have
\begin{align}
    P_{K|K'}^{\qty(\mathcal{N})}(k|k')=\frac{d_2}{d_1}\Tr[\tilde{\Theta}\qty(\ket{k} \bra{k} \otimes \rho_k) \qty(\ket{k'}\bra{k'} \otimes \mathds{1})]\geq 0.
\end{align}
Also, it holds that
\begin{align}
\sum_k P_{K|K'}^{\qty(\mathcal{N})}(k|k')&=\sum_k
\frac{d_2}{d_1}\Tr[\tilde{\Theta}\qty(\ket{k} \bra{k} \otimes \rho_k) \qty(\ket{k'}\bra{k'} \otimes \mathds{1})]\\
&\stackrel{(a)}{=}d_2\Tr[\tilde{\Theta}\qty(\frac{1}{d_1}\sum_k\ket{k} \bra{k} \otimes \rho_k) \qty(\ket{k'}\bra{k'} \otimes \mathds{1})]\\
&\stackrel{(b)}{=}d_2\Tr[\frac{\mathds{1}}{d_2}\ket{k'}\bra{k'}]\\
&=1,
\end{align}
where $(a)$ follows from the linearity of $\tilde{\Theta}$, and $(b)$ holds due to the assumptions that $\tilde{\Theta}$ maps a CQ Choi state $J\qty(\mathcal{N})=\frac{1}{d_1}\sum_k\ket{k} \bra{k} \otimes \rho_k$ into a CQ Choi state $\tilde{\Theta}\qty(J\qty(\mathcal{N}))$ satisfying $\Tr_{\mathrm{out},(2)}\qty[\tilde{\Theta}\qty(J\qty(\mathcal{N}))]=\frac{\mathds{1}}{d_2}$ with $\Tr_{\mathrm{out},(2)}$ denoting the partial trace over $\mathcal{H}_\mathrm{out}^{(2)}$.
\end{proof}

To identify the desired conditions for asymptotically free operations, we study a further restricted subclass of superchannels $\Theta$, each corresponding to a CPTP map $\tilde{\Theta}$ with analytically tractable properties rather than merely corresponding to a CP map.
In addition to the two conditions in Lemma~\ref{LLLO},
we assume that $\tilde{\Theta}$ is trace-preserving (TP):
\begin{align}
\label{eq:TP_condition_si}
    \text{TP condition: $\tilde{\Theta}$ is a TP map}.
\end{align}
Moreover, we introduce the following condition, which is called the encoding independence (EI) condition:
\begin{align}
\label{eq:encoding_independence_condition}
    \text{EI condition: the conditional distribution $P_{K|K'}^{(\mathcal{N})}(k|k')$ does not depend on the CQ channel $\mathcal{N}$};
\end{align}
that is, the function defined in~\eqref{eq:P_K_K_prime} is independent of $\rho_k$ appearing on its right-hand side.
Under this condition, we may write $P_{K|K'}^{(\mathcal{N})}(k|k')$ as $P_{K|K'}(k|k')$, omitting the superscript.
With these additional conditions, we reach the following characterization;
in particular, the relation~\eqref{VBE} in the following lemma clarifies its meaning: after transforming a CQ channel ${\cal N}=\sum_{k=0}^{d_1-1}\Tr[\ket{k}\bra{k}\rho]\rho_k$ by a superchannel $\Theta$ in this subclass, the resulting channel's input symbol $k'$ is converted into $k$ by the conditional distribution $P_{K|K'}$ independently of $\rho_k$, and the corresponding output state $\rho_k$ is converted by a CPTP map $\Gamma_{k,k'}$ depending on $k',k$.

\begin{lemma}\label{LLLO2}
Assume that a CP linear map $\tilde{\Theta}$ satisfies the SC and CQ conditions in Lemma~\ref{LLLO} and, additionally, 
the EI condition \eqref{eq:encoding_independence_condition}.
Then, there exists a family $\qty{\Gamma_{k,k'}}_{k,k'}$ of CPTP linear maps such that any CQ channel ${\cal N}\qty(\rho)=\sum_{k=0}^{d_1-1}\Tr[\ket{k}\bra{k}\rho]\rho_k$ is mapped into a CQ channel $\Theta({\cal N})$ satisfying
\begin{align}
\qty(\Theta\qty({\cal N}))\qty(\rho)
=\sum_{k'=0}^{d_2-1}\Tr[\ket{k'}\bra{k'}\rho]\sum_{k=0}^{d_1-1}P_{K|K'}(k|k') \Gamma_{k,k'}\qty(\rho_k),
\label{VBE}
\end{align}
where $P_{K|K'}$ is the conditional probability distribution in~\eqref{eq:encoding_independence_condition}.
Moreover, if the TP condition~\eqref{eq:TP_condition_si} additionally holds,
$P_{K|K'}$ satisfies, for each $k$,
\begin{align}
\frac{1}{d_2}\sum_{k'=0}^{d_2-1}P_{K|K'}(k|k')=\frac{1}{d_1}.
\label{CB9}
\end{align}
\end{lemma}

\begin{proof}
We first show~\eqref{VBE}, and later~\eqref{CB9}.
Using
the EI condition \eqref{eq:encoding_independence_condition},
we define a map
\begin{align}
\label{eq:Gamma_k_k_prime}
\Gamma_{k,k'}\qty(\rho) \coloneqq \Tr_{\mathrm{in},{(2)}}\qty[\tilde\Theta\qty(\frac{d_2}{d_1P_{K|K'}(k|k')}|k\rangle \langle k| \otimes \rho) \qty(|k'\rangle \langle k'| \otimes \mathds{1})].
\end{align}
Since $\tilde{\Theta}$ is a CP linear map, the map $\Gamma_{k,k'}$ is also a CP linear map.
Since~\eqref{eq:P_K_K_prime} and~\eqref{eq:Gamma_k_k_prime} imply
\begin{align}
P_{K|K'}(k|k')
=&
\frac{d_2}{d_1}\Tr[\tilde{\Theta}\qty(\ket{k} \bra{k} \otimes \rho_k) \qty(\ket{k'}\bra{k'} \otimes \mathds{1})]
=
P_{K|K'}(k|k')
\Tr[\Gamma_{k,k'}\qty(\rho_k)],
\end{align}
it holds that
\begin{align}
\label{eq:TP_proof}
    \Tr[\Gamma_{k,k'}\qty(\rho_k)]=1;
\end{align}
i.e., the CP map $\Gamma_{k,k'}$ is trace-preserving.

Since we have
$\tilde{\Theta}\qty(\frac{1}{d_1}\sum_{k=0}^{d_1-1}|k\rangle \langle k| \otimes \rho_k)
=
\frac{1}{d_2}\sum_{k'=0}^{d_2-1} |k'\rangle \langle k'| \otimes 
\qty(\Theta({\cal N}))\qty(\ket{k'}\bra{k'})$ by definition of $\tilde{\Theta}$,
it follows that
\begin{align}
\qty(\Theta({\cal N}))(\ket{k'}\bra{k'}) =& 
\Tr_{\mathrm{in},{(2)}}\qty[\tilde\Theta\qty(\frac{1}{d_1}\sum_{k=0}^{d_1-1}|k\rangle \langle k| \otimes \rho_k) \qty(d_2|k'\rangle \langle k'| \otimes \mathds{1})]\notag\\
=&\sum_{k=0}^{d_1-1} P_{K|K'}(k|k') \Gamma_{k,k'}(\rho_k),\label{VBE5}
\end{align}
where the last line follows from~\eqref{eq:Gamma_k_k_prime}.
This implies~\eqref{VBE} due to the linearity of $\Theta({\cal N})$.

To show~\eqref{CB9}, we write the dual maps of $\tilde{\Theta}$ and $\Gamma_{k,k'}$ as $\tilde{\Theta}^*$ and $\Gamma_{k,k'}^*$, respectively. By definition of dual maps, we have
\begin{align}
&\Tr 
[\qty(\sum_{k'=0}^{d_2-1}|k'\rangle \langle k'| \otimes Y_{k'})
\tilde{\Theta}\qty(\frac{1}{d_1}\sum_{k=0}^{d_1-1}|k\rangle \langle k| \otimes X_k)] =\Tr[
\tilde{\Theta}^*\qty(\sum_{k'=0}^{d_2-1}|k'\rangle \langle k'| \otimes Y_{k'})
\qty(\frac{1}{d_1}\sum_{k=0}^{d_1-1}|k\rangle \langle k| \otimes X_k)] 
\label{ADD1}
\end{align}
Since~\eqref{VBE5} yields
\begin{align}
\tilde{\Theta}\qty(\frac{1}{d_1}\sum_{k=0}^{d_1-1}|k\rangle \langle k| \otimes X_k)
=&
\frac{1}{d_2}\sum_{k'=0}^{d_2-1} |k'\rangle \langle k'| \otimes 
\sum_{k=0}^{d_1-1} P_{K|K'}(k|k') \Gamma_{k,k'}(X_k),
\end{align}
we have
\begin{align}
&\Tr[
\qty(\sum_{k'=0}^{d_2-1}|k'\rangle \langle k'| \otimes Y_{k'})
\tilde{\Theta}\qty(\frac{1}{d_1}\sum_{k=0}^{d_1-1}|k\rangle \langle k| \otimes X_k)]\notag\\
&=
\Tr[
\qty(\sum_{k'=0}^{d_2-1}|k'\rangle \langle k'| \otimes Y_{k'})
\qty(\frac{1}{d_2}\sum_{k'=0}^{d_2-1} |k'\rangle \langle k'| \otimes 
\sum_{k=0}^{d_1-1} P_{K|K'}(k|k') \Gamma_{k,k'}(X_k) )]\\
&=
\frac{1}{d_2}\sum_{k'=0}^{d_2-1}\sum_{k=0}^{d_1-1} P_{K|K'}(k|k') \Tr[
Y_{k'}
\Gamma_{k,k'}\qty(X_k) ]\\
&=
\frac{1}{d_2}\sum_{k'=0}^{d_2-1}\sum_{k=0}^{d_1-1} P_{K|K'}(k|k') \Tr[
\Gamma_{k,k'}^\ast\qty(Y_{k'})
X_k]\\
&=
\Tr[\qty(\frac{d_1}{d_2}\sum_{k=0}^{d_1-1} |k\rangle \langle k| \otimes 
\sum_{k'=0}^{d_2-1}P_{K|K'}(k|k') \Gamma_{k,k'}^*(Y_{k'}))
\qty(\frac{1}{d_1}\sum_{k=0}^{d_1-1}|k\rangle \langle k| \otimes X_k)].
\label{ADD2}
\end{align}
The combination of 
\eqref{ADD1} and \eqref{ADD2} implies that the dual map $\tilde{\Theta}^*$ of $\tilde{\Theta}$ is given by
\begin{align}
\tilde{\Theta}^*
\qty(\sum_{k'=0}^{d_2-1}|k'\rangle \langle k'| \otimes Y_{k'})
=&
\frac{d_1}{d_2}\sum_{k=0}^{d_1-1} |k\rangle \langle k| \otimes 
\sum_{k'=0}^{d_2-1} P_{K|K'}(k|k') \Gamma_{k,k'}^*(Y_{k'}).
\end{align}

Then, the TP condition \eqref{eq:TP_condition_si} assumed for $\tilde{\Theta}$ leads to
\begin{align}
\label{BV4_dual1}
\sum_{k=0}^{d_1-1}
|k\rangle \langle k|\otimes  
\mathds{1} &=
\tilde{\Theta}^* \qty(
\sum_{k'=0}^{d_2-1}
|k'\rangle \langle k'|\otimes  
\mathds{1})\\
&=
\frac{d_1}{d_2} 
\sum_{k=0}^{d_1-1} |k\rangle \langle k| \otimes 
\sum_{k'=0}^{d_2-1} P_{K|K'}(k|k') \Gamma_{k,k'}^*(\mathds{1}).
\label{BV4_dual}
\end{align}
Therefore, we have
\begin{align}
1&=\Tr[|k\rangle \langle k|\otimes \rho_k]\notag \\
&\stackrel{(a)}{=}
\Tr[\qty(|k\rangle \langle k|\otimes \rho_k)
\qty(
\frac{d_1}{d_2} 
\sum_{k'=0}^{d_2-1}\sum_{k=0}^{d_1-1}
|k\rangle \langle k|\otimes 
P_{K|K'}(k|k') \Gamma_{k,k'}^*(\mathds{1})
)]\notag\\
&\stackrel{(b)}{=}
\sum_{k'=0}^{d_2-1} \frac{d_1}{d_2} P_{K|K'}(k|k')
\Tr[\Gamma_{k,k'}(\rho_k)]\notag\\
&\stackrel{(c)}{=}\frac{d_1}{d_2}\sum_{k'=0}^{d_2-1}P_{K|K'}(k|k'),
\end{align}
where $(a)$ follows from inserting the identity operator~\eqref{BV4_dual}, $(b)$ is the duality $\Tr[\rho_k\Gamma_{k,k'}^\ast(\mathds{1})]=\Tr[\Gamma_{k,k'}(\rho_k)\mathds{1}]$, and $(c)$ is obtained from~\eqref{eq:TP_proof}, 
leading to~\eqref{CB9}.
\end{proof}

Given a conditional distribution $P_{K|K'}$ and a set $\{\Gamma_{k,k'}\}$ of CPTP maps, we define the conversion 
$\Theta_{P_{K|K'},\{\Gamma_{k,k'}\}}$ of QQ channels as
\begin{align}
\label{eq:QQ_form}
\qty(\Theta_{P_{K|K'},\{\Gamma_{k,k'}\}} ({\cal N}))(\rho)
\coloneqq
\sum_{k'=0}^{d_2-1}\Tr[\ket{k'}\bra{k'}\rho]\sum_{k=0}^{d_1-1}P_{K|K'}(k|k') \Gamma_{k,k'}\qty(
{\cal N}\qty(\ket{k} \bra{k})).
\end{align}
Because our interest is limited to the conversion of CQ channels, we can, without loss of generality, identify the map in the form of~\eqref{VBE} in Lemma~\ref{LLLO2} with the map in the form of~\eqref{eq:QQ_form}.
Indeed, Lemma~\ref{LLLO2} guarantees that any CP map satisfying the SC, CQ, and EI conditions can be expressed as~\eqref{VBE}, which is henceforth identified with~\eqref{eq:QQ_form}.

Based on this characterization, we define a subset of superchannels satisfying the four conditions assumed in Lemma~\ref{LLLO2}:
\begin{align}
\label{eq:O_CPTP}
  &{\cal O}_\mathrm{CPTP}\qty(
  \qty(\mathcal{H}_{\mathrm{in}}^{(1)}\to
  \mathcal{H}_{\mathrm{out}}^{(1)})
  \to
  \qty(\mathcal{H}_\mathrm{in}^{(2)}\to\mathcal{H}_\mathrm{out}^{(2)})
  )\coloneqq\left\{\Theta\in\mathcal{C}_\mathrm{CQ}\qty(
  \qty(\mathcal{H}_{\mathrm{in}}^{(1)}\to
  \mathcal{H}_{\mathrm{out}}^{(1)})
  \to
  \qty(\mathcal{H}_\mathrm{in}^{(2)}\to\mathcal{H}_\mathrm{out}^{(2)})
  ):\right.\notag\\
  &\quad\left.\text{$\tilde{\Theta}=J\circ\Theta\circ J^{-1}$ satisfies SC, CQ, TP, and EI conditions in Lemma~\ref{LLLO2}}
\right\}.
\end{align}
As stated below, a notable property that holds for this subclass of superchannels is the monotonicity of the trace distance between any Choi states of CQ channels, which is crucial for establishing asymptotic continuity as a relaxation.

\begin{lemma}\Label{LBY}
Any superchannel $\Theta \in \mathcal{O}_\mathrm{TPCP}$ in~\eqref{eq:O_CPTP} leads to the monotonicity of the trace distance between any Choi states of CQ channels $\mathcal{N}$ and $\mathcal{N}'$
\begin{align}
\left\|J\qty(\Theta\qty(\mathcal{N})) -J\qty(\Theta\qty(\mathcal{N}')) \right\|_1
&\leq
\left\|J\qty(\mathcal{N}) -J\qty(\mathcal{N}') \right\|_1.
\label{BNSA}
\end{align}
\end{lemma}

\begin{proof}
By definition~\eqref{eq:O_CPTP} of $\mathcal{O}_\mathrm{TPCP}$, due to Lemma~\ref{LLLO2}, $\Theta$ maps any CQ channel $\mathcal{N}$ with its Choi state
\begin{align}
\label{eq:J_N_S21}
    J\qty(\mathcal{N})=\frac{1}{d_1}\sum_{k}\ket{k}\bra{k}\otimes \rho_k
\end{align}
into
\begin{align}
\label{eq:J_Theta_N_S21}
    J\qty(\Theta\qty(\mathcal{N}))=\frac{1}{d_2}\sum_{k'}\ket{k'}\bra{k'}\otimes P_{K|K'}\qty(k|k')\Gamma_{k,k'}\qty(\rho_k).
\end{align}
Using $P_{K|K'}$, which satisfies~\eqref{CB9} in Lemma~\ref{LLLO2}, we define a function
\begin{align}
\label{eq:P_prime_K_K_prime}
    P_{K'|K}^{\prime}\qty(k'|k)\coloneqq \frac{d_1}{d_2}P_{K|K'}\qty(k|k').
\end{align}
Then, $P_{K'|K}^{\prime}$ also becomes a conditional probability distribution because $P_{K'|K}^{\prime}\qty(k'|k)\geq 0$, and~\eqref{CB9} yields
$\sum_{k'}P_{K'|K}^{\prime}\qty(k'|k)=\frac{d_1}{d_2}\sum_{k'}P_{K|K'}\qty(k|k')=1$.
With this conditional distribution $P_{K'|K}^{\prime}$, for any superchannel $\Theta\in{\cal O}_\mathrm{CPTP}\qty(
  \qty(\mathcal{H}_{\mathrm{in}}^{(1)}\to
  \mathcal{H}_{\mathrm{out}}^{(1)})
  \to
  \qty(\mathcal{H}_\mathrm{in}^{(2)}\to\mathcal{H}_\mathrm{out}^{(2)})
  )$, the corresponding CP linear map $\tilde{\Theta}\in\mathcal{C}\qty(
 \mathcal{H}_{\mathrm{in}}^{(1)}\otimes
  \mathcal{H}_{\mathrm{out}}^{(1)}
  \to
  \mathcal{H}_\mathrm{in}^{(2)}\otimes\mathcal{H}_\mathrm{out}^{(2)}
  )$ becomes a CPTP map given by
\begin{align}
    \tilde{\Theta}\qty(\rho)=\sum_{k'=0}^{d_2-1}\ket{k'}\bra{k'}\otimes\qty(\sum_{k=0}^{d_1-1} P_{K'|K}^{\prime}\qty(k'|k)\Gamma_{k,k'}\qty(\Tr_{\mathrm{in},(1)}\qty[\qty(\ket{k}\bra{k}\otimes\mathds{1}_\mathrm{out}^{(1)})\rho])),
\end{align}
which satisfies $\tilde{\Theta}\qty(J\qty(\mathcal{N}))=J\qty(\Theta\qty(\mathcal{N}))$ due to~\eqref{eq:J_N_S21},~\eqref{eq:J_Theta_N_S21}, and~\eqref{eq:P_prime_K_K_prime}
.
Consequently, from the monotonicity of the trace distance under the CPTP map $\tilde{\Theta}$, it follows that
\begin{align}
\left\|J\qty(\Theta\qty(\mathcal{N})) -J\qty(\Theta\qty(\mathcal{N}')) \right\|_1
=\left\|\tilde{\Theta}\qty(J\qty(\mathcal{N})) -\tilde{\Theta}\qty(J\qty(\mathcal{N}')) \right\|_1
&\leq
\left\|J\qty(\mathcal{N}) -J\qty(\mathcal{N}') \right\|_1.
\end{align}
\end{proof}

This lemma indicates that, for any sequence $\qty{\Theta_n\in\mathcal{O}_\mathrm{CPTP}}$ of superchannels in this subclass, any two sequences $\qty{\mathcal{N}_n}_n$ and $\qty{\mathcal{N}_n^\prime}_n$ of CQ channels satisfying $\lim_{n\to\infty}\frac{1}{2}\|J\qty(\mathcal{N}_n)-J\qty(\mathcal{N}_n^\prime)\|_1=0$ are transformed into sequences of CQ channels satisfying
\begin{align}
\label{eq:asymp_cont_CPTP}
    \lim_{n\to\infty}\frac{1}{2}\left\|J\qty(\Theta_n\qty(\mathcal{N}_n))-J\qty(\Theta_n\qty(\mathcal{N}_n^\prime))\right\|_1=0.
\end{align}
Correspondingly, the set of asymptotically continuous (AC) operations, satisfying~\eqref{eq:condition_asymptotic_continuity_si}, is given by
\begin{align}
&\tilde{O}_\mathrm{AC}\qty(
  \qty(\mathcal{H}_{\mathrm{in}}^{(1)}\to
  \mathcal{H}_{\mathrm{out}}^{(1)})
  \to
  \qty(\mathcal{H}_\mathrm{in}^{(2)}\to\mathcal{H}_\mathrm{out}^{(2)})
  )
  \coloneqq\nonumber\\
  &\left\{ \qty{\Theta_n \in 
  \mathcal{C}_{\mathrm{CQ}}\qty(
  \qty(\mathcal{H}_{\mathrm{in}}^{(1)\otimes n}\to
  \mathcal{H}_{\mathrm{out}}^{(1)\otimes n})
  \to
  \qty(\mathcal{H}_\mathrm{in}^{(2)\otimes n}\to\mathcal{H}_\mathrm{out}^{(2)\otimes n})
  )}_n
:\right.\nonumber\\
&\left.\forall
\qty{\mathcal{N}_n \in {\cal F}_\mathrm{R}\qty(\mathcal{H}_{\mathrm{in}}^{(1)\otimes n}\to
  \mathcal{H}_{\mathrm{out}}^{(1)\otimes n})}_n~\text{and}~\qty{\mathcal{N}_n^\prime \in {\cal F}_\mathrm{R}\qty(\mathcal{H}_{\mathrm{in}}^{(1)\otimes n}\to
  \mathcal{H}_{\mathrm{out}}^{(1)\otimes n})}_n~\text{satisfying}~\lim_{n\to\infty}\left\|J\qty(\mathcal{N}_n)-J\qty(\mathcal{N}_n^\prime)\right\|_1=0,\right.\nonumber\\
  &\left.\lim_{n\to\infty}\left\|J\qty(\Theta\qty(\mathcal{N}_n))-J\qty(\Theta\qty(\mathcal{N}_n^\prime))\right|_1=0
\right\}.
\end{align}
Due to~\eqref{eq:asymp_cont_CPTP}, we can interpret $\tilde{\mathcal{O}}_\mathrm{AC}$ as a relaxation of $\mathcal{O}_\mathrm{CPTP}$.

As a whole, the set of asymptotically free operations $\tilde{\mathcal{O}}$ is given by the intersection of the sets of superchannels satisfying the resource-non-generating property and the asymptotic continuity, i.e.,
\begin{align}
    \tilde{\mathcal{O}}=\tilde{\mathcal{O}}_\mathrm{RNG}\cap\tilde{\mathcal{O}}_\mathrm{AC},
\end{align}
and this can be interpreted as a relaxation of sets of free operations included in $O_\mathrm{RNG}\cap O_\mathrm{CPTP}$.
To characterize such sets of free operations, 
for any superchannel $\Theta\in\mathcal{O}_\mathrm{CPTP}$ with the conditional distribution $P_{K|K'}$ and the CPTP map $\Gamma_{k,k'}$ chosen according to Lemma~\ref{LLLO2},
we consider the following conditions:
\begin{description}
\item[(A1)]
The inclusion $\Theta\qty({\cal F}_\mathrm{R})\subset {\cal F}_\mathrm{R}$ holds.
\item[(A2)]
There exists a CPTP map $\Gamma_0\in\mathcal{C}\qty(\mathcal{H}_{\mathrm{out}}^{(1)}\to \mathcal{H}_{\mathrm{out}}^{(2)})$ such that, independently of $k'$,
\begin{align}
\Gamma_0=\sum_{k=0}^{d_1-1} P_{K|K'}(k|k') \Gamma_{k,k'}.
\end{align}
\item[(A3)]
There exist
two reference systems
$\mathcal{H}_{\mathrm{in}}^R$, $\mathcal{H}_{\mathrm{out}}^R$, 
an entangled state
$\tilde{\rho}$ on $\mathcal{H}_{\mathrm{in}}^R \otimes \mathcal{H}_{\mathrm{out}}^R$,
a POVM $M_{k'}=\qty{M_{k|k'}}_k$, and 
a CPTP map $\tilde{\Gamma}$
from ${\cal D}\qty( \mathcal{H}_{\mathrm{out}}^R
\otimes \mathcal{H}_{\mathrm{out}}^{(1)})$
to ${\cal D}\qty( \mathcal{H}_{\mathrm{out}}^{(2)})$ such that
the relations
\begin{align}
\Tr [(M_{k|k'}\otimes \mathds{1})\rho]
&=
P_{K|K'}(k|k')\\
\tilde{\Gamma}\qty(
\Tr_{\mathrm{in},R} \qty[(M_{k|k'}\otimes \mathds{1})
\tilde{\rho}]
\otimes \rho )
&=
 P_{K|K'}(k|k') \Gamma_{k,k'}(\rho)\label{GHA}
\end{align}
hold for any state $\rho \in {\cal D}\qty(\mathcal{H}_{\mathrm{out}}^{(1)})$, $k$, and $k'$, where $\Tr_{\mathrm{in},R}$ is the partial trance over $\mathcal{H}_\mathrm{in}^R$.
We write the collection of POVMs $(M_{k'})$ as ${\cal M}$.
The conversion is the above form is written as
$\Theta_{{\cal M},\tilde{\Gamma},\tilde{\rho}}$.
\end{description}
The condition (A1) is the one for $\mathcal{O}_\mathrm{RNG}\cap\mathcal{O}_\mathrm{CPTP}$.
The condition (A2) corresponds to the non-signaling condition; then, we write the set of superchannels satisfying (A2) as $\mathcal{O}_\mathrm{NS}$.
When the condition (A3) holds, the superchannel $\Theta$ can be implemented by the combination of the shared entangled state $\tilde{\rho}$ and the CPTP map $\tilde{\Gamma}$; then, we write the set of superchannels satisfying (A3)
as $\mathcal{O}_\mathrm{ENS}$.
The following lemma characterizes the relation of these sets of superchannels as
\begin{align}
    \mathcal{O}_\mathrm{ENS}\subset\mathcal{O}_{NS}=\mathcal{O}_\mathrm{RNG}\cap\mathcal{O}_\mathrm{CPTP}.
\end{align}

\begin{lemma}
The conditions (A1) and (A2) are equivalent.
The condition (A3) implies 
the conditions (A1) and (A2).
\end{lemma}

\begin{proof}
Assume the condition (A1).
Under this condition, for an arbitrary replacer $\mathcal{N}_\sigma$ with $\sigma \in {\cal D}\qty({\cal H}_{\mathrm{out}}^{(1)})$,
the output of the channel $\Theta({\cal N}_\sigma)$ does not depend on the input $k'$; that is, for an arbitrary state $\sigma \in {\cal D}\qty({\cal H}_{\mathrm{out}}^{(1)})$, the state 
\begin{align}
\qty(\Theta({\cal N}_\sigma))\qty(\ket{k'}\bra{k'})=
\sum_{k=0}^{d_1-1} P_{K|K'}(k|k') \Gamma_{k,k'}(\sigma)
\end{align}
does not depend on the input $k'$.
Thus, the map $\sum_{k=0}^{d_1-1} P_{K|K'}(k|k') \Gamma_{k,k'}$ does not depend on the input $k'$, which implies the condition (A2).

Assume the condition (A2).
Then, for an arbitrary replacer $\mathcal{N}_\sigma$ with $\sigma \in {\cal D}\qty({\cal H}_{\mathrm{out}}^{(1)})$ and any element $k'$, due to~\eqref{VBE},
the output of the channel $\Theta\qty({\cal N}_\sigma)$ is given by
\begin{align}
\qty(\Theta\qty({\cal N}_\sigma))\qty(\rho)=\sum_{k'=0}^{d_2-1}\Tr[\ket{k'}\bra{k'}\rho]\sum_{k=0}^{d_1-1}P_{K|K'}(k|k') \Gamma_{k,k'}\qty(\sigma)=\Gamma_0\qty(\sigma),
\end{align}
which implies 
$\Theta\qty({\cal N}_\sigma)\in {\cal F}_\mathrm{RNG}$, meaning that the condition (A1) holds.

Assume the condition (A3).
The relation~\eqref{GHA} implies 
\begin{align}
\sum_{k=0}^{d_1-1}
 P_{K|K'}(k|k') \Gamma_{k,k'}(\rho)
=
\sum_{k=0}^{d_1-1}
\tilde{\Gamma}\qty(
\Tr_{\mathrm{in},R}\qty[(M_{k|k'}\otimes \mathds{1}) \tilde{\rho}]
\otimes \rho )
=
\tilde{\Gamma}\qty(
\Tr_{\mathrm{in},R} \qty[(\mathds{1}\otimes \mathds{1})\tilde{\rho}]
\otimes \rho )
=
\tilde{\Gamma}\qty(
\Tr_{\mathrm{in},R}\qty[\tilde{\rho}]
\otimes \rho),
\end{align}
which yields the condition (A2).

\end{proof}

Additionally, we consider a subclass of $\mathcal{O}_\mathrm{ENS}$.
A superchannel $\Theta \in \mathcal{O}_\mathrm{ENS}$ is said to be classically correlated when there exists a probability distribution $Q_S$ of the common randomness to satisfy the following condition:
\begin{description}
    \item[(A4)]There exist a conditional distribution $P_{K|K',S}(k|k',s)$ and a family of $\qty{\tilde{\Gamma}_{s}}_s$ of CPTP maps such that
\begin{align}
\qty(\Theta({\cal N}))\qty(\ket{k'}\bra{k'})= 
\sum_{k=0}^{d_1-1} \sum_{s}Q_S(s) P_{K|K',S}(k|k',s)
\tilde{\Gamma}_{s}(\rho_k).
\end{align}
\end{description}
We write the set of superchannels satisfying (A4) as $\mathcal{O}_\mathrm{CNS}$.
Under this condition, when $P_{K|K'}(k|k')=\sum_{s}Q_S(s) P_{K|K',S}(k|k',s)$, we have
\begin{align}
\qty(\Theta({\cal N}))\qty(\ket{k'}\bra{k'})= 
\sum_{k=0}^{d_1-1} P_{K|K'}(k|k')
\sum_{s}Q_S(s) \frac{P_{K|K',S}(k|k',s)}{P_{K|K'}(k|k')}
\tilde{\Gamma}_{s}(\rho_k).
\end{align}
When we give $\Gamma_{k,k'}=\sum_{s}Q_S(s) \frac{P_{K|K',S}(k|k',s)}{P_{K|K'}(k|k')}
\tilde{\Gamma}_{s}$, the expression~\eqref{VBE} holds.
Overall, we have the inclusion relation
\begin{align}
\label{eq:replacer_inclusion_relation}
\mathcal{O}_\mathrm{CNS} \subset \mathcal{O}_\mathrm{ENS}
\subset \mathcal{O}_\mathrm{NS} = \mathcal{O}_\mathrm{RNG}\cap\mathrm{O}_\mathrm{CPTP},
\end{align}
and we can interpret the set $\tilde{O}$ of asymptotically free operations for $\mathcal{F}_\mathrm{R}$ as a relaxation of these sets.

\subsubsection{Asymptotic conversion rate}\label{SS2B}
We present an application of the second law of QRTs of CQ channels with replacers.
When we consider the set $\mathcal{F}_\mathrm{R}$ of replaces as free CQ channels, for any CQ channel $\mathcal{N}\in\mathcal{C}_\mathrm{CQ}\qty(\mathcal{H}_\mathrm{in}\to\mathcal{H}_\mathrm{out})$ with $d=\dim\qty(\mathcal{H}_\mathrm{in})$, the relative entropy of resource is calculated simply as
\begin{align}
 R_\mathrm{R}\qty(\mathcal{N})
&=\min_{\mathcal{N}_\mathrm{free}\in\mathcal{F}_\mathrm{R}}D\left(J\qty(\mathcal{N})\middle\|J\qty(\mathcal{N}_\mathrm{free})\right)\\
&=\min_{\sigma} D\left(J\qty(\mathcal{N})\middle\|J\qty(\mathcal{N}_\sigma)\right)\\
&=\min_{\sigma} \sum_{k=1}^d \frac{1}{d}D\left(\rho_k\middle\|\sigma\right)\\
&=\sum_{k=1}^{d}\frac{1}{d}D\left(\rho_k\middle\|
\sum_{k'=1}^{d}\frac{1}{d}\rho_{k'}\right)\\
&\eqqcolon I(\mathrm{in};\mathrm{out})[{\cal N}],
\label{eq:relative_entropy_of_resource4}
\end{align}
where we introduce the notation $I(\mathrm{in};\mathrm{out})[{\cal N}]$ in the last line since this is the same as the mutual information when the input to the channel is chosen from the uniform distribution.
This quantity satisfies the additivity
\begin{align}
R_\mathrm{R}\qty(\mathcal{N}_1\otimes \mathcal{N}_2)
=R_\mathrm{R}\qty(\mathcal{N}_1)+
R_\mathrm{R}\qty(\mathcal{N}_2),
\end{align}
implying that the regularized relative entropy of resource is also simply given by
\begin{align}
\label{eq:replacer_regularized_relative_entropy_of_resource}
R_\mathrm{R}^{\infty}\qty(\mathcal{N})=
R_\mathrm{R}\qty(\mathcal{N})=I(\mathrm{in};\mathrm{out})[{\cal N}].
\end{align}
Under $\mathcal{O}_\mathrm{NS}$, the asymptotic conversion rate of parallel quantum channels is
\begin{align}
    &r_{\mathcal{O}_\mathrm{NS}}\qty(\mathcal{N}_1\to\mathcal{N}_2)\coloneqq\sup\left\{r\geq 0:\exists\qty{\Theta_n\in \mathcal{O}_\mathrm{NS}}_n,\liminf_{n\to\infty}\frac{1}{2}\left\|J\qty(\Theta_n\qty(\mathcal{N}_1^{\otimes n}))-J\qty(\mathcal{N}_2^{\otimes \lceil rn\rceil})\right\|_1=0\right\}.
\end{align}
We define $r_{\mathcal{O}_\mathrm{CNS}}\qty(\mathcal{N}_1\to\mathcal{N}_2)$ and $r_{\mathcal{O}_\mathrm{ENS}}\qty(\mathcal{N}_1\to\mathcal{N}_2)$ in the same way.
Under $\tilde{\mathcal{O}}$, we define the asymptotic conversion rate of parallel quantum channels as 
\begin{align}
    &r_{\tilde{\mathcal{O}}}\qty(\mathcal{N}_1\to\mathcal{N}_2)\coloneqq\sup\left\{r\geq 0:\exists\qty{\Theta_n}_n\in\tilde{\mathcal{O}},\liminf_{n\to\infty}\frac{1}{2}\left\|J\qty(\Theta_n\qty(\mathcal{N}_1^{\otimes n}))-J\qty(\mathcal{N}_2^{\otimes \lceil rn\rceil})\right\|_1=0\right\}.
\end{align}
Since the sequence of conversion $\qty{\Theta_n\in\mathcal{O}_\mathrm{NS}}_n$ belong to $\tilde{\mathcal{O}}$, due to~\eqref{eq:replacer_inclusion_relation}, we have
\begin{align}
r_{\mathcal{O}_\mathrm{CNS}}\qty(\mathcal{N}_1\to\mathcal{N}_2)
\le r_{\mathcal{O}_\mathrm{ENS}}\qty(\mathcal{N}_1\to\mathcal{N}_2)
\le r_{\mathcal{O}_\mathrm{NS}}\qty(\mathcal{N}_1\to\mathcal{N}_2)
\le r_{\tilde{\mathcal{O}}}\qty(\mathcal{N}_1\to\mathcal{N}_2)
=\frac{ I(\mathrm{in}; \mathrm{out})[{\cal N}_1]}{ I(\mathrm{in}; \mathrm{out})[{\cal N}_2]},\label{BA5}
\end{align}
where $I(\mathrm{in}; \mathrm{out})[\cdots]$ in the last line is defined in~\eqref{eq:relative_entropy_of_resource4}, and the last equality follows from~\eqref{eq:replacer_regularized_relative_entropy_of_resource} along with the second law, i.e., Theorem~\ref{thm:second_law_si}.
As suggested by the rightmost term in~\eqref{BA5}, the second law of QRTs generally provides a converse bound on the asymptotic conversion rate under free operations such as $\mathcal{O}_\mathrm{CNS}$, $\mathcal{O}_\mathrm{ENS}$, and $\mathcal{O}_\mathrm{NS}$. However, in the case of replacers, we will show that equality holds in some of these inequalities, making the rightmost term in~\eqref{BA5} an exact characterization of certain asymptotic conversion rates.

First, we consider the simulation of 
a qubit noiseless classical-classical (CC) channel
${\cal N}_\mathrm{NL}: \ket{k'}\bra{k'} \to |k'\rangle \langle k'|$ ($k'\in\{0,1\}$)
by using given a CQ channel ${\cal N}=\sum_{k=0}^{d-1}\Tr[\ket{k}\bra{k}\rho]\rho_k$.
In random coding, the sender and the receiver share common randomness $S$.
Depending on the value of $S$, the sender maps $K'$ to $K$ by choosing a function $f_S$ from $K'$ to $K$.
A typical construction of random coding satisfies the following condition:
\begin{align}
\sum_{k'=0}^{d_2-1} \frac{1}{d_2} \Pr\qty[f_S(k') = k] = \frac{1}{d_1}.
\end{align}
Additionally, the receiver performs a measurement on the received system that also depends on the shared randomness $S$.
A POVM measurement can be considered as a special case of a CPTP map.
Considering the direct part 
of CQ-channel coding theorem based on the random coding method~\cite[Proof of Lemma 4.6]{hayashi2016quantum}, 
we have
\begin{align}
r_{\mathcal{O}_\mathrm{CNS}}\qty(\mathcal{N}\to\mathcal{N}_\mathrm{NL})
\ge I(\mathrm{in}; \mathrm{out})[{\cal N}].
\label{BVU}
\end{align}

On the other hand,
when we convert from $ \mathcal{N}_\mathrm{NL}$ to  
$\mathcal{N}$,
the problem is equivalent to 
visible quantum state compression with shared randomness
when the random variable $K=k$ corresponds to  
the states $\rho_k$ and is generated by the distribution 
$P_K(k)=\frac{1}{d}$. 
In the encoding process, 
depending on the shared randomness $S$,
the encoder converts
the random variable $K$ into a binary variable $L$ in a smaller memory.
In the decoding process, depending on the shared randomness $S$,
the decoder generates a quantum state $\rho_{S,L}$.
In this case, the average trace distance is given by
\begin{align}
\E_K\qty[\| \rho_K - \E_{S,L|K}\qty[\rho_{S,L}] \|_1],
\end{align}
where $\E$ denotes the expectation value.
When all the states $\rho_x$ are commutative with each other, 
Refs.~\cite[Theorem 10.8]{hayashi2016quantum} and~\cite{PhysRevLett.83.3081,PhysRevA.64.022308}
show that
\begin{align}
\frac{1}{r_{\mathcal{O}_\mathrm{CNS}}\qty(\mathcal{N}_\mathrm{NL}\to \mathcal{N})}
\le I(\mathrm{in}; \mathrm{out})[{\cal N}].\label{BVU2}
\end{align}

Since
$r_{\mathcal{O}_\mathrm{CNS}}\qty(\mathcal{N}\to\mathcal{N}_\mathrm{NL})
\le \frac{1}{r_{\mathcal{O}_\mathrm{CNS}}\qty(\mathcal{N}_\mathrm{NL}\to \mathcal{N})}
$,
when the relation~\eqref{BVU2} holds in addition to~\eqref{BVU},
we have
\begin{align}
r_{\mathcal{O}_\mathrm{CNS}}\qty(\mathcal{N}\to\mathcal{N}_\mathrm{NL})
=\frac{1}{r_{\mathcal{O}_\mathrm{CNS}}\qty(\mathcal{N}_\mathrm{NL}\to \mathcal{N})}
= I(\mathrm{in}; \mathrm{out})[{\cal N}].
\label{BVU3}
\end{align}
Therefore, 
when two channels $\mathcal{N}_1,\mathcal{N}_2 $
satisfies \eqref{BVU3},
we have
\begin{align}
r_{\mathcal{O}_\mathrm{CNS}}\qty(\mathcal{N}_1\to\mathcal{N}_2)
=\frac{ I(\mathrm{in}; \mathrm{out})[{\cal N}_1]}{ I(\mathrm{in}; \mathrm{out})[{\cal N}_2]},
\Label{BVU5}
\end{align}
which implies the equality in all inequalities of~\eqref{BA5}.

When the states $\rho_x$ are not commutative with each other, 
it is not easy to show the relation~\eqref{BVU2} itself.
Instead, when shared entanglement is allowed instead of shared randomness,
this problem is the same as the quantum reverse Shannon theorem 
\cite[Theorem 3]{IEEE-IT-6757002}, \cite[Theorem 3.10]{BCR2011}.
In these references on the quantum reverse Shannon theorem, the noiseless quantum channel is used instead of the noiseless classical channel.
Since shared entanglement is allowed, the noiseless quantum channel is simulated by twice the length of the noiseless classical channel.
Therefore, even in the non-commutative case, instead of the relation~\eqref{BVU2}, 
we have
\begin{align}
\frac{1}{r_{\mathcal{O}_\mathrm{ENS}}\qty(\mathcal{N}_\mathrm{NL}\to \mathcal{N})}
\le I(\mathrm{in}; \mathrm{out})[{\cal N}]
\label{BVU7},
\end{align}
whose detailed derivation is given later in Sec.~\ref{SS21}.

Thus, the combination of~\eqref{BVU} and~\eqref{BVU7} implies 
\begin{align}
r_{\mathcal{O}_\mathrm{ENS}}\qty(\mathcal{N}\to\mathcal{N}_\mathrm{NL})
=\frac{1}{r_{\mathcal{O}_\mathrm{ENS}}\qty(\mathcal{N}_\mathrm{NL}\to \mathcal{N})}
= I(\mathrm{in}; \mathrm{out})[{\cal N}].\label{BVU8}
\end{align}
Therefore, any two channels $\mathcal{N}_1$ and $\mathcal{N}_2 $ satisfy
\begin{align}
r_{\mathcal{O}_\mathrm{ENS}}\qty(\mathcal{N}_1\to\mathcal{N}_2)
=\frac{ I(\mathrm{in}; \mathrm{out})[{\cal N}_1]}{ I(\mathrm{in}; \mathrm{out})[{\cal N}_2]},
\label{BVU9}
\end{align}
which implies equality in all but the first inequality of~\eqref{BA5}.

Finally, we remark on whether~\eqref{BVU2} holds in this case.
When the sender holds half of an entangled state and the encoding operation involves converting classical information $K$ to classical information $L$, the sender must perform a measurement to determine the value of $L$.
If this measurement does not depend on the classical information $K$, it can be performed prior to the protocol.
Let $R$ denote the outcome of this measurement.
In this case, the protocol can be simulated using the random variable $R$, implying that the relation~\eqref{BVU2} still holds even when the sender and receiver share randomness instead of entanglement.
However, the above discussion assumes that the measurement is independent of the classical information $K$. Since this assumption does not hold in general, it remains unclear whether relation~\eqref{BVU2} holds when the states $\rho_x$ are not mutually commuting.

\subsubsection{Derivation of~\eqref{BVU7}}\label{SS21}
Here, we explain how to derive~\eqref{BVU7} from Refs.~\cite[Theorem 3]{IEEE-IT-6757002} and~\cite[Theorem 3.10]{BCR2011}.
For this purpose, we consider the conversion of quantum-quantum (QQ) channels given in these references.
We consider the following conversion from a QQ channel in ${\cal C}\qty(\mathcal{H}_{\mathrm{in}}^{(1)}\to\mathcal{H}_{\mathrm{out}}^{(1)})$ to a QQ channel in ${\cal C}\qty(\mathcal{H}_{\mathrm{in}}^{(2)}\to\mathcal{H}_{\mathrm{out}}^{(2)})$.
We choose two reference systems
$\mathcal{H}_{\mathrm{in}}^R$, $\mathcal{H}_{\mathrm{out}}^R$, 
an entangled state
$\tilde{\rho}$ on $\mathcal{H}_{\mathrm{in}}^R \otimes \mathcal{H}_{\mathrm{out}}^R$,
a CPTP map $\Gamma_\mathrm{in}\in{\cal C}\qty(\mathcal{H}_{\mathrm{in}}^R \otimes \mathcal{H}_{\mathrm{in}}^{(2)}\to\mathcal{H}_{\mathrm{in}}^{(1)})$,
and
a CPTP map $\Gamma_\mathrm{out}\in{\cal C}\qty(\mathcal{H}_{\mathrm{out}}^R \otimes \mathcal{H}_{\mathrm{out}}^{(1)}\to\mathcal{H}_{\mathrm{out}}^{(2)})$.
Then,
for a QQ channel ${\cal N}\in{\cal C}\qty(\mathcal{H}_{\mathrm{in}}^{(1)}\to\mathcal{H}_{\mathrm{out}}^{(1)})$ and
a state $\rho \in {\cal D}\qty(\mathcal{H}_{\mathrm{in}}^{(2)})$,
we define $\Theta_{\Gamma_\mathrm{in},\Gamma_\mathrm{out},\tilde{\rho}}$ as
\begin{align}
\qty(\Theta_{\Gamma_\mathrm{in},\Gamma_\mathrm{out},\tilde{\rho}} ({\cal N}))\qty(\rho)\coloneqq
\qty(\Gamma_\mathrm{out}
\circ \qty( {\cal N} \otimes \id_{\mathcal{H}_{\mathrm{out}}^R})
\circ \qty(\Gamma_\mathrm{in} \otimes \id_{\mathcal{H}_{\mathrm{out}}^R}))
(\rho \otimes \tilde{\rho}).
\end{align}

We consider CQ channels ${\cal N}_1\in{\cal C}_\mathrm{CQ}\qty(\mathcal{H}_{\mathrm{in}}^{(1)}\to\mathcal{H}_{\mathrm{out}}^{(1)})$
and ${\cal N}_2\in{\cal C}_\mathrm{CQ}\qty(\mathcal{H}_{\mathrm{in}}^{(2)}\to\mathcal{H}_{\mathrm{out}}^{(2)})$.
We choose the reference system $\mathcal{H}_{\mathrm{in},R}^{(2)}$ with the same dimension as $\mathcal{H}_{\mathrm{in}}^{(2)}$.
We have
\begin{align}
&\max_{\rho \in {\cal D}\qty(\mathcal{H}_{\mathrm{in}}^{(2)}
\otimes \mathcal{H}_{\mathrm{in},R}^{(2)})}
\left\|\qty(\qty(\Theta_{\Gamma_\mathrm{in},\Gamma_\mathrm{out},\tilde{\rho}} ({\cal N}_{1})
-{\cal N}_{2})\otimes\id_{\mathcal{H}_{\mathrm{in},R}^{(2)}})(\rho)\right\|_1 
\label{KV1} \\
&=\max_{k'}
\left\|\qty(\Theta_{\Gamma_\mathrm{in},\Gamma_\mathrm{out},\tilde{\rho}} ({\cal N}_{1})
-{\cal N}_{2})(\ket{k'}\bra{k'})\right\|_1 \\
&\ge\sum_{k'=0}^{d_2-1}\frac{1}{d_2}
\left\|\Theta_{\Gamma_\mathrm{in},\Gamma_\mathrm{out},\tilde{\rho}}({\cal N}_1)(\ket{k'}\bra{k'})
-{\cal N}_{2}(\ket{k'}\bra{k'})\right\|_1\\
&=
\left\|J\qty(\Theta_{\Gamma_\mathrm{in},\Gamma_\mathrm{out},\tilde{\rho}}({\cal N}_1))
-J\qty({\cal N}_{2})\right\|_1.\label{KV2}
\end{align}
When the error is measured by~\eqref{KV1}, with
$|\Phi\rangle$ denoting the maximally entangled state over the system
$\mathcal{H}_{\mathrm{in}} \otimes 
\mathcal{H}_{\mathrm{in},R}$, where
$\mathcal{H}_{\mathrm{in},R}$ is the reference system with the same dimension as
$\mathcal{H}_{\mathrm{in}}$,
Ref.~\cite[Theorem 3.10]{BCR2011}
shows that, for a CQ channel $\mathcal{N}$, the rate
\begin{align}
D\left( \qty({\cal N}\otimes \id)\qty(|\Phi\rangle\langle \Phi|)
\middle\| {\cal N}
\qty(\sum_{k=0}^{d_1-1}\frac{1}{d_1}|k\rangle\langle k|) 
\otimes 
\sum_{k=0}^{d_1-1}\frac{1}{d_1}|k\rangle\langle k|\right)
&=\sum_{k=0}^{d_1-1}\frac{1}{d_1} D\left({\cal N}\qty(\ket{k}\bra{k})\middle\|
\sum_{k'=0}^{d_1-1}\frac{1}{d_1}{\cal N}\qty(\ket{k'}\bra{k'})\right) \\
&=I(\mathrm{in}; \mathrm{out})[{\cal N}] 
\end{align}
is achieved.
Hence, the above relation~\eqref{KV2} guarantees~\eqref{BVU7}, i.e.,
the achievability of $I(\mathrm{in}; \mathrm{out})[{\cal N}]$ for the CQ-channel setting.
(The paper~\cite{IEEE-IT-6757002} also essentially shows the same fact in the proof of Theorem 3 of Ref.~\cite{IEEE-IT-6757002}.)

Here, there is a possibility that the relation~\eqref{CB9} does not hold.
However, this problem can be resolved as follows.
We define the unitary $U_{k''}\coloneqq
\sum_{k=0}^{2^n-1} |k+k''\rangle \langle k''|$
on $n$ qubits.
Then, we have
\begin{align}
\frac{1}{2^n}\sum_{k''=0}^{2^n-1}
\Theta_{\Gamma_\mathrm{in},\Gamma_\mathrm{out},\tilde{\rho}}\qty(
U_{k''}^\dagger \circ {\cal N}_\mathrm{NL}^{\otimes n}\circ U_{k''} )
=\Theta_{\Gamma_\mathrm{in},\Gamma_\mathrm{out},\tilde{\rho}}\qty( {\cal N}_\mathrm{NL} ) .
\end{align}
Also, the conversion
${\cal N} \mapsto \frac{1}{2^n}\sum_{k''=0}^{2^n-1}
\Theta_{\Gamma_\mathrm{in},\Gamma_\mathrm{out},\tilde{\rho}}\qty(
U_{k''}^\dagger \circ {\cal N}\circ U_{k''} )$
satisfies the condition~\eqref{CB9}.
The above conversion can be implemented 
by sharing the randomness to identify $k''$.

\subsection{Counterexamples}
\label{sec:counterexample}

We discuss counterexamples that arise when some of the assumptions in our analysis of the generalized quantum Stein's lemma are removed.
We consider the one-copy case in Sec.~\ref{se:one_copy}, and the asymptotic case in Sec.~\ref{se:asymptotic}.

\subsubsection{One-copy case}
\label{se:one_copy}

When ${\cal S}$ is not a convex set, 
the relation~\eqref{BH4} in Lemma~\ref{L1} does not hold in general.
To see this, we consider examples using 
a representation $U$ of a group $G$ over the system ${\cal H}$.
We assume that any element of ${\cal S}$ moves to each other
by the application of the representation $U$.
That is, when an element $\sigma_0 \in {\cal S}$ is fixed,
for any element $\sigma \in {\cal S}$,
there exists $g\in G$  such that
\begin{align}
\label{eq:sigma_group}
\sigma=U(g) \sigma_0 U(g)^\dagger.
\end{align}
Also, we assume that $G$ is a compact group so that
$G$ has the invariant probability measure $\nu$.
Then, using an element $\sigma_0 \in {\cal S}$,
we can define the average state $\sigma_\mathrm{av}\coloneqq
\int_G U(g) \sigma_0 U(g)^\dagger \nu(dg)$, which does not
depend on the choice of $\sigma_0 \in {\cal S}$.
We have the following lemma that will be useful for analyzing examples later.

\begin{lemma}\Label{LL7}
When the set ${\cal S}$ satisfies the above conditions,
for any $\epsilon>0$,
we have
\begin{align}
\beta_\epsilon (\rho\|{\cal S})=
\beta_\epsilon (\rho\|\sigma_\mathrm{av}),\Label{ZVI}
\end{align}
and there exists an optimal POVM $\{T,\mathds{1}-T\}$ for this hypothesis testing in the form of
\begin{align}
T=\int_G  U(g)^\dagger T U(g) \nu(dg).   
\end{align}
\end{lemma}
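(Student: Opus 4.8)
The plan is to establish the two inequalities $\beta_\epsilon(\rho\|\sigma_{\mathrm{av}}) \le \beta_\epsilon(\rho\|\mathcal{S})$ and $\beta_\epsilon(\rho\|\mathcal{S}) \le \beta_\epsilon(\rho\|\sigma_{\mathrm{av}})$ separately, the second one also delivering the claimed invariant form of the optimal POVM. Throughout I would write $\mathcal{W}_G(X)\coloneqq\int_G U(g)^\dagger X U(g)\,\nu(dg)$ for the twirl over the compact group $G$ with invariant probability measure $\nu$; by invariance of $\nu$ every operator of this form is a fixed point of $\mathcal{W}_G$, so it automatically has the $G$-invariance asserted in the lemma. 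The structural inputs I would rely on are that $\mathcal{S}$ is a single $G$-orbit, so each $\sigma\in\mathcal{S}$ equals $U(h)\sigma_0 U(h)^\dagger$ for some $h\in G$, and that the tested state $\rho$ is itself $G$-invariant, i.e. $U(g)\rho U(g)^\dagger=\rho$ (as holds in the relevant setting, e.g. $\rho^{\otimes n}$ under $G=\frS_n$).

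For the first inequality I would note that $\sigma_{\mathrm{av}}$ lies in the convex hull of the orbit, so that for every feasible $T$ (satisfying $0\le T\le\mathds{1}$ and $\Tr[(\mathds{1}-T)\rho]\le\epsilon$),
\begin{align}
\Tr[T\sigma_{\mathrm{av}}]=\int_G \Tr[T\,U(g)\sigma_0 U(g)^\dagger]\,\nu(dg)\le \max_{\sigma\in\mathcal{S}}\Tr[T\sigma].
\end{align}
Minimizing both sides over the same feasible set of $T$ yields $\beta_\epsilon(\rho\|\sigma_{\mathrm{av}})\le\beta_\epsilon(\rho\|\mathcal{S})$; this direction uses neither invariance of $\rho$ nor compactness beyond the mere existence of $\nu$.

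For the reverse inequality I would start from an optimal $T^\ast$ attaining $\beta_\epsilon(\rho\|\sigma_{\mathrm{av}})=\Tr[T^\ast\sigma_{\mathrm{av}}]$ with $\Tr[(\mathds{1}-T^\ast)\rho]\le\epsilon$, and replace it by its twirl $\bar T\coloneqq\mathcal{W}_G(T^\ast)$. The key computation is that $\bar T$ has a constant expectation over the whole orbit: for $\sigma=U(h)\sigma_0 U(h)^\dagger$, using $U(g)U(h)=U(gh)$, cyclicity of the trace, and the right-invariance of $\nu$,
\begin{align}
\Tr[\bar T\,\sigma]=\int_G \Tr[T^\ast\,U(gh)\sigma_0 U(gh)^\dagger]\,\nu(dg)=\Tr[T^\ast\sigma_{\mathrm{av}}],
\end{align}
independently of $h$, so that $\max_{\sigma\in\mathcal{S}}\Tr[\bar T\sigma]=\Tr[T^\ast\sigma_{\mathrm{av}}]=\beta_\epsilon(\rho\|\sigma_{\mathrm{av}})$. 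It then remains to verify that $\bar T$ is still feasible, and here the invariance of $\rho$ enters:
\begin{align}
\Tr[\bar T\rho]=\int_G \Tr[T^\ast\,U(g)\rho U(g)^\dagger]\,\nu(dg)=\Tr[T^\ast\rho]\ge 1-\epsilon.
\end{align}
Hence $\beta_\epsilon(\rho\|\mathcal{S})\le\max_{\sigma}\Tr[\bar T\sigma]=\beta_\epsilon(\rho\|\sigma_{\mathrm{av}})$, and $\bar T=\mathcal{W}_G(T^\ast)$ is exactly the $G$-invariant optimal POVM element claimed in the statement.

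The main obstacle is precisely this feasibility step: the twirl flattens the objective over the orbit (the content of the second display), but it simultaneously replaces $\rho$ by $\int_G U(g)\rho U(g)^\dagger\,\nu(dg)$ in the type I error, so the argument would break without the $G$-invariance of $\rho$, which I would therefore flag as the essential hypothesis. A secondary point to stress is that, because $\mathcal{S}$ is a non-convex orbit rather than a convex set, Lemma~\ref{L1} cannot be invoked directly; it is the orbit structure together with the group-averaging identity above that substitutes for convexity here and pins the worst case to the single average state $\sigma_{\mathrm{av}}$.
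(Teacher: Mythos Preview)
Your argument is correct and follows essentially the same twirling strategy as the paper: both proofs replace the test operator by its group average $\int_G U(g)^\dagger T U(g)\,\nu(dg)$, observe that this flattens the objective across the orbit to $\Tr[T\sigma_{\mathrm{av}}]$, and conclude that the composite test collapses to the simple one against $\sigma_{\mathrm{av}}$. You are in fact more careful than the paper in isolating the $G$-invariance of $\rho$ as the hypothesis needed to preserve the type~I constraint under twirling; the paper leaves this implicit (it holds in all of its examples and in the asymptotic application with $\rho^{\otimes n}$), but without it the equality can genuinely fail, so your flag is well placed.
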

\begin{proof}

For any POVM element $T$, it holds that
\begin{align}
U(g')\qty(\int_G  U(g)^\dagger T U(g) \nu(dg) )U(g')^\dagger
=\int_G  U(g)^\dagger T U(g) \nu(dg).
\end{align}
Thus, given any element $\sigma \in {\cal S}$, we have
\begin{align}
\Tr[
\qty(\int_G  U(g)^\dagger T U(g) \nu(dg) )
\sigma
]
&= \Tr[ \qty(\int_G  U(g)^\dagger T U(g) \nu(dg) ) U(g')^\dagger\sigma U(g')] \notag\\
&= \Tr [
\qty(\int_G  U(g)^\dagger T U(g) \nu(dg) )
\qty(\int_G U(g')^\dagger\sigma U(g') \nu(dg')) ]\notag\\
&= \Tr[ 
\qty(\int_G  U(g)^\dagger T U(g) \nu(dg) )\sigma_\mathrm{av}].
\Label{NM81}
\end{align}

Due to~\eqref{eq:sigma_group}, we have
\begin{align}
& \Tr [
\qty(\int_G  U(g)^\dagger T U(g) \nu(dg) )
\sigma]
=
\max_{\sigma' \in {\cal S}} \Tr [
\qty(\int_G  U(g)^\dagger T U(g) \nu(dg) )
\sigma' ].\Label{NM82}
\end{align}
Therefore, we have
\begin{align}
\max_{\sigma \in {\cal S}}\Tr[T \sigma]
&=\max_{g \in G}\Tr[T U(g) \sigma_0 U(g)^\dagger]\notag\\ 
&\ge 
\int_G 
\Tr[T U(g) \sigma_0 U(g)^\dagger] \nu(d g) \notag\\
&=
\Tr[\qty(\int_G  U(g)^\dagger  T U(g) \nu(d g)) \sigma_0] \notag\\
&=
\max_{\sigma \in {\cal S}} \Tr[ 
\qty(\int_G  U(g)^\dagger T U(g) \nu(dg) )
\sigma].
\label{eq:max_T_average}
\end{align}
Also, for any $\sigma$, we have 
\begin{align}
\min_{T}\Tr [T\sigma ]
\leq\min_{T}\Tr[\qty(\int_G U(g) T U(g)^\dagger
\nu(d g))\sigma ].
\label{eq:max_T_average2}
\end{align}
Due to~\eqref{eq:max_T_average} and~\eqref{eq:max_T_average2}, the optimal POVM element $T$ can be restricted to 
$T=\int_G U(g) T U(g)^\dagger\nu(d g)$ in the hypothesis testing of $\beta_\epsilon(\rho,{\cal S}) $.

Since the combination of \eqref{NM81} and \eqref{NM82} implies
\begin{align}
\max_{\sigma \in {\cal S}} \Tr [
\qty(\int_G  U(g)^\dagger T U(g) \nu(dg) )
\sigma ]= \Tr [
\qty(\int_G  U(g)^\dagger T U(g) \nu(dg) )\sigma_\mathrm{av}],
\end{align}
we obtain \eqref{ZVI}.
\end{proof}

\begin{example}\Label{Ex1}
We consider the two-dimensional system spanned by $\{|0\rangle, |1\rangle\}$.
We choose $G$ to be $\mathbb{Z}_2=\{0,1\}$ and consider the representation
$U$ as
\begin{align}
U(0)\coloneqq\mathds{1},\quad U(1)\coloneqq|0\rangle \langle 0|- |1\rangle \langle 1|.
\end{align}
With $\mu \in [0,1/2]$,
we set $\rho$ and $\sigma_0$ as
\begin{align}
\rho\coloneqq \frac{\mathds{1}}{2},\quad \sigma_0\coloneqq\sigma[\mu],
\end{align}
where $\sigma[\mu]$ is defined as
\begin{align}
\sigma[\mu]\coloneqq \frac{1}{2}
\left(
\begin{array}{cc}
1 & 2\mu -1 \\
2\mu -1 & 1
\end{array}
\right).
\end{align}
Hence, $U(1) \sigma_0 U(1)^\dagger= \sigma[1-\mu]$.

Then, for $\epsilon \in [0,1/2]$, 
we can calculate 
\begin{align}
\max_{\sigma \in {\cal S}} \beta_\epsilon(\rho\|\sigma)
=(1-2\epsilon)(1-\mu)+ \mu
=1-2(1-\mu)\epsilon.
\end{align}
Since $\sigma_\mathrm{av}=\mathds{1}/2$, we have
\begin{align}
 \beta_\epsilon(\rho\|{\cal S})=\beta_\epsilon\left(\frac{\mathds{1}}{2}\middle\|\frac{\mathds{1}}{2}\right)
=1-\epsilon
> \max_{\sigma \in {\cal S}} \beta_\epsilon(\rho\|\sigma).
\end{align}
\end{example}

\begin{example}\Label{Ex2}
For the two dimensional system spanned by $\{|0\rangle, |1\rangle\}$,
we consider another example.
Let $\rho$ be a pure state $|0\rangle \langle 0|$,
and ${\cal S}$ be the set $\{ \sigma_\theta \}_{\theta \in [0,2\pi)}$,
where 
\begin{align}
\sigma_\theta&\coloneqq U(\theta)|\phi_{p,+} \rangle \langle \psi_{p,+}| U(\theta)^\dagger,\quad
|\phi_{p,\pm} \rangle\coloneqq \sqrt{p}|0\rangle \pm\sqrt{1-p}|1\rangle, \\
U(\theta)&\coloneqq|0\rangle \langle 0|+ e^{i\theta}|1\rangle \langle 1|.
\end{align}
This case satisfies the condition of Lemma \ref{LL7}.

When $\epsilon \ge p$,
the POVM $\{T,\mathds{1}-T\}$ for $T=\mathds{1}-\sigma_\theta$ satisfies the condition $\Tr[(\mathds{1}-T)\rho]\le \epsilon$ and is optimal.
Then, we have
\begin{align}
\beta_\epsilon\left(\rho\middle\| \sigma_\theta\right)=0.
\end{align}
That is, 
\begin{align}
\max_{\sigma \in {\cal S}}\beta_\epsilon(\rho\| \sigma)=0.\Label{F1}
\end{align}

When $\epsilon < p$,
$T=\mathds{1}-
U(\theta)|\phi_{\epsilon,+} \rangle\langle \phi_{\epsilon,+}|
U(\theta)^\dagger
=U(\theta)|\phi_{1-\epsilon,-} \rangle\langle \phi_{1-\epsilon,-}|
$ 
satisfies the condition $\Tr[(\mathds{1}-T)\rho]\le \epsilon$ and is optimal.
Then, we have
\begin{align}
\beta_\epsilon(\rho\| \sigma_\theta)=
\qty(\sqrt{(1-\epsilon)p}-\sqrt{\epsilon (1-p)})^2.
\end{align}
That is, 
\begin{align}
\max_{\sigma \in {\cal S}}\beta_\epsilon(\rho\| \sigma)=
\qty(\sqrt{(1-\epsilon)p}-\sqrt{\epsilon (1-p)})^2.\Label{F2}
\end{align}

Next, we consider $\beta_\epsilon (\rho\|{\cal S})$.
Due to Lemma \ref{LL7},
$T$ can be restricted to 
$\frac{1}{2\pi} \int_0^{2\pi} U(\theta) T U(\theta)^\dagger
d\theta$, which has the form 
$T(a_0,a_1):=a_0|0\rangle \langle 0| +a_1|1\rangle \langle 1|$.
Since $T(a_0,a_1)=U(\theta) T(a_0,a_1)U(\theta)^\dagger$
and 
$\frac{1}{2\pi}\int_0^{2\pi} \sigma_\theta d\theta
=p|0\rangle \langle 0| +(1-p)|1\rangle \langle 1|$,
we see that $T=(1-\epsilon)|0\rangle \langle 0|$ satisfies the condition $\Tr[ (\mathds{1}-T)\rho]\le \epsilon$,
and is optimal.
Thus, we have
\begin{align}
\beta_\epsilon \left(\rho\middle\|p|0\rangle \langle 0| +(1-p)|1\rangle \langle 1|\right)
=(1-\epsilon)p.\Label{F3}
\end{align}
Since~\eqref{F3} is strictly larger than~\eqref{F1} and~\eqref{F2},
the relation~\eqref{BH4} does not hold in this case.
\end{example}

\subsubsection{Asymptotic case}
\label{se:asymptotic}

In Ref.~\cite{hiai2009quantum}, the case of 
${\cal S}_n=\qty{\sigma^{\otimes n}}_{\sigma \in {\cal S}}$
over ${\cal H}^{\otimes n}$ is considered.
In this case, we have the representation $\tilde{U}^n$
of $G$ as
\begin{align}
\tilde{U}^n(g)\coloneqq \underbrace{U(g) \otimes \cdots \otimes U(g)}_{n}.
\end{align}
Using $\sigma \in {\cal S}_n$, 
we define 
\begin{align}
\sigma_{0}^n:=\int_G \tilde{U}^n(g) \sigma \tilde{U}^n(g)^\dagger \nu(dg).
\end{align}
Applying Lemma \ref{LL7}, we have
\begin{align}
\beta_\epsilon\left(\rho^{\otimes n}\middle\|{\cal S}\right)
=\beta_\epsilon\left(\rho^{\otimes n}\middle\|\sigma_\mathrm{av}^n\right).\Label{ZNI}
\end{align}
In this case, we have the following.

\begin{proposition}[\protect{\cite[Theorem 4.4]{hiai2009quantum}}]
\Label{P11}
The limit $\lim_{n\to \infty}\frac{1}{n}D(\rho^{\otimes n}
\|\sigma_\mathrm{av}^n)$ exists. We have
\begin{align}
\lim_{n\to \infty}-\frac{1}{n}\log \beta_\epsilon \left(\rho^{\otimes n}\middle\|\sigma_\mathrm{av}^n\right)
=\lim_{n\to \infty}\frac{1}{n}D\left(\rho^{\otimes n}\middle\|\sigma_\mathrm{av}^n\right).
\end{align}
\end{proposition}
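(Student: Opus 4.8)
The plan is to prove the two matching bounds for the Stein exponent against the single, non-IID state $\sigma_\mathrm{av}^n$ (to which the problem has already been reduced through Lemma~\ref{LL7} and~\eqref{ZNI}), simultaneously establishing that the mean rate $D^\infty\coloneqq\lim_{n\to\infty}\frac1n D(\rho^{\otimes n}\|\sigma_\mathrm{av}^n)$ exists and equals it. Everything rests on the double symmetry of $\sigma_\mathrm{av}^n$: it is $\frS_n$-invariant (being an average of the permutation-invariant $\sigma^{\otimes n}$) and $G$-invariant under the diagonal action $\tilde U^n(g)$. Hence, in the $G$-isotypic decomposition $\cH^{\otimes n}=\bigoplus_\pi\mathcal M_\pi\otimes\mathcal N_\pi$, Schur's lemma forces $\sigma_\mathrm{av}^n=\bigoplus_\pi (\mathds{1}_{\mathcal M_\pi}/d_\pi)\otimes C_\pi^{(n)}$, and for $G\subseteq\SU(d)$ compact the number of irreps $\pi$ occurring in $U^{\otimes n}$ and the dimensions $d_\pi$ grow only polynomially in $n$, exactly as in the bounds~\eqref{NM1} and~\eqref{BVT}.

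\emph{Pinching inequalities.} First I would establish a group-twirl analogue of the pinching inequality. Writing $\mathcal W_G$ for the Haar twirl, I decompose it as the isotypic block-diagonal pinching (polynomially many blocks) followed by complete depolarization of each factor $\mathcal M_\pi$; since each piece is an average over finitely many unitaries that includes the identity, the elementary bound $Y\le K\,\mathcal T(Y)$ for a $K$-element unitary twirl $\mathcal T$ yields $\sigma^{\otimes n}\le p(n)\,\sigma_\mathrm{av}^n$ with $p(n)$ polynomial. Applying the same bound to the twirl over $G$ acting only on the first $n$ of $n+m$ tensor factors, which after the substitution $h\mapsto hg$ maps $\sigma_\mathrm{av}^{n+m}$ to $\sigma_\mathrm{av}^n\otimes\sigma_\mathrm{av}^m$, gives $\sigma_\mathrm{av}^{n+m}\le p(n)\,\sigma_\mathrm{av}^n\otimes\sigma_\mathrm{av}^m$.

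\emph{Existence of the limit.} By operator monotonicity of $\log$, the first inequality gives $a_n\coloneqq D(\rho^{\otimes n}\|\sigma_\mathrm{av}^n)\le n D(\rho\|\sigma)+\log p(n)$, and the second gives the approximate superadditivity $a_{n+m}\ge a_n+a_m-\log p(n)$; together with the monotonicity $a_n\le a_{n+1}$ coming from data processing under the partial trace, the almost-superadditive form of Fekete's lemma~\cite[Lemma~A.1]{hayashi2016quantum} shows that $D^\infty=\lim_n a_n/n$ exists, which proves the first claim of the proposition.

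\emph{The two Stein bounds and the main obstacle.} For the equality I would pinch $\rho^{\otimes n}$ by the (polynomially many) spectral projections of $\sigma_\mathrm{av}^n$, so that $D(\rho^{\otimes n}\|\mathcal E_n(\rho^{\otimes n}))=o(n)$ and $\frac1n D(\mathcal E_n(\rho^{\otimes n})\|\sigma_\mathrm{av}^n)\to D^\infty$ as in~\eqref{ZX6}, while now $\mathcal E_n(\rho^{\otimes n})$ and $\sigma_\mathrm{av}^n$ commute. The problem then becomes an information-spectrum one for a commuting pair, to which Lemmas~\ref{L9A} and~\ref{L9} apply, pinning the Stein exponent between the spectral inf- and sup-relative-entropy rates of the log-likelihood ratio; the strong-converse side can alternatively be obtained from~\eqref{eq:beta_upper_renyi} by sending $n\to\infty$ at fixed $\alpha>1$ and then $\alpha\to1^+$. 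The hard part, which is the true content of the proposition, is to show that both spectral rates collapse to the mean rate $D^\infty$ — equivalently, that the regularized R\'enyi divergence $\lim_n\frac1n\widetilde{D}_\alpha(\rho^{\otimes n}\|\sigma_\mathrm{av}^n)$ is continuous at $\alpha=1$, or that a law of large numbers holds for the log-likelihood ratio. This concentration fails for generic non-IID alternatives; here it is forced by the block structure, on whose polynomially many $G\times\frS_n$-sectors $\sigma_\mathrm{av}^n$ is flat, through a method-of-types / large-deviation estimate that suppresses the fluctuations. This is the step executed in Ref.~\cite[Theorem~4.4]{hiai2009quantum}, whose analysis I would follow.
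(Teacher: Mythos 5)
The paper does not prove Proposition~\ref{P11} at all: it is imported verbatim from Theorem~4.4 of Ref.~\cite{hiai2009quantum} and used as a black box in Section~\ref{sec:counterexample}, so there is no internal proof to compare yours against. Judged on its own terms, your scaffolding is sound and consistent with the toolkit the paper deploys elsewhere: the operator inequalities $\sigma^{\otimes n}\le p(n)\,\sigma_\mathrm{av}^n$ and $\sigma_\mathrm{av}^{n+m}\le p(n)\,\sigma_\mathrm{av}^n\otimes\sigma_\mathrm{av}^m$ do follow from the isotypic decomposition together with the finite-twirl pinching bound (with $K=d_\pi^2$ for the depolarization of each irrep factor, still polynomial for compact $G$ acting diagonally), the change of variables $h\mapsto hg$ is correct, the almost-superadditive Fekete argument gives existence of the limit in $[0,\infty]$ in the same spirit as the proof of Lemma~\ref{L4}, and the reduction to a commuting pair via spectral pinching of the doubly invariant $\sigma_\mathrm{av}^n$ mirrors~\eqref{E59}--\eqref{ZX6}. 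However, you correctly identify --- and then do not supply --- the decisive step: showing that the spectral inf- and sup-divergence rates of $\rho^{\otimes n}$ against $\sigma_\mathrm{av}^n$ both collapse onto the mean rate $D^\infty$ (equivalently, continuity of the regularized R\'enyi divergence at $\alpha=1$, or a law of large numbers for the pinched log-likelihood ratio). Deferring that step to Ref.~\cite[Theorem~4.4]{hiai2009quantum} is in effect what the paper itself does, so your text is best read as an accurate structural gloss on the cited proof rather than an independent one: everything you establish from scratch is the ``easy'' sandwich between the two spectral rates, while the inequality that makes the two sides meet is precisely the content of the theorem being cited.
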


Combining~\eqref{ZNI} and Proposition~\ref{P11}, we have
\begin{align}
\lim_{n\to \infty}-\frac{1}{n}\log \beta_\epsilon \left(\rho^{\otimes n}\middle\|{\cal S}_n\right)=\lim_{n\to \infty}\frac{1}{n}D\left(\rho^{\otimes n} \middle\|\sigma_\mathrm{av}^n\right).
\end{align}
In contrast, for an optimal state $\sigma^\prime$ in the minimization of $\min_{\sigma\in {\cal S}} D(\rho\|\sigma)$, it holds that
\begin{align}
\lim_{n\to \infty}-\frac{1}{n}\log 
\max_{\sigma_n \in {\cal S}_n}
\beta_\epsilon \left(\rho^{\otimes n}\middle\|\sigma_n\right)
&\leq
\lim_{n\to \infty}
-\frac{1}{n}\log 
\beta_\epsilon \left(\rho^{\otimes n}\middle\|{\sigma}^{\prime \otimes n}\right)
=D\left(\rho\middle\|\sigma'\right)
=\min_{\sigma\in {\cal S}} D(\rho\|\sigma).
\end{align}
Since~\eqref{BH4Y} implies
\begin{align}
\lim_{n\to \infty}-\frac{1}{n}\log \beta_\epsilon \left(\rho^{\otimes n}\middle\|\mathcal{S}_n\right)
\le \lim_{n\to \infty}-\frac{1}{n}\log 
\max_{\sigma_n \in {\cal S}_n}
\beta_\epsilon \left(\rho^{\otimes n}\middle\|\sigma_n\right),  
\end{align}
we have
\begin{align}
\lim_{n\to \infty}\frac{1}{n}D\left(\rho^{\otimes n} \middle\|\sigma_\mathrm{av}^n\right)
\le \min_{\sigma\in {\cal S}} D(\rho\|\sigma).
\Label{ES2}
\end{align}
The problem of whether~\eqref{XBI} holds
is reduced to the problem of whether the equality in~\eqref{ES2} holds.

\begin{example}\Label{Ex3}
We consider the same example as Example~\ref{Ex1}.
In Ref.~\cite[Example 6.1]{hiai2009quantum} with $\lambda =1/2$,
it is shown that
\begin{align}
&\lim_{n\to \infty}\frac{1}{n}D\left(\rho^{\otimes n} \middle\|\sigma_\mathrm{av}^n\right)= \min D\left( \frac{\mathds{1}}{2}\middle\|\sigma[\mu]\right),\\
&D\left( \frac{\mathds{1}}{2} \middle\|\sigma[1-\mu]\right) 
=\min_{\sigma \in {\cal S}}D( \rho \|\sigma )
 =D\left( \frac{\mathds{1}}{2} \middle\|\sigma[\lambda]\right).
\end{align}
In this example, the equality in~\eqref{ES2} holds.
\end{example}

\begin{example}\Label{Ex4}
We consider the same example as Example~\ref{Ex3}.
The example here coincides with Ref.~\cite[Example 6.5]{hiai2009quantum} with $\lambda=1$ and $\mu=p$.
Then, it is shown that
\begin{align}
    \lim_{n\to \infty}\frac{1}{n}D\left(\rho^{\otimes n} \middle\|\sigma_\mathrm{av}^n\right)=-\log p.
\end{align}
On the other hand, we have
\begin{align}
\min_{\sigma \in {\cal S}}D( \rho \|\sigma ) =D\left( \rho \middle\|\sigma_0 \right)=\infty.
\end{align}
In this example, 
the equality in~\eqref{ES2} does not hold.
\end{example}

\bibliography{citation}

\end{document}